\newtheorem{theorem}{Theorem}
\newtheorem{lemma}{Lemma}
\newtheorem{claim}{Claim}
\newtheorem{remark}{Remark}
\newtheorem{corollary}{Corollary}
\newtheorem{definition}{Definition}
\newtheorem{example}{Example}
\newtheorem{conjecture}{Conjecture}
\newtheorem{observation}{Observation}
\newtheorem{question}{Question}
\def\rk{\mathrm{rk}_2}
\def\F{\mathbb{F}}
\def\Z{\mathbb{Z}}
\def\Q{\mathbb{Q}}
\title{A graph polynomial for independent sets of bipartite graphs}
\author{Qi Ge\thanks{ Department of Computer Science, University
of Rochester, Rochester, NY 14627.  Email: \{qge,stefanko\}@cs.rochester.edu.
Research supported, in part, by NSF grant CCF-0910584.} \and Daniel
\v{S}tefankovi\v{c}$^*$}
\begin{document}

\maketitle

\begin{abstract}
We introduce a new graph polynomial that encodes interesting
properties of graphs, for example, the number of matchings and
the number of perfect matchings. Most importantly, for
bipartite graphs the polynomial encodes the number of
independent sets (\#{\sc BIS}).

We analyze the complexity of exact evaluation of the polynomial at
rational points and show that for most points exact evaluation is
\#P-hard (assuming the generalized Riemann hypothesis) and
for the rest of the points exact evaluation is trivial.

We conjecture that a natural Markov chain can be used to
approximately evaluate the polynomial for a range of parameters.
The conjecture, if true, would imply an approximate counting algorithm for
\#{\sc BIS}, a problem shown, by~\cite{MR2044886}, to be complete (with
respect to, so called, AP-reductions) for a rich logically defined sub-class of \#P.
We give a mild support for our conjecture by proving that the Markov chain is
rapidly mixing on trees. As a by-product we show that the ``single bond flip''
Markov chain for the random cluster model is rapidly mixing
on constant tree-width graphs.
\end{abstract}

\section{Introduction}

Graph polynomials are a well-developed area useful for analyzing
properties of graphs (see, e.\,g., the following survey papers
~\cite{citeulike:3973700,citeulike:3973702}).
Arguably the most intriguing graph polynomial is the Tutte
polynomial~\cite{MR0018406,MR0061366}. The partition function of
the random cluster model from statistical mechanics provides
a particularly simple definition: for a graph $G=(V,E)$ let
\begin{equation}\label{eq:rc}
Z(G;q,\mu) = \sum_{S \subseteq E} q^{\kappa(S)}\mu^{|S|},
\end{equation}
where $\kappa(S)$ is the number of connected components of the
graph $(V,S)$. (It is well-known that the Tutte polynomial is
obtained from $Z$ by a simple transformation, see, e.\,g.,
equation~\eqref{eq:tutte_rc} in Section~\ref{sec:complexity}.)
The Tutte polynomial includes many graph
polynomials as special cases, e.\,g., the chromatic polynomial,
the flow polynomial, and the Potts model (see,
e.\,g.,~\cite{MR1245272}).

Now we define our graph polynomial.
\begin{definition}
The $R_2$-polynomial of a graph $G=(V,E)$ is
\begin{equation}\label{eq:poly}
R_2(G;q,\mu)=\sum_{S\subseteq E} q^{\rk(S)} \mu^{|S|},
\end{equation}
where $\rk(S)$ is the rank of the adjacency matrix of $(V,S)$
over $\F_2$ (the field with $2$ elements).
\end{definition}

Now we look at how $R_2(G;q,\mu)$ encodes some properties of graphs. For
$q = \mu^{-1/2}$ equation \eqref{eq:poly} becomes
\begin{equation*}
P(G;\mu):=R_2(G;\mu^{-1/2},\mu)=\sum_{S\subseteq E(G)}
\mu^{|S|-\rk(S)/2}.
\end{equation*}
We claim that $P(G;0)$ is the number of matchings in $G$.  To see this, note
that $\rk(S)\leq 2|S|$ (since adding an edge to $S$
changes two entries in the adjacency matrix and hence can change
rank by at most two), and $\rk(S)< 2|S|$ if $S$ is not a
matching (since rank of the adjacency matrix of a star is
$2<2|S|$, and adding further edges preserves the strict
inequality).

Let
\begin{equation*}
P(G;t,\mu):= t^{|V|} R_2(G;1/t,\mu)\quad\mbox{and} \quad P_2(G;\mu):= \mu^{-|V|/2} P(G;0,\mu).
\end{equation*}
Then $P_2(G;0)$ is the number of perfect matchings of $G$. To see this, note
that that only subsets with full rank adjacency matrix contribute to
$P(G;0,\mu)$, and then only the minimal cardinality subsets with full
rank adjacency matrix contribute to $P_2(G;0)$ (these subsets are exactly
the perfect matchings).

From now on we focus solely on bipartite graphs. For a {\em
bipartite} graph $G=(U\cup W,E)$ we let
\begin{equation}\label{eq:pwwoly}
R'_2(G;\lambda,\mu)=\sum_{S\subseteq E} \lambda^{\rk(S)} \mu^{|S|},
\end{equation}
where $\rk(S)$ is the rank of the {\em bipartite} adjacency matrix of
$(U\cup W,S)$. Note that
\begin{equation}\label{errp}
R_2(G;\lambda,\mu)=R'_2(G;\lambda^2,\mu),
\end{equation} since the
adjacency matrix contains ``two copies'' of the bipartite
adjacency matrix (one of them transposed). (The reason for
definition \eqref{eq:pwwoly} is that we prefer to operate with
bipartite adjacency matrix for bipartite graphs.)

In Section~\ref{S:BIS} we prove that $R'_2$ counts the number of independent sets in bipartite graphs.
\begin{theorem}\label{t:1}
Let $G=(U\cup W,E)$ be a bipartite graph. The number of independent sets of $G$ is
given by
$$
2^{|U|+|W|-|E|}R'_2(G;1/2,1).
$$
\end{theorem}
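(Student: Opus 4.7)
The plan is to use a standard ``$\mathbb{F}_2$-Fourier'' trick to replace $2^{-\rk(S)}$ by an average of signs over two groups of indicator vectors, swap the order of summation, and recognize the resulting constraint as exactly saying that the two indicator vectors form an independent set.

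The starting point is the identity
\begin{equation*}
\sum_{x\in\F_2^{U}}\sum_{y\in\F_2^{W}} (-1)^{x^{T}B_S y} \;=\; 2^{|U|+|W|-\rk(S)},
\end{equation*}
where $B_S$ denotes the bipartite adjacency matrix of $(U\cup W, S)$. This is immediate by summing over $x$ first (yielding $2^{|U|}$ when $B_S y=0$ and $0$ otherwise) and then counting $|\ker B_S|=2^{|W|-\rk(S)}$.

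Using this, I would rewrite
\begin{equation*}
R'_2(G;1/2,1) \;=\; \sum_{S\subseteq E} 2^{-\rk(S)}
\;=\; 2^{-(|U|+|W|)} \sum_{x,y} \sum_{S\subseteq E} (-1)^{x^{T}B_S y},
\end{equation*}
and then swap the order of the sums. Since $x^{T}B_S y = \sum_{\{u,w\}\in S} x_u y_w$ (over $\F_2$), the inner sum over $S\subseteq E$ factorizes:
\begin{equation*}
\sum_{S\subseteq E} (-1)^{x^{T}B_S y} \;=\; \prod_{\{u,w\}\in E} \bigl(1 + (-1)^{x_u y_w}\bigr).
\end{equation*}
Each factor equals $2$ when $x_u y_w = 0$ and $0$ otherwise, so the product is $2^{|E|}$ if $x_u y_w = 0$ for every edge $\{u,w\}\in E$, and vanishes otherwise. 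Interpreting $x\in\{0,1\}^{U}$ and $y\in\{0,1\}^{W}$ as indicators of subsets $I_U\subseteq U$ and $I_W\subseteq W$, the surviving condition $x_u y_w=0$ for all edges is exactly the statement that $I_U\cup I_W$ is independent in $G$.

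Putting the pieces together gives
\begin{equation*}
R'_2(G;1/2,1) \;=\; 2^{-(|U|+|W|)} \cdot 2^{|E|} \cdot \#\mathrm{IS}(G),
\end{equation*}
which rearranges to the claim. There is no genuine obstacle: the only idea is the Fourier identity over $\F_2$, and the mild care needed is to notice that $1+(-1)^{x_u y_w}$ is a $0/2$ indicator that, after taking the product, enforces precisely the edge-by-edge independent set condition. The only bookkeeping to double-check is that the powers of $2$ balance so that the $2^{|E|}$ from the product combines with the $2^{-(|U|+|W|)}$ prefactor to give the prescribed $2^{|U|+|W|-|E|}$.
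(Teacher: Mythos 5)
Your proof is correct. The Fourier identity, the swap of summation, and the factorization of the inner sum over $S$ into the product $\prod_{\{u,w\}\in E}(1+(-1)^{x_u y_w})$ are all valid, and the powers of $2$ do balance as claimed. However, it is worth noting that this is not the route the paper takes for Theorem~\ref{t:1} itself: there the authors double-count the set $Q=\{(u,B): B\leq A,\ u^{\rm T}B\equiv 0 \bmod 2\}$, computing $|Q|$ once by marginalizing over $u$ (which produces $2^{n_1}R'_2(G;1/2,1)$ via the size of the left null space) and once by marginalizing over $B$ (which produces $2^{\#_1(A)-n_2}\#\mathrm{BIS}(G)$ via a column-by-column count). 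That bijective setup is chosen because it simultaneously yields Lemma~\ref{l1}, the two-way sampling equivalence between {\sc BIS} and {\sc RWM}$(1/2,1)$, which your sign-sum computation does not directly give. On the other hand, your argument is essentially the paper's proof of the more general Theorem~\ref{thm:pbis} (the high-temperature expansion) specialized to $\eta=-1$ and run in the opposite direction: the paper starts from the edge-product form of $\#\mathrm{PBIS}$ and expands to reach $2^{|U|+|W|-\rk(S)}$, whereas you start from the rank and expand to reach the independence constraint. So your approach buys immediate generality to $\mu\neq 1$ (replace $1+(-1)^{x_uy_w}$ by $1-\eta(-1)^{x_uy_w}$ to recover Theorem~\ref{thm:pbis}), while the paper's approach to Theorem~\ref{t:1} buys the distributional statement needed for the sampling reduction.
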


We show in Section~\ref{sec:complexity} that exact evaluation of
the polynomial $R'_2(G;\lambda,\mu)$ is \#P-hard at a variety of
rational points $(\lambda,\mu)$ assuming the validity of the
generalized Riemann hypothesis (GRH).
\begin{theorem}\label{thm:comlexity}
Exact evaluation of $R'_2$ at rational point $(\lambda,\mu)$ is
\begin{itemize}
\item polynomial-time computable when $\lambda \in \{0,1\}$ or
$\mu=0$ or $(\lambda,\mu)=(1/2,-1)$;
\item \#P-hard when $\lambda
\not\in \{0,1,1/2\}$ and $\mu \neq 0$, assuming GRH;
\item
\#P-hard when $\lambda=1/2$ and $\mu \not\in \{0,-1\}$.
\end{itemize}
\end{theorem}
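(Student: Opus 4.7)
The three polynomial-time cases admit short direct verifications. If $\lambda=0$, every non-empty $S\subseteq E$ of a bipartite graph has $\rk(S)\geq 1$, so only $S=\emptyset$ survives and $R'_2(G;0,\mu)=1$; if $\mu=0$, the factor $\mu^{|S|}$ forces the same conclusion; if $\lambda=1$ the sum collapses to $(1+\mu)^{|E|}$. The one non-trivial easy point is $(1/2,-1)$. I would handle it by writing $(1/2)^{\rk(S)}=2^{-|W|}\,\bigl|\{y\in\F_2^W:M_S y=0\}\bigr|$, swapping the order of summation, and noting that the inner sum over $S$ then factors across $u\in U$. A short calculation shows the $u$-factor vanishes unless every neighbour $w$ of $u$ has $y_w=1$, so only the all-ones $y$ (up to isolated vertices of $W$) contributes, yielding a closed form of the shape $2^{|E|-|U'|-|W'|}$, where $U'$ and $W'$ are the non-isolated parts.

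For the \#P-hardness I plan to use the standard interpolation template for Tutte-like polynomials, adapted to the $\F_2$-rank setting. The main gadget is $k$-thickening $G^{(k)}$ (each edge of $G$ replaced by $k$ parallel copies). Because parallel copies cancel modulo $2$, the bipartite adjacency matrix of any $S\subseteq E(G^{(k)})$ coincides with that of its parity projection $S'\subseteq E(G)$; grouping terms by $S'$ yields an identity of the form
\[
R'_2(G^{(k)};\lambda,\mu)=A_k^{|E|}\,R'_2\!\left(G;\lambda,\tfrac{B_k}{A_k}\right),\qquad
A_k=\tfrac{(1+\mu)^k+(1-\mu)^k}{2},\ B_k=\tfrac{(1+\mu)^k-(1-\mu)^k}{2}.
\]
For $\mu\notin\{0,\pm 1\}$ the ratios $B_k/A_k$ take infinitely many distinct values as $k$ varies, so a polynomial-time oracle at a single point $(\lambda_0,\mu_0)$ reconstructs the full univariate polynomial $\mu\mapsto R'_2(G;\lambda_0,\mu)$ via Lagrange interpolation. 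An analogous stretching gadget (each edge replaced by an odd-length path) should let me move in the $\lambda$ direction, giving an oracle on a two-dimensional family.

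With this family in hand, I would reduce to a known \#P-hard problem. The natural target is counting perfect matchings in a bipartite graph: the excerpt already identifies it as the coefficient of $\mu^{|V|/2}$ in $P_2(G;\mu)$, which by~\eqref{errp} is accessible from $R'_2$. For the $\lambda=1/2$ line, where the value at $\mu=1$ is only \#{\sc BIS}, I would extract a \emph{different} coefficient of $\mu\mapsto R'_2(G;1/2,\mu)$ (again via the perfect-matching connection) to obtain hardness. GRH plays the role it does in the Jaeger--Vertigan--Welsh framework: it guarantees the existence of primes with prescribed Frobenius behaviour, needed to certify that certain algebraic evaluations produced during the interpolation really fall outside the easy locus.

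I expect the principal obstacle to be the interaction of $\rk$ with these graph operations: in contrast with the ordinary Tutte polynomial, $\F_2$-rank is not multiplicative under thickening or stretching and admits delicate parity cancellations, so I would need to perform the rank computations carefully and verify that the family of accessible evaluation points really is two-dimensional and avoids the easy locus. The separation between the second and third bullets of the theorem likely reflects genuine degeneracies of this family along the line $\lambda=1/2$, forcing a distinct, more combinatorial reduction on that line.
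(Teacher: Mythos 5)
Your treatment of the easy points is fine: the $\lambda\in\{0,1\}$ and $\mu=0$ cases are immediate, and your null-space/summation-swap computation at $(1/2,-1)$ is essentially the paper's high-temperature expansion (Theorem~\ref{thm:pbis}) specialized to $\eta=1$, correctly yielding $2^{|E|-|V|+t}$. Your thickening identity $R'_2(G^{(k)};\lambda,\mu)=A_k^{|E|}R'_2(G;\lambda,B_k/A_k)$ is also correct (parallel edges do cancel in the $\F_2$ adjacency matrix), and combined with Theorem~\ref{t:1} it actually gives a clean proof of the \emph{third} bullet for $\mu_0\notin\{0,\pm 1\}$: interpolate in $\mu$ along $\lambda=1/2$, evaluate at $\mu=1$, and invoke \#P-completeness of exact \#{\sc BIS} (note that \#{\sc BIS} being ``only'' \#{\sc BIS} is not a defect here --- exact \#{\sc BIS} is \#P-complete, which is all the third bullet asserts). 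This is a genuinely different and arguably simpler route than the paper's reduction through \#{\sc PBIS}$(\eta)$ with cloud gadgets and the Chinese Remainder Theorem.

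The second bullet, however, has a real gap. Your only mechanism for leaving the line $\lambda=\lambda_0$ is the proposed ``stretching gadget,'' and this will not work: series and parallel compositions in every Tutte-like setting move the edge-weight parameter ($\mu$, or $y$) along the hyperbola while \emph{fixing} the component/rank parameter ($q$, or here $\lambda$), so you never obtain a two-dimensional family of accessible points, and consequently you cannot extract the leading coefficient in $\lambda$ needed for your perfect-matching target. Moreover, thickening itself degenerates at $\mu=\pm 1$ (there $B_k/A_k$ is constant in $k$), so even interpolation in $\mu$ fails at points the theorem covers. A symptom that interpolation cannot be the whole story is the GRH hypothesis: in the paper GRH is not used to ``certify evaluations fall outside the easy locus'' but to guarantee (via Lemma~\ref{lem:density}) a positive density of primes $p$ for which a gadget equation $q^k\equiv r\bmod p$ is solvable. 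The paper's actual argument fixes $\lambda$ and reduces the Tutte polynomial on the single hyperbola $(x-1)(y-1)=1/\lambda-1$, $y-1=\mu^2$, to $R'_2(\lambda,\mu)$: it $2$-stretches $H$, attaches a gadget $\Upsilon$ to each original vertex, shows via Lemma~\ref{lem:zr} that $\rk(S)=|U|-\kappa'(S)$ for such graphs, and proves that modulo suitable primes the gadget forces mixed components to vanish, giving $R'_2(G;\lambda,\mu)\equiv\lambda^{|V_H||U|}X^{|V_H|}Z(H;1/\lambda-1,\mu^2)\bmod p$; Chinese remaindering over polynomially many such primes recovers the Tutte value, and Theorem~\ref{ttty} supplies hardness of that hyperbola (this is also exactly why $\lambda=1/2$, i.e.\ $(x-1)(y-1)=1$, must be excluded and handled separately). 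Without some substitute for this mod-$p$ gadget machinery, your plan cannot establish the second bullet.
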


\begin{remark}
For the non-bipartite case we have the following classification.
Exact evaluation of $R_2$ at rational point $(\lambda,\mu)$ is
polynomial-time computable when $\mu=0$ or $\lambda\in\{-1,0,1\}$;
the $\lambda=-1$ case follows from the fact that a skew-symmetric
matrix with zero diagonal has even rank over any field (the
zero diagonal condition is redundant for fields of
characteristic $\neq 2$). For any other rational $\lambda$ and $\mu$
we get \#P-hardness of evaluating the $R_2$ polynomial from
Theorem~\ref{thm:comlexity} and~\eqref{errp} (again assuming GRH).
(Note that $(\lambda,\mu)\mapsto(\lambda^2,\mu)$ never maps
to the easy case $(1/2,-1)$, since $\lambda$ is rational.
It would be nice to have hardness classification of evaluating $R_2$
and $R_2'$ for, say, algebraic $\lambda$ and $\mu$.)
\end{remark}

Because of the hardness of exact evaluation of $R'_2$, we turn to
approximate evaluation of $R'_2(G;\lambda,\mu)$.

\begin{remark}
There is a fully polynomial randomized approximation scheme (FPRAS) for $R'_2(G;\lambda,\mu)$ when
$\lambda=1/2$ and $-1 < \mu < 0$. This follows from Theorem~\ref{thm:pbis} (stated in Section~\ref{sec:complexity})
and the fact that there is an FPRAS for \#{\sc PBIS}$(\eta)$ (defined in Section~\ref{sec:complexity})
when $0 < \eta < 1$ (see~\cite{MR1995687}, and note that \#{\sc PBIS}$(\eta)$ corresponds to $\beta=1$, $\gamma=(1+\eta)/(1-\eta)$ and $\mu=1$ in their parametrization).
\end{remark}

We now define the sampling problem associated with $R'_2$.

\vskip 0.2cm
\noindent\textsc{Rank Weighed Subgraphs} with $\lambda,\mu\geq 0$, ({\sc RWS}$(\lambda,\mu)$)

Instance: a bipartite graph $G=(U\cup W,E)$,

Output: $S\subseteq E$ with probability of $S\propto \lambda^{\rk(S)}\mu^{|S|}$.

\vskip 0.2cm

The ``single bond flip'' chain is a natural approach to sampling
from {\sc RWS}$(\lambda,\mu)$.
\begin{definition}\label{d1}
{\em Single bond flip} chain is defined as follows: pick an edge
$e\in E$ at random and let $S=X_t\oplus \{e\}$. Set $X_{t+1}=S$
with probability
$(1/2)\min\{1,\lambda^{\rk(S)-\rk(X_t)}\mu^{|S|-|X_t|}\}$
and $X_{t+1}=X_t$ with the remaining probability.
\end{definition}

In each step of the single bond flip chain, we have to compute the
rank of a matrix over $\F_2$ (corresponding to $S$) which differs
from the current matrix (corresponding to $X_t$) in a single
entry. One can use dynamic matrix rank problem algorithms
to perform this computation in $O(n^{1.575})$ arithmetic
operations per step~\cite{MR2305543}.

Instead of flipping one edge in a step, we can have another Markov
chain which flips a random subset of edges adjacent to a single
vertex. It seems likely that the new chain can generate good random samples
faster than the single bond flip chain---a step of the new chain can be performed in $O(n^2)$
arithmetic operations (using ``rank one update'' for the
dynamic matrix rank problem~\cite{MR2305543}).

We optimistically conjecture that the single bond flip chain (and
hence the other chain) mixes rapidly.
\begin{conjecture}\label{c1}
For any fixed $\lambda,\mu>0$ the single bond flip chain mixes in
polynomial time.
\end{conjecture}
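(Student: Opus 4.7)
The plan is to attack the conjecture via path coupling (Bubley--Dyer) with a weighted Hamming distance, imitating the strategy that succeeds for the random cluster model on nice graph classes. Given two configurations $X,Y\subseteq E$ differing in a single edge $e$, I couple the two copies of the chain by selecting the same candidate edge $f$ and accepting jointly via a maximal coupling of the Metropolis biases. The coupling does not increase distance on the move involving $f$ provided the rank increments $\rk(X\oplus\{f\})-\rk(X)$ and $\rk(Y\oplus\{f\})-\rk(Y)$ agree; a short linear-algebra argument using rank-one updates over $\F_2$ shows these increments lie in $\{-1,0,+1\}$ and encode whether the rows/columns of the bipartite adjacency matrix indexed by the endpoints of $f$ lie in the current row/column span.

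Concretely, the first step would be a local update lemma: the probability that $f$ is a \emph{bad} edge (one where the two coupled chains disagree on whether rank increases) is expressible in terms of the bipartite matroids of $X$ and $Y$ restricted to the neighborhoods of $e$. The second step would charge each bad $f$ to a specific $\F_2$-linear dependence involving $e$, and combine this with the Metropolis acceptance ratios $\min\{1,\lambda^{\pm 1}\mu^{\pm 1}\}$ to obtain an expected distance contraction strictly bounded below one, uniformly in $|E|$.

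The primary obstacle is the highly nonlocal behavior of $\F_2$-rank. Unlike connectivity for the random cluster model, where toggling $e$ affects the component structure only along cycles through $e$, toggling $e$ in the bipartite adjacency matrix can flip the span-membership of row/column vectors located anywhere in $G$. Consequently the set of bad edges is not a priori localizable, and naive counting gives contraction bounds that scale with $|E|$ rather than with a neighborhood of $e$. My plan would be to absorb this into a weighted metric in which each edge receives a weight reflecting its ``distance'' from the span of the current matrix (for instance, derived from a Schur-complement potential over $\F_2$), so that numerous but individually cheap bad edges still yield net contraction; designing such a potential and verifying its monotonicity under single-bond flips is the crux.

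If direct path coupling resists---which I genuinely expect, since any clean proof would resolve the long-standing approximability status of $\#\mathrm{BIS}$---the fallback is a decomposition strategy extending the paper's tree result. Using a balanced separator, one applies Markov chain decomposition (Madras--Randall), bounding the restriction chains on the two sides and the projection chain on the separator, and invokes the authors' tree bound (and the bounded-treewidth extension for the random-cluster analogue) on the pieces. The delicate point is that conditioning on the separator's state does not produce $R'_2$-type distributions on the pieces in a product form, because the $\F_2$-rank of the bipartite adjacency matrix does not split additively across separators; finding the correct substitute factorization---again plausibly through Schur complements over $\F_2$---is where the genuine difficulty of the conjecture becomes unavoidable.
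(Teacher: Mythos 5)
The statement you are addressing is a \emph{conjecture}, not a theorem: the paper does not prove it, and in fact Goldberg and Jerrum conjecture the opposite. What the paper actually establishes is only ``mild support,'' namely rapid mixing on trees (Theorem~\ref{thm:mix_mm}), and it does so by a method entirely different from yours: canonical paths ordered according to a small linear-width edge ordering, combined with the fact (Lemma~\ref{l2}) that for forests the $\F_2$-rank equals the size of a maximum matching, so that the key near-additivity estimate (Lemma~\ref{lem:dif_mm}) can be proved by purely combinatorial surgery on matchings across the set of ``dangerous'' vertices. Your proposal is a research plan, not a proof, and its central steps are left unestablished.

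Concretely, the gaps are these. First, your path-coupling strategy is modeled on ``the strategy that succeeds for the random cluster model on nice graph classes,'' but the single bond flip chain for the random cluster model is itself not known to mix by path coupling in general (the paper cites torpid mixing of this chain on $K_n$ for $q\geq 3$ at suitable $\mu$), so the template you are imitating does not exist at the needed level of generality. Second, you correctly identify the crux --- that toggling one edge can change span-membership of rows and columns anywhere in the graph, so the set of bad edges is not localizable --- but the ``Schur-complement potential over $\F_2$'' that is supposed to absorb this is never constructed, and no contraction bound is derived; the entire argument rests on an object whose existence is exactly the open problem. Third, the fallback via Madras--Randall decomposition founders on the point you yourself name: the $\F_2$-rank does not split additively across a separator, so the restriction chains are not of $R_2'$ type and the paper's tree/bounded-treewidth bounds do not apply to them. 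Since you acknowledge that a clean proof would resolve the approximability of $\#$\textsc{BIS}, you should present this as a plausible program with identified obstructions, not as a proof; as written, nothing in the proposal closes the gap between the paper's tree case and the general conjecture.
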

Note that Goldberg and Jerrum~\cite{GJ10} conjecture the opposite,
in particular for $\lambda=1/2$ and $\mu=1$. We make the
conjecture based on similarity of $R_2'$ polynomial to the random
cluster model and the fact that slow mixing of the single bond
flip chain was not established in random cluster model (slow
mixing is usually easier to establish than rapid mixing).

Conjecture~\ref{c1} could be false for small $\lambda$ and large
$\mu$ if the problem of maximizing $|S|-c\cdot{\rm rk}_2(S)$ ($c$ is a constant and $c>1$) over
$S\subseteq E$ is hard (the corresponding maximization problem is
easy for the random cluster model).

We can provide a mild support for Conjecture~\ref{c1}. In Section~\ref{sec:mix_mm} we prove that for fixed $\lambda,\mu > 0$ the single bond flip
chain mixes, in time polynomial in the number of vertices, for trees.
\begin{theorem}\label{thm:mix_mm}
For every fixed $\lambda,\mu > 0$, the mixing time $\tau(\varepsilon)$ of the single bond flip chain for a tree on $n$ vertices is
\begin{equation*}
\tau(\varepsilon)=O\left(n^{3+|\log_2\lambda|}(|\log\lambda|+|\log\mu|+\log(1/\varepsilon))\right).
\end{equation*}
\end{theorem}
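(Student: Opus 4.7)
The plan is to apply the Jerrum--Sinclair canonical paths framework. Two facts about trees are crucial to the analysis: (i) for every $S\subseteq E(T)$ one has $\rk(S)=\nu(S)$, the size of a maximum matching in $(V,S)$, which follows by inducting on a leaf of $(V,S)$ and using that the leaf's row in the $\F_2$ biadjacency matrix is a clean pivot; and (ii) adding or removing a single edge changes $\nu$ by $0$ or $1$, so the Metropolis acceptance probability is bounded below by a constant depending only on $\lambda$ and $\mu$.

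I would fix a centroid-based linear ordering $e_1,\ldots,e_{n-1}$ of the edges of $T$ (so that the prefixes correspond to balanced recursive cuts of depth $O(\log n)$) and, for each ordered pair $(S,T)\in 2^E\times 2^E$, define the canonical path $\gamma_{S,T}$ that scans the edges in this order and flips $e_i$ precisely when $e_i\in S\triangle T$; its length is at most $n-1$. For each transition $(X,X\oplus e_i)$ on $\gamma_{S,T}$, introduce the companion state $\eta=S\oplus T\oplus X$. A short case analysis on edges (in vs.\ out of $S\triangle T$, already processed vs.\ not yet processed) shows that $(X,e_i,\eta)$ recovers $(S,T)$ uniquely, so the map $(S,T)\mapsto\eta$ is injective for a fixed transition, and gives the identities $X\cup\eta=S\cup T$ and $X\cap\eta=S\cap T$, hence $|S|+|T|=|X|+|\eta|$. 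The $\mu$ factor in $\pi(S)\pi(T)/(\pi(X)\pi(\eta))$ then cancels, leaving
\[
\frac{\pi(S)\pi(T)}{\pi(X)\pi(\eta)}=\lambda^{\rk(S)+\rk(T)-\rk(X)-\rk(\eta)}.
\]

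The core step --- and what I expect to be the main obstacle --- is bounding this exponent by $O(\log n)$ in absolute value. The matching number is neither submodular nor supermodular on edge sets (as simple $P_4$, respectively $C_4$, examples show), so one cannot merely appeal to matroid submodularity. Instead one must exploit the centroid decomposition directly: at each of the $O(\log n)$ centroid levels, the edges on which $X$ and $\eta$ disagree with $S$ and $T$ lie inside a single subtree of size at most half of the parent's, and the $\F_2$-rank of a forest decomposes across any single cut edge up to an additive constant. A careful induction should therefore yield an additive $O(1)$ rank discrepancy per level, and hence $\pi(S)\pi(T)/(\pi(X)\pi(\eta))\leq n^{O(|\log_2\lambda|)}$ along every canonical path.

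Feeding this bound, together with path length $\leq n-1$, proposal probability $\geq 1/(2(n-1))$, and $\log(1/\pi_{\min})=O(n(|\log\lambda|+|\log\mu|))$, into the Jerrum--Sinclair congestion formula gives $\rho=O(n^{2+|\log_2\lambda|})$ and the claimed mixing time $\tau(\varepsilon)=O(n^{3+|\log_2\lambda|}(|\log\lambda|+|\log\mu|+\log(1/\varepsilon)))$.
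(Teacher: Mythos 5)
Your overall strategy coincides with the paper's: canonical paths that scan the edges of $S\triangle T$ in a fixed ``balanced'' order, the injective encoding $\eta=S\oplus T\oplus X$, exact cancellation of the $\mu$-exponent, and the reduction of everything to bounding $|\rk(S)+\rk(T)-\rk(X)-\rk(\eta)|$ by $O(\log n)$. The identification $\rk(S)=\nu(S)$ for forests is also exactly the paper's Lemma~\ref{l2}. However, the step you yourself flag as ``the core step'' is precisely where your argument stops being a proof. Saying that ``the $\F_2$-rank of a forest decomposes across any single cut edge up to an additive constant'' and that ``a careful induction should therefore yield an additive $O(1)$ rank discrepancy per level'' is not an argument: you never show how the discrepancies from the four quantities $\rk(S),\rk(T),\rk(X),\rk(\eta)$ interact across the cut, and a naive application of ``splitting at $d$ boundary vertices changes $\nu$ by at most $d$'' to each of the four terms separately only gives a bound of $4d$, which would prove a weaker exponent than the $n^{3+|\log_2\lambda|}$ claimed (harmless for ``polynomial mixing'' at fixed $\lambda$, but not the stated theorem). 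This missing lemma is the entire technical content of the result.

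The paper closes this gap with a concrete matching-exchange argument (Lemma~\ref{lem:dif_mm}). Let $Q$ be the processed prefix and $D$ the set of \emph{dangerous} vertices (incident to both a prefix and a suffix edge); the DFS ordering that explores smaller subtrees first guarantees $|D|\le\lfloor\log_2 n\rfloor$ for every prefix (Lemma~\ref{lem:tree_compact}). Taking maximum matchings $M_I,M_F$ of $I,F$ and recombining them as $M_H=(M_F\cap Q)\cup(M_I\cap Q^c)$ and $M_C=(M_I\cap Q)\cup(M_F\cap Q^c)$, one checks that only vertices of $D$ can have degree $2$ in $M_H$ or $M_C$, and that a vertex of degree $2$ in one has degree $0$ in the other; deleting at most $|D|$ edges in total therefore yields matchings of $H$ and $C$, giving $w(H)+w(C)\ge w(I)+w(F)-|D|$ in one shot, and the reverse inequality follows by swapping the roles of $(I,F)$ and $(H,C)$. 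You would also need to actually verify that your centroid-based ordering has the prefix-boundary property you assert; the paper's DFS construction and its proof are short and you could substitute them directly. With that lemma in hand, your congestion computation and the final application of the Jerrum--Sinclair bound are fine.
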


As a by-product of our techniques, we show that single bond flip Markov chain for the
random cluster model is rapidly mixing if $q,\mu > 0$ and $G$ has constant tree-width
(the condition $q,\mu>0$ is equivalent to $x,y>1$ for the Tutte polynomial
$T(G;x,y)$).

\section{Independent sets in bipartite graphs}\label{S:BIS}

The problem of counting independent sets (\#{\sc IS}) in a
graph is of interest in both computer science and statistical
physics (independent sets are a special case of the so-called
hard-core model, see, e.\,g.,~\cite{MR998375}). Exact computation
of \#{\sc IS} is \#P-complete even for $3$-regular planar bipartite
graphs~\cite{MR1861282,MR2354227}. Fully polynomial randomized
approximation scheme (FPRAS) is known for graphs with maximum
degree $\Delta\leq 5$, \cite{MR1716763,MR1747721,MR2277139}. Unless
RP$=$NP, an FPRAS does not exist for graphs with $\Delta\geq
25$,~\cite{MR1936657} .

Now we focus on the problem of counting independent sets in
bipartite graphs (\#{\sc BIS}). While for exact counting the
complexity of \#{\sc BIS} and \#{\sc IS} is the same, the
situation looks very different for approximate counting, for
example, no inapproximability result is known for \#{\sc BIS}.
Dyer et al.~\cite{MR2044886} show that \#{\sc BIS} is complete
w.r.t. approximation-preserving reductions (AP-reductions) in a
sub-class of \#P. Many problems were shown to be equivalent
(w.r.t. AP-reductions) to \#{\sc BIS}, for example, \#{\sc
Downsets}, \#{\sc 1p1nSat}~\cite{MR2044886}, computing the
partition function of a ferromagnetic Ising model with local
fields~\cite{MR2286511}, and counting the number of satisfying
assignments of a class of Boolean CSP instances~\cite{Dyer2009}. A
pertinent negative result for \#{\sc BIS} is that Glauber dynamics
(or more generally, any chain whose states are independent sets
and that flips at most $0.35n$ vertices in one step) cannot be
used to efficiently sample random independent sets in a random
$6$-regular bipartite graphs on $n+n$ vertices~\cite{MR1936657}.

The rest of this section is devoted to proving Theorem~\ref{t:1}.
It will be convenient to work with matrices instead of graphs. For
two zero-one matrices $A,B$ we say $B\leq A$ if $B$ corresponds to
a subgraph of $A$, formally

\begin{definition}
Let $A,B$ be zero-one $n_1 \times n_2$ matrices. We say $B\leq A$ if $A_{ij}=0$ implies $B_{ij}=0$, for all $i \in [n_1]$ and $j\in [n_2]$.
Let ${\cal C}_A$ be the set of zero-one $n_1 \times n_2$ matrices $B$ such that $B\leq A$.
\end{definition}

Let $\#_1(A)$ denote the number of ones in $A$ (that is, the
number of edges in the corresponding graph). The {\sc RWS} problem
rephrased for matrices is:

\vskip 0.2cm
\noindent\textsc{Rank Weighed Matrices} with $\lambda,\mu \geq 0$ ({\sc RWM}$(\lambda,\mu)$)

Instance: an $n_1\times n_2$ matrix $A$.

Output: $B\in {\cal C}_A$ with probability of $B\propto \lambda^{\rk(B)}\mu^{\#_1(B)}$.
\vskip 0.2cm

The problem of sampling independent sets in bipartite graphs is:

\vskip 0.2cm
\noindent\textsc{Bipartite Independent Sets} ({\sc BIS})

Instance: a bipartite graph $G=(U\cup W,E)$.

Output: a uniformly random independent set of $G$.
\vskip 0.2cm

Before we show a connection between {\sc BIS} and {\sc
RWM}$(1/2,1)$ we remark that to sample bipartite
independent sets it is enough to sample a subset of one side, say
$U$, from the correct (marginal) distribution. We now describe
this distribution in a setting which will be advantageous for
the proof of Theorem~\ref{t:1}.

We will represent an independent set by a pair of (indicator) vectors $u,v$ (where
$u\in\F_2^{n_1}$ and $v\in\F_2^{n_2}$).

\begin{definition}
We say that two vectors $\alpha,\beta\in\F_2^{n}$ {\em share a one} if there exists $i\in [n]$ such
that $\alpha_i = \beta_i = 1$.
\end{definition}

We will use the following simple fact.

\begin{observation}
Let $\alpha,\beta\in\F_2^{n}$. Let $d$ be the number of ones in $\beta$.
\begin{itemize}
\item
if $\alpha,\beta$ share a one then there are $2^{d-1}$ vectors $\beta'\leq \beta$ such that $\alpha^{\rm T} \beta'\equiv 0 \mod 2$.
\item
if $\alpha,\beta$ do not share a one then there are $2^{d}$ vectors $\beta'\leq \beta$ such that $\alpha^{\rm T} \beta'\equiv 0 \mod 2$.
\end{itemize}
\end{observation}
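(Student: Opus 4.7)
The plan is to parametrize $\beta'\leq\beta$ by its values at the coordinates where $\beta$ is one, then reduce the condition $\alpha^{\mathrm T}\beta'\equiv 0\pmod 2$ to a single linear equation over $\F_2$ in those free coordinates and count solutions.

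Concretely, let $I=\{i\in[n]:\beta_i=1\}$, so $|I|=d$. Since $\beta'\leq\beta$, the vector $\beta'$ is determined by its restriction $(\beta'_i)_{i\in I}\in\F_2^d$, and every choice of that restriction yields a valid $\beta'\leq\beta$; thus the total number of $\beta'\leq\beta$ is $2^d$. Moreover,
\begin{equation*}
\alpha^{\mathrm T}\beta'=\sum_{i=1}^n \alpha_i\beta'_i=\sum_{i\in I}\alpha_i\beta'_i\pmod 2,
\end{equation*}
because $\beta'_i=0$ whenever $i\notin I$. Let $J=\{i\in I:\alpha_i=1\}$, so $|J|=k$ is the number of coordinates where $\alpha$ and $\beta$ both have a one. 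Then the condition becomes $\sum_{i\in J}\beta'_i\equiv 0\pmod 2$.

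In the second case of the observation, $\alpha$ and $\beta$ do not share a one, so $J=\emptyset$ and the equation is vacuously satisfied by every $\beta'\leq\beta$, giving $2^d$ solutions. In the first case, $J\neq\emptyset$, so the equation is a nontrivial $\F_2$-linear constraint on the free coordinates $(\beta'_i)_{i\in I}$: fixing any values for the $d-k$ coordinates in $I\setminus J$ and for any $k-1$ of the coordinates in $J$, the last coordinate in $J$ is uniquely determined. This gives $2^{(d-k)+(k-1)}=2^{d-1}$ solutions, independent of the precise value of $k\geq 1$.

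No step is genuinely difficult; the only thing to be careful about is not conflating the condition $\beta'\leq\beta$ (which restricts the support) with the condition $\alpha^{\mathrm T}\beta'\equiv 0$ (which is a linear constraint on the free coordinates), and noting that the single linear equation cuts the number of solutions exactly in half precisely when $J$ is nonempty, i.e.\ when $\alpha$ and $\beta$ share a one.
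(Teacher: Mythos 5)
Your proof is correct. The paper states this observation without proof (it is introduced as a ``simple fact''), and your argument --- restricting to the support of $\beta$ and observing that $\alpha^{\rm T}\beta'\equiv 0$ becomes a single $\F_2$-linear equation that is nontrivial exactly when $\alpha$ and $\beta$ share a one --- is the natural and complete justification.
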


Let $u\in\F_2^{n_1}$ be a vector. We would like to count the number of $v\in\F_2^{n_2}$
such that $u,v$ is an independent set. Note that $u,v$ is an independent set
iff $v_j=0$ for every $j\in [n_2]$ such that $u$ and $j$-th column of $A$ share a one.
Let $k$ be the number of columns of $A$ that do not share a one with~$u$.
Then we have
\begin{equation}\label{e1}
\mbox{$u\in\F_2^{n_1}$ occurs in $2^k$ independent sets.}
\end{equation}
Thus to sample independent sets in a bipartite graph $G$ with $n_1 \times n_2$ bipartite adjacency matrix $A$
it is enough to sample $u\in\F_2^{n_1}$ with the probability of $u$ proportional to $2^k$,
where $k$ is the number of columns of $A$ that do not share a one with $u$. We will call
this distribution on $u$ the marginal {\sc BIS} distribution.

The following lemma shows a tight connection between {\sc BIS} and {\sc RWM}$(1/2,1)$---given a sample from
one distribution it is trivial to obtain a sample from the other one.

\begin{lemma}\label{l1}
Let $G$ be a bipartite graph with bipartite adjacency matrix $A$.
\begin{itemize}
\item Let $u,v$ be a uniformly random independent set of $G$. Let $B$ be a uniformly random matrix from
the following set $\{ D\in{\cal C}_{A} \,|\,u^{\rm T} D \equiv 0 \mod 2\}$. Then $B$ is from the {\sc RWM}$(1/2,1)$-distribution.
\item Let $B\in {\cal C}_{A}$ be a random matrix from the {\sc RWM}$(1/2,1)$-distribution. Let
$u\in\F_2^{n_1}$ be a uniformly random vector from the left null space of $B$
(that is, $\{\beta\in\F_2^{n_1}\,|\,\beta^{\rm T} B \equiv 0\mod 2\}$). Then $u$ is from
the marginal {\sc BIS} distribution.
\end{itemize}
\end{lemma}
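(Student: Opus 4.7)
The plan is to introduce a single joint distribution on pairs $(u,B)$ and read both parts of the lemma off from its two marginals (together with the corresponding conditionals). Concretely, I will put the uniform distribution on the sample space
\begin{equation*}
\Omega = \{(u,B) \,:\, u \in \F_2^{n_1},\ B \in {\cal C}_A,\ u^{\rm T} B \equiv 0 \bmod 2\},
\end{equation*}
i.e.\ each pair gets weight $1$. The two parts of the lemma then amount to saying that the conditional of $B$ given $u$ (respectively, of $u$ given $B$) is exactly the one prescribed in the statement, while the marginals are the advertised distributions.

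For the $B$-marginal, fix $B$: the constraint $u^{\rm T} B \equiv 0 \bmod 2$ cuts out the left null space of $B$ over $\F_2$, which has size $2^{n_1 - \rk(B)}$. Thus the marginal weight on $B$ is proportional to $(1/2)^{\rk(B)} = (1/2)^{\rk(B)} \cdot 1^{\#_1(B)}$, which is precisely the {\sc RWM}$(1/2,1)$ distribution. For the $u$-marginal, fix $u$ and count column by column using the Observation: if $d_j$ denotes the number of ones in column $A_{\cdot j}$, then the number of $B_{\cdot j} \leq A_{\cdot j}$ with $u^{\rm T} B_{\cdot j} \equiv 0 \bmod 2$ equals $2^{d_j - 1}$ if $u$ and $A_{\cdot j}$ share a one, and $2^{d_j}$ otherwise. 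Multiplying over the $n_2$ columns, and letting $k$ be the number of columns of $A$ not sharing a one with $u$, I get a $u$-marginal weight of $2^{|E| - (n_2 - k)} \propto 2^k$, which by~\eqref{e1} is exactly the marginal {\sc BIS} distribution.

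With this joint distribution in hand, both parts drop out. For Part 1, if $(u,v)$ is uniform over independent sets of $G$, then by~\eqref{e1} its $u$-coordinate has the marginal {\sc BIS} distribution; drawing $B$ uniformly from $\{D \in {\cal C}_A \,:\, u^{\rm T} D \equiv 0 \bmod 2\}$ is exactly the conditional of $B$ given $u$ under the joint distribution, so the resulting $B$ has the {\sc RWM}$(1/2,1)$ marginal. For Part 2, if $B$ is drawn from {\sc RWM}$(1/2,1)$ (i.e.\ the $B$-marginal) and then $u$ is drawn uniformly from its left null space (the conditional of $u$ given $B$), the pair $(u,B)$ follows the joint distribution, so the $u$-marginal is again the marginal {\sc BIS} distribution. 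The only step requiring any care is the column-wise product for the $u$-marginal, but this is a direct invocation of the Observation and presents no substantive obstacle.
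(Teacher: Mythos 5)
Your proposal is correct and follows essentially the same route as the paper: the uniform distribution on your $\Omega$ is exactly the paper's distribution $\psi$ on the set $Q$, and you compute the same two marginals (via the size of the left null space for $B$, and via the column-by-column count for $u$) before reading off both parts as marginal-plus-conditional factorizations of the joint. No gaps.
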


\begin{proof}
Let $Q$ be the set of $u,B$ pairs such that $u^{\rm T} B\equiv 0\mod 2$ and $B\leq A$.
Let $\psi$ be the uniform distribution on $Q$. Note that $\psi$ marginalized
over $u$ yields the {\sc RWM}$(1/2,1)$-distribution on $B\leq
A$, here we are using the fact that a $d$-dimensional space (in
this case the left null space of $B$) over $\F_2$ has $2^d$
elements. Formally,
\begin{equation}\label{ee11}
P(B)=\sum_{u:u^{\rm T} B \equiv 0 \mod{2}} \frac{1}{|Q|} = \frac{2^{n_1-\rk(B)}}{|Q|}
= \frac{2^{-\rk(B)}}{R_2'(G;1/2,1)}.
\end{equation}

Next we show that $\psi$ marginalized over $B$ yields the marginal
{\sc BIS} distribution. We compute the number of $B\leq A$ such
that $u^{\rm T} B\equiv 0\mod 2$. Let us use the same $k$ as
in~\eqref{e1}, that is, $k$ is the number of columns of $A$ that
do not share a one with $u$.

Note that the columns of $B$ can be chosen independently and only if the column and $u$ share a
one is the number of choices (for that column) halved. Let $\#_1(A)$ be the number of ones in $A$. Thus
\begin{equation}\label{e2}
\mbox{there are $2^{\#_1(A)-(n_2-k)}$ choices of $B\leq A$ such that $u^{\rm T} B\equiv 0\mod 2$.}
\end{equation}
Note that for fixed $u$ the counts in \eqref{e1} and \eqref{e2}
differ by a factor of $2^{\#_1(A)-n_2}$ (which is independent of
$u$). Thus $\psi$ marginalized over $B$ yields the marginal {\sc
BIS} distribution on $u$. Formally
\begin{equation}\label{ee22}
P(u)=\frac{2^{\#_1(A)-(n_2-k)}}{|Q|}=\frac{2^k}{\#\mathrm{BIS}(G)}.
\end{equation}

Note that this proves both claims of the lemma since in both cases the $u,B$ pair is from $\psi$
(by first sampling from a marginal and then sampling the remaining variable) and the conclusion
in both claims is a statement about marginal (of the remaining variable).
\end{proof}

Theorem~\ref{t:1} now follows from the proof of Lemma~\ref{l1}.

\begin{proof}[Proof of Theorem~\ref{t:1}]
Let $Q$ be the set from the proof of Lemma~\ref{l1}. From~\eqref{ee11} we obtain
\begin{equation}\label{ee33}
|Q|=R'_2(G;1/2,1) 2^{n_1}.
\end{equation}
From~\eqref{ee22} we have that the number of independent sets of $G$ is given by
\begin{equation}\label{ee44}
\#\mathrm{BIS}(G)=\frac{|Q|}{2^{\#_1(A)-n_2}}.
\end{equation}
Combining \eqref{ee33} and \eqref{ee44} we obtain the theorem.
\end{proof}

We do not know a good combinatorial interpretation for the mod-$2$ rank of $B$ for general graphs.
For forests (which are, of course, always bipartite) we have the following characterization.

\begin{lemma}\label{l2}
Let $G=(V,E)=(U\cup W,E)$ be a forest with bipartite adjacency matrix $A$. Then $\rk(A)$ is the size
of maximum matching in $G$.
\end{lemma}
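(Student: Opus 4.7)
The plan is to induct on $|V|$. The base case (no edges at all) is trivial since both $\rk(A)$ and the maximum matching number $\nu(G)$ are zero; isolated vertices contribute nothing to either quantity. So I may assume $G$ has at least one edge, in which case, being a forest, it contains a vertex $v$ of degree exactly one. Let $u$ be the unique neighbor of $v$, and without loss of generality assume $v\in W$ (so $v$ indexes a column of $A$).

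For the matching side I would use the standard leaf swap: every maximum matching of $G$ can be modified (if necessary) to include the edge $\{u,v\}$. Indeed, if a maximum matching $M$ already saturates $u$ via some edge $\{u,w\}$, then $v$ is unmatched (since $v$ has no other neighbor), and replacing $\{u,w\}$ by $\{u,v\}$ gives another maximum matching; and if $u$ is unmatched in $M$ then $M\cup\{u,v\}$ would be a larger matching, contradicting maximality. Hence $\nu(G)=1+\nu(G-u-v)$.

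For the rank side I would perform one step of Gaussian elimination over $\F_2$ using the leaf. The column of $A$ indexed by $v$ has a single one, in row $u$. I add that column to every other column of $A$ that has a one in row $u$; this zeroes out the rest of row $u$ without introducing any new ones elsewhere (since column $v$ has no other ones), so no cancellations are created anywhere else in the matrix. The result is a matrix whose row $u$ and column $v$ each contain a unique one, located at their intersection; deleting this row and column decreases the rank by exactly one. The remaining matrix is precisely the bipartite adjacency matrix $A'$ of $G-u-v$, so $\rk(A)=1+\rk(A')$.

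Combining the two displays with the inductive hypothesis $\rk(A')=\nu(G-u-v)$ gives $\rk(A)=1+\nu(G-u-v)=\nu(G)$, which completes the induction. There is no real obstacle here; the point that makes the $\F_2$ argument work as cleanly as over $\Q$ is that the column (or row) of a leaf contains a single $1$, so elimination with it cannot produce the mod-$2$ cancellations that would otherwise separate the $\F_2$ rank from the rational rank (equivalently, from the matching number) for general bipartite graphs.
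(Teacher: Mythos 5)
Your proof is correct and follows essentially the same leaf-peeling induction as the paper's. The only cosmetic difference is that the paper deletes just the leaf's neighbor (one row of $A$), noting the rank drops by exactly one because the leaf's column has its unique one in that row, whereas you delete both endpoints after an explicit Gaussian-elimination step; both arguments are sound.
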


\begin{proof}
Let $a\in V$ be a leaf of $G$ and let $e=\{a,b\}\in E$ be the edge adjacent to $a$.
Note that $b$ is matched in every maximum matching $M$ (otherwise one could add $e$ to $M$).
Thus removing $b$ and all adjacent edges decreases the size of maximum matching by $1$.

Now we argue that removing $b$ (and all adjacent edges) also decreases rank (over $\F_2$) by $1$.
W.l.o.g. assume that $b$ corresponds to the first row and $a$ corresponds to the first column.
Removing $b$ (and all adjacent edges) corresponds to removing the first row of $A$. Note
that this decreases rank by at most $1$ and it does decrease it by $1$, since the
only non-zero entry in the first column is in the first row.
\end{proof}

\section{Exact evaluation of $R'_2$ (proof of Theorem~\ref{thm:comlexity})}\label{sec:complexity}

We will prove Theorem~\ref{thm:comlexity} in this section. Let $G=(V,E)=(U\cup W,E)$ be
a bipartite graph. First we deal with the cases
where exact evaluation of $R'_2(G;\lambda,\mu)$ is easy. For cases $\lambda\in\{0,1\}$ and
$\mu=0$ we have
$$
R_2'(G;0,\mu)=R_2'(G;\lambda,0)=1\quad\quad\mbox{and}\quad\quad R_2'(G;1,\mu)=(1+\mu)^{|E|}.
$$
For $\lambda=1/2$ and $\mu=-1$ we will show
\begin{equation}\label{ezzz}
R_2'(G;1/2,-1)=2^{|E|-|V|+t},
\end{equation}
where $t$ is the number of isolated vertices in $G$ (see the remark after
Theorem~\ref{thm:pbis} in this section).

For the hardness results we give reductions from the problem of evaluating the Tutte polynomial
(to establish the second part of Theorem~\ref{thm:comlexity}) and \#{\sc BIS}
(to establish the third part of Theorem~\ref{thm:comlexity}).

The Tutte polynomial of a graph $G=(V,E)$ is a polynomial in two variables $x$, $y$ defined by
\begin{equation*}
T(G;x,y) = \sum_{S \subseteq E} (x-1)^{\kappa(S)-\kappa(E)}(y-1)^{|S|-|V|+\kappa(S)},
\end{equation*}
where $\kappa(S)$ is the number of connected components of the graph $(V,S)$. The Tutte polynomial
is closely related to the random cluster model (see,
e.\,g.,~\cite{MR1245272}). Let $Z(G;q,\mu)$ be defined as in \eqref{eq:rc}. One has
\begin{equation}\label{eq:tutte_rc}
T(G;x,y) = (x-1)^{-\kappa(E)}(y-1)^{-|V|}Z(G;(x-1)(y-1),(y-1)),
\end{equation}
where we assume $x\neq 1$ and $y\neq 1$.

We are going to use the following result on the complexity of exact evaluation of the Tutte polynomial.
\begin{theorem}[\cite{MR1049758}]\label{ttty}
Exact evaluation of the Tutte polynomial is \#P-hard for all rational numbers $x,y$ except
when
\begin{enumerate}
\item $(x-1)(y-1)=1$; or
\item $(x,y)$ equals $(1,1)$, $(-1,-1)$, $(0,-1)$ or $(-1,0)$.
\end{enumerate}
\end{theorem}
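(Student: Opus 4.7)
The plan is to follow the classical Jaeger--Vertigan--Welsh strategy: for each fixed value of $q := (x-1)(y-1)$ (that is, for each hyperbola $H_q$ in the Tutte plane), reduce evaluation at every point of $H_q$ to evaluation at any single non-trivial point via polynomial interpolation over graph transformations; then anchor hardness on each $H_q$ (with $q \neq 1$) to a known \#P-hard counting problem.

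First I would dispose of the easy cases. On $H_1$, the identity \eqref{eq:tutte_rc} gives $Z(G;1,\mu) = 2^{|V|}(1+\mu)^{|E|}$ (the random-cluster model factorizes), which yields a closed form for $T$. The four exceptional points correspond respectively to counting all spanning subgraphs ($(1,1)$), the dimension of the cycle space via the cycle-cocycle decomposition ($(-1,-1)$), and to degenerate evaluations of the flow and chromatic polynomials at $(0,-1)$ and $(-1,0)$; each admits a polynomial-time algorithm via linear algebra over $\F_2$. Next I would introduce the two basic graph operations: the $k$-thickening $G^{(k)}$, in which each edge is replaced by $k$ parallel copies, and the $k$-stretching $G_{(k)}$, in which each edge is replaced by a path of length $k$. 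A direct expansion from \eqref{eq:rc} shows that both operations send $H_q$ to itself, moving the parameter $\mu$ along the hyperbola according to an explicit rational map (roughly $\mu \mapsto (1+\mu)^k - 1$ for thickening and a dual formula for stretching). Moreover $Z(G^{(k)};q,\mu)$ is a polynomial in $\mu$ of degree bounded by $|E|$ times a known factor, and similarly for the stretched graph.

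The interpolation step then works as follows: fix any rational $(x_0,y_0) \in H_q$ at which we have an oracle for $T$. Apply thickening (or stretching) for $k=1,2,\dots,N$, with $N$ slightly larger than the degree bound, to obtain oracle values at $N$ distinct rational points of $H_q$. Since $Z(G;q,\mu)$ is a univariate polynomial of bounded degree in $\mu$ on $H_q$, Lagrange interpolation recovers it everywhere on $H_q$ in polynomial time. To anchor hardness on each $H_q$ with $q \notin \{0,1\}$, I would invoke: hardness of the chromatic polynomial at integers $q \geq 3$ (the Linial--Vertigan result), which sits on the intersection $H_q \cap \{y=0\}$; hardness of the Ising partition function on $H_2$; and for non-integer rational $q$ the classical reduction using graph substitutions of small integer-$q$ gadgets combined with a further layer of interpolation. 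The two remaining exceptional lines $x=1$ (where $q=0$) and $y=1$ reduce to the all-terminal reliability polynomial and to the polynomial $(1+\mu)^{|E|}$ respectively; the former is \#P-hard by Provan--Ball, the latter is trivial.

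The main obstacle is the interpolation bookkeeping: one must verify that the rational $\mu$-values produced by iterating thickening/stretching really are pairwise distinct over $\Q$ (not merely formally distinct), that there are enough of them to dominate the degree of $Z$ as a polynomial in $\mu$, and that the inverse of the resulting Vandermonde system can be computed in time polynomial in the input size. A secondary obstacle is stitching together the per-hyperbola arguments into a uniform classification covering every rational $(x,y)$ outside the listed exceptions, since the points on the coordinate axes and on the degenerate hyperbolas $H_0, H_1$ each require tailored reductions rather than the generic thickening/stretching scheme.
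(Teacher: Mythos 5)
This theorem is not proved in the paper at all: it is quoted verbatim from Jaeger--Vertigan--Welsh \cite{MR1049758} and used as a black box in the reduction of Lemma~\ref{lem:reduction}. So the only meaningful comparison is with the cited source, and your sketch does follow its overall architecture (per-hyperbola equivalence via $k$-thickening/$k$-stretching plus Lagrange interpolation, then anchoring each hyperbola $H_q$ at a known hard point). That part is the right strategy.

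However, your treatment of the degenerate locus $H_0=\{x=1\}\cup\{y=1\}$ contains a genuine error that would break the classification. You assert that the line $y=1$ ``reduces to the polynomial $(1+\mu)^{|E|}$'' and is trivial. That is the hyperbola $q=(x-1)(y-1)=1$, not the line $y=1$: for $q=1$ one indeed has $Z(G;1,\mu)=(1+\mu)^{|E|}$ (with no $2^{|V|}$ factor, by the way). On the line $y=1$ the change of variables \eqref{eq:tutte_rc} is singular (the factor $(y-1)^{-|V|}$ blows up), so nothing about $T(G;x,1)$ follows from $Z(G;q,0)$; and in fact points such as $(2,1)$ (counting spanning forests) are \#P-hard and are asserted to be hard by the very theorem you are proving, since $(2,1)$ lies neither on $(x-1)(y-1)=1$ nor among the four special points. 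The line $y=1$ needs its own hardness argument, just as $x=1$ does. Two smaller issues: $T(G;1,1)$ counts spanning trees (maximal spanning forests), not all spanning subgraphs; and the anchoring step for non-integer rational $q$ --- which is the genuinely delicate part of \cite{MR1049758}, since thickening and stretching never leave a hyperbola and so cannot transport hardness between different values of $q$ --- is waved at (``graph substitutions of small integer-$q$ gadgets'') rather than supplied. As it stands the sketch proves neither the hardness on $y=1$ nor the hardness on the irrational-free but non-integer hyperbolas, so it does not yet establish the stated classification.
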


The second part of~Theorem~\ref{thm:comlexity} will be proved by
reducing from exact evaluation of the Tutte polynomial. We prove
the following Lemma in  Section~\ref{sec:lemreduc}.

\begin{lemma}\label{lem:reduction}
Assuming the validity of the GRH, exact
evaluation of the Tutte polynomial at $x,y$ is polynomial-time
Turing reducible to exact evaluation of $R_2'$ at $\lambda,\mu$,
when
\begin{equation}\label{j20}
(x-1)(y-1)=1/\lambda-1,\quad y-1=\mu^2,\quad \lambda \not\in \{0,1\},\quad\mbox{and}\quad \mu \neq 0.
\end{equation}
\end{lemma}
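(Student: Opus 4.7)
The plan is to reduce computation of $T(G;x,y)$ to computation of the random cluster partition function $Z(G;q,\nu)$ at $q=(x-1)(y-1)=1/\lambda-1$ and $\nu=y-1=\mu^2$. Since $\lambda\notin\{0,1\}$ and $\mu\neq 0$ force $x\neq 1$ and $y\neq 1$, the transformation~\eqref{eq:tutte_rc} is polynomial-time invertible, so it suffices to evaluate $Z(G;1/\lambda-1,\mu^2)$ using an oracle for $R'_2(\cdot;\lambda,\mu)$.

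To bridge a possibly non-bipartite input $G=(V,E)$ to bipartite instances, I would construct for each positive integer $k$ a bipartite graph $G_k^*$ by replacing each edge $e=\{u,v\}\in E$ with a \emph{$k$-thickened subdivision gadget}: add $k$ fresh middle vertices $m_{e,1},\dots,m_{e,k}$, each adjacent to both $u$ and $v$. For each $m_{e,i}$ the four ways to include or exclude its two incident edges in $S'\subseteq E(G_k^*)$ produce a column of the bipartite adjacency matrix equal to one of $0,\,e_u,\,e_v,\,e_u+e_v$, with $\mu$-weights $1,\mu,\mu,\mu^2$ respectively. Grouping the sum over $S'$ by the ``span type'' each gadget contributes, one obtains
\[
R'_2(G_k^*;\lambda,\mu)\;=\;\sum_{\alpha:E\to\Sigma}\Bigl(\prod_{e\in E} W_{\alpha_e}(a,b)\Bigr)\,\lambda^{D(\alpha)},
\]
where $\Sigma$ is the set of $\F$-subspaces of $\F_2^V$ that a gadget can contribute (namely $\{0\}$, $\langle e_u\rangle$, $\langle e_v\rangle$, $\langle e_u+e_v\rangle$, $\langle e_u,e_v\rangle$), each weight $W_{\alpha_e}(a,b)$ is a low-degree polynomial in $a:=(1+\mu)^k$ and $b:=(1+\mu^2)^k$, and $D(\alpha)$ is the dimension of the sum of the contributed subspaces, a purely combinatorial function of $\alpha$ determined by $G$.

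Isolating the labellings in which every edge is either ``empty'' ($\{0\}$) or ``full'' ($\langle e_u+e_v\rangle$) directly produces $\lambda^{|V|}Z(G;1/\lambda,(1+\mu^2)^k-1)$; the ``half'' and ``both'' labellings contribute additional summands that are polynomial in $a$ and $b$. Since $R'_2(G_k^*;\lambda,\mu)$ is thereby a bivariate polynomial in $(a,b)$ of total degree polynomial in $|E|$, I would query the oracle at polynomially many $k$'s and invert a Vandermonde-type system to recover this coefficient polynomial. Exactly as in the classical reductions of Jaeger--Vertigan--Welsh for the Tutte polynomial~\cite{MR1049758}, GRH is invoked here to guarantee a small prime modulus in an appropriate arithmetic progression so that the interpolation matrix is efficiently invertible and the intermediate arithmetic stays polynomial-sized.

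The central obstacle is reconciling the value $q=1/\lambda$ that the symmetric part of the decomposition naturally produces with the target $q=1/\lambda-1$. Once the full bivariate polynomial is recovered, one must argue that the combined information from the symmetric, ``half'', and ``both'' labellings pins down $Z(G;q,\nu)$ at the shifted point $q=1/\lambda-1$, $\nu=\mu^2$. If the thickened-subdivision family alone is not expressive enough, one can augment each gadget with pendant edges at $u$ and $v$ (whose columns contribute $\langle e_u\rangle$ and $\langle e_v\rangle$) to introduce additional span types, giving extra independent linear relations that suffice to determine the evaluation at the shifted point.
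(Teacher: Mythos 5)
Your opening move (reduce $T(G;x,y)$ to $Z(G;1/\lambda-1,\mu^2)$ via~\eqref{eq:tutte_rc}) matches the paper, and your parallel-2-path gadget is a reasonable way to bipartize $G$ while keeping one side at degree $2$. But the proof has a genuine gap at exactly the point you flag as ``the central obstacle,'' and that obstacle is the entire content of the lemma. The symmetric (empty/full) labellings of your gadget naturally produce $q=1/\lambda$, and you give no mechanism for reaching $q=1/\lambda-1$; ``one must argue that the combined information pins down $Z$ at the shifted point'' and ``add pendant edges if the family is not expressive enough'' are placeholders, not arguments. In the paper this shift is engineered, not interpolated for: one works modulo a prime $p$ and attaches to each original vertex a copy of a gadget $\Upsilon$ chosen so that the contribution of every \emph{mixed} component vanishes mod $p$ (condition $Y\equiv 0$ in~\eqref{eq:lemmod}) while each \emph{pure} component contributes a factor of exactly $(1/\lambda-1)\mu^{|E_C|}X^{k}$ (Claim~\ref{clm:lemmod}); the $-1$ in $1/\lambda-1$ comes out of this cancellation. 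Each oracle call then yields $Z(H;1/\lambda-1,\mu^2)$ modulo one prime, and the Chinese Remainder Theorem reassembles the rational value.

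Two further steps of your plan would fail as written. First, the bivariate interpolation is unsound: your query points $(a,b)=\bigl((1+\mu)^k,(1+\mu^2)^k\bigr)$ lie on a one-parameter curve as $k$ varies, so you can only recover the restriction of the bivariate polynomial to that curve, not its individual coefficients; moreover the quantity you want, $\lambda^{|V|}Z(G;1/\lambda,(1+\mu^2)^k-1)$, is itself $k$-dependent rather than a coefficient, so even full coefficient recovery would not directly hand it to you. Second, GRH is invoked for the wrong purpose: it is not needed to invert an interpolation matrix or to find primes in arithmetic progressions. In the paper it enters through Lemma~\ref{lem:density}, an Artin-type density statement guaranteeing a positive density of primes $p$ for which the gadget congruences~\eqref{eq:lemmod} admit a solution $k$; without enough such primes the CRT reconstruction cannot be carried out. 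To repair your proof you would need (i) a gadget analysis that provably converts $1/\lambda$ into $1/\lambda-1$ (the role of Lemmas~\ref{lem:zr} and~\ref{lem:zz'}), and (ii) a correct accounting of where and why GRH is required.
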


 Assuming GRH, by Lemma~\ref{lem:reduction}
and Theorem~\ref{ttty}, we have that exact evaluation of $R_2'$ at
rational point $(\lambda,\mu)$ is \#P-hard when $\lambda \not\in
\{0,1/2,1\}$ and $\mu \neq 0$. We do not get \#P-hardness for
$\lambda=1/2$ since the reduction is from the Tutte polynomial at
$(x-1)(y-1)=1$ which is polynomial-time computable (part 1 of
Theorem~\ref{ttty}). (The other easy cases of the Tutte polynomial
have no impact since $y=1$ implies $\mu=0$ and $y\in\{0,-1\}$
implies that $\mu$ is not real.) We proved the second part of
Theorem~\ref{thm:comlexity}.

Now we prove the third part of~Theorem~\ref{thm:comlexity} (the proof of
main lemmas is deferred to later sections). To show \#P-hardness of exact evaluation of $R'_2(G;1/2,\mu)$ for
$\mu\notin\{-1,0\}$, we prove a connection between $R'_2$ and the
``permissive version of \#{\sc BIS}'' (\#{\sc PBIS}) introduced
in~\cite{MR2286511}; \#{\sc PBIS}  is a generalization of \#{\sc BIS} where
the weight of a set of vertices is determined by the number of pairs of
neighboring vertices that are both in the set (in \#{\sc BIS} the weight
is zero raised to the number of such pairs).

\vskip 0.2cm
\noindent\textsc{\#Permissive Bipartite Independent Sets} with $\eta$ (\#{\sc PBIS}$(\eta)$)

Instance: a bipartite graph $G=(U\cup W,E)$.

Output: the quantity
$$\#\mathrm{PBIS}(G;\eta)=\sum_{\sigma:U \cup W
\rightarrow \{0,1\}}
(1+\eta)^{w(\sigma)}(1-\eta)^{|E|-w(\sigma)},$$
where
$w(\sigma)$ is the number of edges in $E$ with both endpoints labelled $1$ by $\sigma$.
\vskip 0.2cm

(We are using a different parametrization than~\cite{MR2286511}---our $\eta$ and their $\gamma$
are connected by $\gamma^2=(1+\eta)/(1-\eta)$.)

Note that
$$\#\mathrm{BIS}(G) = 2^{|E|} \#\mathrm{PBIS}(G;-1).$$
The following result is a generalization of
Theorem~\ref{t:1} and shows that $R'_2$ encodes \#{\sc
PBIS}$(\eta)$ as well. The proof is deferred to
Section~\ref{sec:thmpbis}.

\begin{theorem}\label{thm:pbis}
Let $G=(V,E)=(U\cup W,E)$ be a bipartite graph.
\begin{equation*}
\#\mathrm{PBIS}(G;\eta)=2^{|V|}R'_2(G;1/2,-\eta).
\end{equation*}
\end{theorem}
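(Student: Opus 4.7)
My plan is to generalize the proof of Lemma~\ref{l1} by considering the same joint sum over pairs $(u,B)$ with $u\in\F_2^{n_1}$, $B\leq A$, and $u^{\rm T} B\equiv 0\pmod{2}$, but now weighting each pair by $(-\eta)^{\#_1(B)}$. Marginalizing out $B$ first should recover $R'_2(G;1/2,-\eta)$, while marginalizing out $u$ first should recover $\#\mathrm{PBIS}(G;\eta)$ up to a power of $2$.

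Starting from $R'_2(G;1/2,-\eta)=\sum_{B\leq A} 2^{-\rk(B)}(-\eta)^{\#_1(B)}$ and using $2^{-\rk(B)}=2^{-n_1}|\{u:u^{\rm T} B\equiv 0\pmod 2\}|$ (as in the proof of Lemma~\ref{l1}), I swap sums to obtain $R'_2(G;1/2,-\eta)=2^{-n_1}\sum_{u}T(u)$, where $T(u)=\sum_{B\leq A,\ u^{\rm T} B\equiv 0}(-\eta)^{\#_1(B)}$. For fixed $u$ the constraint decomposes column-by-column: if $k_j$ denotes the number of $U$-neighbors $i$ of $j\in W$ with $u_i=1$, then the $j$-th column $B_j\subseteq N(j)$ is constrained only to contain an even number of such ``marked'' coordinates. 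Applying the identity $\sum_{m\text{ even}}\binom{k}{m}(-\eta)^m=\tfrac{1}{2}[(1-\eta)^k+(1+\eta)^k]$ to the $k_j$ marked coordinates and summing freely over the remaining $\deg(j)-k_j$ coordinates, each factor becomes $\tfrac{1}{2}\bigl[(1-\eta)^{\deg(j)}+(1-\eta)^{\deg(j)-k_j}(1+\eta)^{k_j}\bigr]$, so $T(u)=2^{-n_2}\prod_{j\in W}\bigl[(1-\eta)^{\deg(j)}+(1+\eta)^{k_j}(1-\eta)^{\deg(j)-k_j}\bigr]$.

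Separately, I compute $\#\mathrm{PBIS}(G;\eta)$ by writing $\sigma=(u,v)\in\F_2^{n_1}\times\F_2^{n_2}$ and observing $w(\sigma)=\sum_{j\in W} v_j k_j$. The Boltzmann weight $(1+\eta)^{w(\sigma)}(1-\eta)^{|E|-w(\sigma)}$ factors over $W$, with the $j$-th factor equal to $(1-\eta)^{\deg(j)}$ when $v_j=0$ and $(1+\eta)^{k_j}(1-\eta)^{\deg(j)-k_j}$ when $v_j=1$. Summing over $v$ with $u$ fixed yields exactly the product appearing in $T(u)$, but without the prefactor $2^{-n_2}$. Hence $\sum_v(1+\eta)^{w(u,v)}(1-\eta)^{|E|-w(u,v)}=2^{n_2}T(u)$ for each $u$, and summing over $u$ gives $\#\mathrm{PBIS}(G;\eta)=2^{n_2}\sum_u T(u)=2^{n_1+n_2}R'_2(G;1/2,-\eta)=2^{|V|}R'_2(G;1/2,-\eta)$.

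The only nontrivial step is the column-by-column evaluation of $T(u)$ via the even-subset identity; once this is in hand, matching the two product formulas is mechanical. I expect that step to be the main obstacle, although it is essentially a direct computation.
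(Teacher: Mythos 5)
Your proof is correct, and it reaches the identity by a genuinely different route from the paper. The paper proves Theorem~\ref{thm:pbis} via a high-temperature expansion: it first rewrites the \#{\sc PBIS} weight as $\prod_{\{u,v\}\in E}(1+\eta\chi(\sigma(u),\sigma(v)))$ with $\chi=\pm 1$, expands this product over subsets $S\subseteq E$, and then shows by a sign-cancellation argument (Claim~\ref{clm:pair}) that the sum over the $W$-labels vanishes unless every $W$-vertex sees an even number of $S$-neighbours labelled $1$, in which case it contributes $2^{|W|}(-1)^{|S|}$; the count of surviving $U$-labels is then $2^{|U|-\rk(S)}$ (Claim~\ref{clm:rk}). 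You instead work from the $R_2'$ side, exactly as you announce: you re-expand $2^{-\rk(B)}$ via the left null space (the same $2^{n_1-\rk(B)}$ count the paper uses), evaluate the constrained column sums $T(u)$ with the even-subset binomial identity $\sum_{m\ \mathrm{even}}\binom{k}{m}(-\eta)^m=\tfrac12[(1-\eta)^k+(1+\eta)^k]$, and separately factor $\#\mathrm{PBIS}$ over $W$ by summing out $v$ for fixed $u$ --- never invoking the $\chi$-expansion or any cancellation. The two proofs share the essential combinatorics (the per-column parity condition, which is precisely $u^{\rm T}B\equiv 0$, and the null-space count), but yours is organized as a comparison of two marginals over $u$, making it a direct generalization of the proof of Lemma~\ref{l1} and somewhat more self-contained; the paper's version is shorter once the Ising-style expansion is accepted and makes the connection to the standard high-temperature expansion explicit. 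Your column-by-column computation of $T(u)$ is the step you flag as nontrivial, and it checks out: the marked coordinates give $\tfrac12[(1-\eta)^{k_j}+(1+\eta)^{k_j}]$, the free ones give $(1-\eta)^{\deg(j)-k_j}$, and the product telescopes to the claimed factor, so the final bookkeeping $2^{-n_1}\cdot 2^{-n_2}\cdot 2^{n_2}\cdot 2^{n_1}$ yielding $2^{|V|}$ is right.
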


Note that $\#\mathrm{PBIS}(G;1)=2^{|E|+t}$, where $t$ is the number of
isolated vertices of $G$ (since the other vertices have to be labeled $1$
by $\sigma$). This implies \eqref{ezzz}.

The following result on
\#{\sc PBIS}$(\eta)$ will be proved in Section~\ref{sec:lemreduc2}.

\begin{lemma}\label{lem:reduction2}
\#{\sc BIS} is polynomial-time Turing reducible to \#{\sc PBIS}$(\eta)$
with $\eta$ a rational number and $\eta \not\in \{\pm 1,0\}$.
\end{lemma}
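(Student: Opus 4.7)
The plan is polynomial interpolation. Writing
\[
\#\mathrm{PBIS}(G;\eta')=\sum_{k=0}^{|E|} a_k (1+\eta')^k(1-\eta')^{|E|-k},
\]
where $a_k$ is the number of $\sigma:V\to\{0,1\}$ with exactly $k$ one-one edges, makes $\#\mathrm{PBIS}(G;\cdot)$ a polynomial of degree $|E|$ whose value at $\eta'=-1$ equals $2^{|E|}\,\#\mathrm{BIS}(G)$. Evaluations at $|E|+1$ distinct rational arguments would therefore recover $\#\mathrm{BIS}(G)$ by Lagrange interpolation; since the oracle only answers at the fixed value $\eta$, I would synthesize the missing evaluations by varying the graph.

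For each integer $r\geq 1$ let $G_r$ be the bipartite graph obtained from $G$ by replacing every edge $\{u,v\}$ with $r$ internally-vertex-disjoint paths of length three (the two new vertices of each path lying on opposite sides of the bipartition). Writing $M$ for the PBIS edge-weight matrix at $\eta$ and summing out the $2r$ internal spins per gadget replaces the weight on $\{u,v\}$ by the entrywise $r$-th power of $M^3$. Setting $A=M^3[0,0]$, $B=M^3[0,1]$, $C=M^3[1,1]$, and writing $N_{i,j,k}$ for the number of $\sigma:V(G)\to\{0,1\}$ with exactly $i$ zero-zero edges, $j$ mixed edges and $k$ one-one edges, one obtains
\[
\#\mathrm{PBIS}(G_r;\eta)=\sum_{i+j+k=|E|} N_{i,j,k}\,(A^iB^jC^k)^r,
\]
a linear combination of at most $\binom{|E|+2}{2}$ exponentials in $r$ with integer coefficients. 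Querying the oracle on $G_r$ for $r=1,\ldots,\binom{|E|+2}{2}$ and solving the resulting Vandermonde-type system recovers every $N_{i,j,k}$, whence $\#\mathrm{BIS}(G)=\sum_{i+j=|E|} N_{i,j,0}$.

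The hard part will be checking that the exponential bases $\lambda_{i,j,k}=A^iB^jC^k$ are pairwise distinct across the triples $(i,j,k)$ with $i+j+k=|E|$, so that the Vandermonde system is invertible. A collision would force $(B/A)^{j-j'}(C/A)^{k-k'}=1$ with exponents of absolute value at most $|E|$, reducing the question to multiplicative independence of the rationals $B/A$ and $C/A$ in $\mathbb{Q}^\times$ in that bounded range. Using the closed forms $A=2(1-\eta)^2(2-\eta)$, $B=2(1-\eta)(2-\eta+\eta^2)$, $C=2(2-\eta+2\eta^2+\eta^3)$, I would verify this for each rational $\eta\notin\{0,\pm 1\}$ by an elementary prime-factor argument; the handful of pathological values (most notably $\eta=2$, which makes $A=0$) are handled by replacing length-three paths with length-$(2s+1)$ paths for a suitable $s$, trading $M^3$ for $M^{2s+1}$ without breaking bipartiteness. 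After this step the reduction runs in time polynomial in $|V|+|E|$ and the bit length of $\eta$.
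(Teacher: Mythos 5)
Your strategy (stretch each edge into parallel length-three paths, thicken by a parameter $r$, and interpolate in $r$) is a legitimate Jaeger--Vertigan--Welsh-style alternative to the paper's argument, which instead blows up vertices and edges into clouds of sizes $kp$ and $p-1$, forces all non-constant cloud configurations to vanish modulo a prime $p$ chosen so that $((1+\eta)/(1-\eta))^{2k}\equiv -1 \bmod p$, and reassembles $\#\mathrm{BIS}(G)$ by the Chinese Remainder Theorem using a density-of-primes theorem. Your setup is correct as far as it goes: the transfer matrix is $M=\bigl(\begin{smallmatrix}1-\eta & 1-\eta\\ 1-\eta & 1+\eta\end{smallmatrix}\bigr)$, your closed forms for $A,B,C$ check out, and the identity $\#\mathrm{PBIS}(G_r;\eta)=\sum N_{i,j,k}(A^iB^jC^k)^r$ is right.

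However, the proof has a genuine gap exactly where you flag "the hard part": you never establish that $B/A$ and $C/A$ are multiplicatively independent in $\Q^\times$, and without that the Vandermonde system may be singular and the reduction fails. Since $|E|$ is unbounded over inputs, you need \emph{full} multiplicative independence for the fixed rational $\eta$, not just absence of low-degree relations; checking a few identities such as $B^2\neq AC$ does not suffice. For $\eta=a/b$ in lowest terms, $A,B,C$ are values of cubics in $a,b$ whose prime factorizations are uncontrolled, so "an elementary prime-factor argument" is an assertion, not an argument --- it is precisely this kind of claim that forces the paper (in its Tutte reduction) to invoke Artin-type density theorems and even GRH. A correct completion would have to either prove independence for every admissible rational $\eta$, or characterize the exceptional $\eta$ and give a separate gadget for them; the same applies to the deferred degenerate case $\eta=2$ (where $A=0$), for which "use length-$(2s+1)$ paths for suitable $s$" re-raises the identical independence question for the entries of $M^{2s+1}$. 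As written, the reduction is a plausible plan with its central lemma missing.
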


The third part of Theorem~\ref{thm:comlexity} follows from
Theorem~\ref{thm:pbis}, Lemma~\ref{lem:reduction2} and the fact
that exact computation of \#{\sc BIS} is \#P-complete~\cite{MR721012}.

\subsection{Reducing Tutte polynomial to $R'_2$ polynomial (proof of Lemma~\ref{lem:reduction})}\label{sec:lemreduc}

We will focus on bipartite graphs $G=(U\cup W,E)$ such that vertices in
partition $W$ have degree at most $2$ (a natural operation that
produces such graphs is {\em $2$-stretch}, that is, replacement of each
edge with a path of length $2$).

Let $G=(U\cup W,E)$ be a bipartite graph with max-degree in $W$ bounded by $2$.
We call a connected component $C=(U_C\cup W_C,E_C)$ of $G$ {\em pure} if every
vertex in $W_C$ has degree $2$ in $C$. A component that is not pure
will be called {\em mixed}. The evaluation of $R_2'$ polynomial in $G$
can be expressed using pure connected components as follows.

\begin{lemma}\label{lem:zr}
For every bipartite graph $G=(U \cup W,E)$ such that the degree of each vertex in $W$ is bounded by $2$,
\begin{equation*}
R'_2(G;\lambda,\mu)=\sum_{S \subseteq E} \lambda^{|U|-\kappa'(S)}\mu^{|S|},
\end{equation*}
where $\kappa'(S)$ is the number of pure connected components in $(U \cup W,S)$.
\end{lemma}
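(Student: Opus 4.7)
The plan is to establish the pointwise identity $\rk(A_S) = |U| - \kappa'(S)$ for every $S \subseteq E$, where $A_S$ is the bipartite adjacency matrix of $(U\cup W,S)$; the lemma then follows immediately from the definition of $R'_2$. After permuting rows and columns to reflect the connected-component decomposition of $(U\cup W,S)$, the matrix $A_S$ becomes block diagonal, so both sides of the identity are additive over components $C=(U_C\cup W_C,E_C)$. It therefore suffices to prove, per component, that $\rk(A_C) = |U_C| - [C \text{ is pure}]$. The two degenerate cases---an isolated $U$-vertex (pure, $\rk = 0 = 1-1$) and an isolated $W$-vertex (not pure, $\rk = 0 = 0-0$)---fit the formula immediately, so I would assume $C$ has at least one edge.

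For such a $C$, a subset $T \subseteq U_C$ satisfies $\sum_{u\in T}\mathrm{row}_u(A_C) = 0$ over $\F_2$ iff $|N(w)\cap T|$ is even for every $w\in W_C$. Under the degree-$\le 2$ hypothesis on $W$ this means $u^*\notin T$ whenever $w$ has degree $1$ with unique neighbor $u^*$, and $\{u_1,u_2\}\cap T \in \{\emptyset,\{u_1,u_2\}\}$ whenever $w$ has degree $2$ with neighbors $u_1,u_2$. I would repackage this condition by contracting each degree-$2$ vertex $w$ into an edge $u_1u_2$ and each degree-$1$ vertex $w$ into an edge $u^*\ast$ to a fresh external vertex $\ast$, producing a multigraph $H$ on $U_C$ (together with $\ast$ if $C$ is mixed). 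By construction any path in $C$ projects to a walk in $H$, so $H$ inherits connectedness from $C$; and the admissible subsets $T$ are exactly those $T\subseteq U_C$ that cut no edge of $H$, subject to $\ast\notin T$.

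From here the case analysis closes quickly. If $C$ is mixed then $\ast\in V(H)$, and connectedness of $H$ forces any valid $T$ (which must avoid $\ast$) to be empty, giving $\rk(A_C)=|U_C|$. If $C$ is pure with $W_C\neq\emptyset$, then $V(H)=U_C$ and connectedness forces $T\in\{\emptyset,U_C\}$; the all-rows sum vanishes genuinely because each column of $A_C$ has exactly two $1$s, so the relation space is one-dimensional and $\rk(A_C)=|U_C|-1$. The main obstacle is the bookkeeping around the contraction to $H$---checking that it inherits connectedness from $C$ and that $\F_2$-row-relations on $A_C$ correspond faithfully to cuts in $H$---but both reduce to the hypothesis that every $W$-vertex has degree at most $2$, which is exactly what makes the contraction well defined.
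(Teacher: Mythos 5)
Your proposal is correct and follows essentially the same route as the paper: reduce to connected components via the block-diagonal structure of the bipartite adjacency matrix, then show the per-component rank is $|U_C|$ or $|U_C|-1$ according to whether the component is mixed or pure by analyzing the left null space, where the constraint that each $W$-vertex's neighborhood meets $T$ evenly plus connectedness forces $T\in\{\emptyset,U_C\}$ (the paper's Lemma~\ref{lbn} phrases your contraction-to-$H$ step directly as ``no $w\in W$ has neighbors in both $U_0$ and $U_1$, hence $x=0$ or $x=1$''). Your explicit treatment of the isolated-vertex components is a welcome extra care the paper's component lemma glosses over.
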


Before proving Lemma~\ref{lem:zr} we need the following characterization of the rank of bipartite adjacency matrices
over $\F_2$.

\begin{lemma}\label{lbn}
Let $G=(U\cup W,E)$ be a connected bipartite graph with max-degree in $W$ bounded by $2$. Let
$B$ be the adjacency matrix of $G$. Then
$$
{\rm rk}_2(B)=\Big\{\begin{array}{ll}
|U| & \mbox{if there is a vertex of degree $1$ in $W$},\\
|U|-1 & \mbox{otherwise}.
\end{array}
$$
\end{lemma}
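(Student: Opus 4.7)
The plan is to compute the left null space of $B$, regarded as an $|U|\times|W|$ matrix over $\F_2$, and apply rank--nullity. For $x\in\F_2^U$ the $w$-th coordinate of $x^{\rm T}B$ is $\sum_{u\sim w} x_u \pmod 2$. Since $\deg(w)\le 2$, this equals $x_u$ when $w$ has a unique neighbor $u$, and $x_u+x_v$ when $w$ has two distinct neighbors $u,v$. Hence $x^{\rm T}B=0$ is equivalent to the conjunction of (i) $x_u=0$ for every $u$ that is the unique neighbor of some degree-$1$ vertex of $W$, together with (ii) $x_u=x_v$ for every pair $\{u,v\}$ that are the two neighbors of some degree-$2$ vertex of $W$.

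Next, I would introduce the auxiliary multigraph $H$ on vertex set $U$ whose edges are exactly the pairs appearing in (ii), and show that $H$ is connected. Given $u,u'\in U$, connectivity of $G$ yields an alternating path $u=u_0,w_1,u_1,\dots,w_k,u_k=u'$; since $G$ is a simple bipartite graph the two $U$-neighbors $u_{i-1},u_i$ of $w_i$ are distinct, so $\deg(w_i)=2$ and $w_i$ contributes the edge $\{u_{i-1},u_i\}$ to $H$. Therefore $u_0,u_1,\dots,u_k$ is a walk in $H$, proving that $H$ is connected.

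With $H$ connected, condition (ii) forces $x$ to be constant on $U$. If no vertex of $W$ has degree $1$, then (i) is vacuous, the left null space is the $1$-dimensional subspace spanned by the all-ones vector, and so ${\rm rk}_2(B)=|U|-1$. If some $w^*\in W$ has degree $1$ with unique neighbor $u^*$, then constancy on $U$ combined with $x_{u^*}=0$ forces $x=\mathbf 0$, the null space is trivial, and ${\rm rk}_2(B)=|U|$. The only real content is verifying connectedness of $H$, which is essentially immediate from simplicity of $G$; tiny degenerate cases (such as $|W|=0$ forcing $|U|=1$) I would dispose of in a preliminary sentence.
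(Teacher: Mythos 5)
Your proposal is correct and follows essentially the same route as the paper: both compute the left null space of $B$ and use connectivity of $G$ (together with the degree bound on $W$) to force any solution of $x^{\rm T}B=0$ to be constant on $U$, then split on whether a degree-$1$ vertex of $W$ exists. Your auxiliary multigraph $H$ is just a more explicit packaging of the paper's observation that no $w\in W$ can have neighbors with differing $x$-values.
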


\begin{proof}
Let $x\in\F_2^{U}$ be a solution of the linear system $x^{\rm T} B = 0$. Let $U_i$ be the set of
vertices $u\in U$ such that $x_u=i$, for $i=0,1$. Note that no vertex $v\in W$ has neighbors
both in $U_0$ and $U_1$ (otherwise $(x^{\rm T} B)_v = 1$). Thus for $G$ to be connected either
$x = 0$ or $x = 1$. If there is a vertex of degree $1$ in $W$ then $x=1$ is not a solution
and hence ${\rm rk}_2(B)=|U|$. On the other hand if all vertices in $W$ have degree $2$
then $x=1$ is a solution and hence ${\rm rk}_2(B)=|U|-1$.
\end{proof}

Now we prove Lemma~\ref{lem:zr}.

\begin{proof}[Proof of Lemma~\ref{lem:zr}]
We will show that
\begin{equation}\label{yyh}
\rk(S) = |U|-\kappa'(S),
\end{equation}
where $\rk(S)$ is the rank (over $\F_2$) of $B$, the bipartite adjacency matrix of $(U\cup W,S)$,
and $\kappa'(S)$ is the number of pure connected components of $(U\cup W,S)$.

Note that $B$ has a block structure with a block for each connected component. The rank
is the sum of the ranks of the blocks. Equation~\eqref{yyh} now follows from Lemma~\ref{lbn}.
\end{proof}

We now lay groundwork for the proof of Lemma~\ref{lem:reduction}.
We use the following construction in the reduction. Given a graph
$H=(V_H,E_H)$ and a bipartite graph $\Upsilon=(U_\Upsilon\cup W_\Upsilon,E_\Upsilon)$
with a specific vertex $u \in U_\Upsilon$, we construct a bipartite graph $G$ from $H$ and
$\Upsilon$ as follows. Let $\hat{H}=(U_{\hat{H}}\cup W_{\hat{H}},E_{\hat{H}})$ be the
$2$-stretch of $H$, where $U_{\hat{H}}$ corresponds to the vertices of $H$.
For each vertex $v$ in $U_{\hat{H}}$ we identify $v$ with $u$ in a copy of $\Upsilon$ (thus we have
$|V_H|=|U_{\hat{H}}|$ copies of $\Upsilon$). We call
the graph $G$ the {\em stretch-sum} of $H$ and $(\Upsilon,u)$. Note that
if $W_\Upsilon$ contains only vertices of degree at most $2$ then the partition of $G$
containing $W_{\hat{H}}$ contains only vertices of degree at most $2$.

We define two functions related to $R_2'$.
\begin{definition}
Let $\lambda,\mu\in {\mathbb R}$. Let $\Upsilon = (U \cup W,E)$ be a bipartite graph with a specific vertex
$u \in U$. Assume that the max-degree in $W$ bounded by $2$. We define
\begin{equation}\label{eqZP}
Z'_p(\Upsilon;\lambda,\mu) = \sum_{S} \lambda^{-\kappa'(S)}\mu^{|S|},
\end{equation}
where the sum is over all $S\subseteq E$ such that $u$
is in a pure connected component of $(U \cup W,S)$, and $\kappa'(S)$
is the number of pure connected components of $(U \cup W,S)$.

Similarly, we define
\begin{equation}\label{eqZM}
Z'_m(\Upsilon;\lambda,\mu) = \sum_{S} \lambda^{-\kappa'(S)}\mu^{|S|},
\end{equation}
where the sum is over all sets $S\subseteq E$ such that $u$
is in a mixed connected component of $(U \cup W,S)$.
\end{definition}

The following lemma provides a connection between the random cluster
partition function $Z$ of $H$ and the $R_2'$ polynomial of $G$ for
rational $\lambda$ and $\mu$.

\begin{lemma}\label{lem:zz'}
Fix rational $\lambda \not\in \{0,1\}$ and rational $\mu \neq 0$.
Let $p$ be a prime such that $\lambda \in \Z^*_p$. Let $\Upsilon=(U\cup W,E)$
be a bipartite graph with a specific vertex $u \in U$, such that the max-degree
in $W$ bounded by $2$. Suppose $\Upsilon$ satisfies
\begin{equation}\label{eq:lemmod}
\begin{split}
X:=\lambda Z'_p(\Upsilon;\lambda,\mu) \not\equiv 0 \mod p,\\
Y:=Z'_m(\Upsilon;\lambda,\mu)+\lambda Z'_p(\Upsilon;\lambda,\mu) \equiv 0 \mod p.
\end{split}
\end{equation}
Let $G$ be the stretch-sum of $H=(V_H,E_H)$ and $(\Upsilon,u)$. Then
\begin{equation}\label{j10}
R'_2(G;\lambda,\mu) \equiv \lambda^{|V_H|\cdot|U|}X^{|V_H|}Z(H;1/\lambda-1,\mu^2) \mod p,
\end{equation}
where $Z(H;1/\lambda-1,\mu^2)$ is defined in~\eqref{eq:rc}.
\end{lemma}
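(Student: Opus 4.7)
The plan is as follows. By Lemma~\ref{lem:zr} applied to $G$ (whose $W$-part has max-degree $\le 2$), we have $R'_2(G;\lambda,\mu) = \lambda^{|V_H|\cdot|U|}\,\Sigma$ with $\Sigma := \sum_{S\subseteq E(G)}\lambda^{-\kappa'(S)}\mu^{|S|}$, so it suffices to show $\Sigma\equiv X^{|V_H|}Z(H;1/\lambda-1,\mu^2)\pmod p$. I parametrize each $S$ by an internal piece $S_i\subseteq E_\Upsilon$ for each copy, together with, for every edge $e=\{a,b\}\in E_H$, a configuration $\sigma_e\in\{00,10,01,11\}$ recording which of the two stretch-edges at $w_e$ are present. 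Setting $T:=\{e:\sigma_e=11\}\subseteq E_H$ and grouping by $T$, I rewrite $\Sigma=\sum_T\mu^{2|T|}M(T)$, reducing the lemma to the congruence $M(T)\equiv X^{|V_H|}(\lambda^{-1}-1)^{\kappa(T)}\pmod p$ for every $T$.

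Next I establish the combinatorial identity $\kappa'(S)=\sum_i\kappa'_{\mathrm{nh}}(S_i)+\#\{C\in\mathcal{C}(T):C\text{ pure}\}$, where $\kappa'_{\mathrm{nh}}(S_i)$ counts pure components of $(U_\Upsilon\cup W_\Upsilon,S_i)$ not containing the hub, $\mathcal{C}(T)$ is the set of connected components of $(V_H,T)$, and ``$C$ pure'' means the $G$-component containing the hubs of $C$ is pure. The key point is that 2-stretch edges never touch a $W_\Upsilon$-vertex, so pureness of the $C$-component in $G$ decouples into local conditions: (i) every hub in $C$ lies in a pure internal component of its copy (``state $P$''), and (ii) for every $e\notin T$ with an endpoint in $C$, the configuration $\sigma_e$ places no pendant edge on the $C$-side (otherwise the attached $w_e$ has degree $1$ in the $C$-component, spoiling purity).

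I then expand $\lambda^{-\#\{C\text{ pure}\}}=\prod_C\bigl(1+(\lambda^{-1}-1)[C\text{ pure}]\bigr)=\sum_{A\subseteq\mathcal{C}(T)}(\lambda^{-1}-1)^{|A|}\prod_{C\in A}[C\text{ pure}]$, so $M(T)=\sum_A(\lambda^{-1}-1)^{|A|}M_A(T)$, where $M_A(T)$ is the constrained sum that forces each $C\in A$ to be pure. Under that constraint the sum factorizes completely: each hub $v_i\in\bigcup A$ is forced into state $P$ and contributes $\lambda Z'_p=X$; each hub $v_i\notin\bigcup A$ is unconstrained and contributes $\lambda Z'_p+Z'_m=Y$; each non-$T$ edge $e$ contributes $1$ if both endpoints lie in $\bigcup A$ (only $\sigma_e=00$ avoids badening), $1+\mu$ if exactly one endpoint lies in $\bigcup A$, and $1+2\mu$ if neither. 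Because $Y\equiv 0\pmod p$, every $A$ with $\bigcup A\subsetneq V_H$ dies, leaving only $A=\mathcal{C}(T)$, for which every non-$T$ edge weighs $1$ and every hub weighs $X$, giving $M_{\mathcal{C}(T)}(T)=X^{|V_H|}$. Summing the resulting $M(T)\equiv(\lambda^{-1}-1)^{\kappa(T)}X^{|V_H|}\pmod p$ against $\mu^{2|T|}$ and restoring the prefactor $\lambda^{|V_H|\cdot|U|}$ yields the congruence~\eqref{j10}.

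The main obstacle I foresee is the clean verification of the local decoupling in the second paragraph, especially the case where a non-$T$ edge $e$ connects two \emph{distinct} components of $A$: one must argue that $\sigma_e=10$ would baden one of the two and $\sigma_e=01$ the other, forcing $\sigma_e=00$ and producing the edge factor $1$ that makes the $A=\mathcal{C}(T)$ term collapse to the clean $X^{|V_H|}$. With that subtlety in place, the factorization of $M_A(T)$ and the $Y\equiv 0\pmod p$ annihilation are immediate, and the rest is routine algebra.
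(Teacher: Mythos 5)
Your proof is correct and follows essentially the same route as the paper: both arguments decompose the sum over the stretch-layer into a product over connected components, identify the pure-component correction $(\lambda^{-1}-1)X^{k}$ versus the unconstrained factor $Y$, and use $Y\equiv 0\pmod p$ to annihilate everything except the all-pure term, yielding $X^{|V_H|}Z(H;1/\lambda-1,\mu^2)$. The only difference is bookkeeping: you sum over $T\subseteq E_H$ and absorb singleton stretch-edges into per-edge factors $1+\mu$, $1+2\mu$, whereas the paper sums over $S_0\subseteq E_{\hat H}$ and lets singleton stretch-edges create mixed components that are killed by a $Y$-factor.
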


\begin{proof}
Let $\hat{H}=(V_{\hat{H}},E_{\hat{H}})$ be the $2$-stretch of $H$.
Note that $\hat{H}$ is a subgraph of $G=(V_G,E_G)$. Let $S_0
\subseteq E_{\hat{H}}$. Let $\Lambda_{S_0}$ be a family of subsets
$S \subseteq E_G$ such that $S \cap E_{\hat{H}}=S_0$. Now, we
evaluate
\begin{equation}\label{eq:lemzz'1}
\sum_{S \in \Lambda_{S_0}} \lambda^{-\kappa'(S)}\mu^{|S|}
\end{equation}
modulo $p$.

\begin{claim}\label{clm:lemmod}
$$\eqref{eq:lemzz'1} \equiv
\begin{cases}
(\lambda^{-1}-1)^{\kappa(S_0)}\mu^{|S_0|}X^{|V_H|} \mod p & \mbox{if $(V_{\hat{H}},S_0)$ has
no mixed connected component,}\\
0 \mod p & \mbox{otherwise,}
\end{cases}
$$
where $\kappa(S_0)$ is the number of connected components of $(V_{\hat{H}},S_0)$.
\end{claim}

\begin{proof}[Proof of Claim~\ref{clm:lemmod}]
Equation~\eqref{eq:lemzz'1} can be rewritten as a product,
where each term in the product corresponds to a connected
component of $(V_{\hat{H}},S_0)$ (since each connected component
with the copies of $\Upsilon$ attached to it influences
$\kappa'(S)$ independently). Thus
\begin{equation*}
\eqref{eq:lemzz'1} = \prod_{C}\Phi_{C},
\end{equation*}
where for each connected component $C=(V_C,E_C)$ in $(V_{\hat{H}},S_0)$ such
that there are $k$ copies of $\Upsilon$ (we refer to the copies
$\Upsilon_1,\dots,\Upsilon_k$ and to their special vertices $u_1,\dots,u_k$)
attached to it,
\begin{equation}\label{j2}
\Phi_{C} = \sum_{S_1 \subseteq E_{\Upsilon_1}} \cdots \sum_{S_k \subseteq E_{\Upsilon_k}}
\lambda^{-\kappa'(\bigcup_{i \in [k]}S_i \cup E_C)}\mu^{|E_C|+\sum_{i \in [k]}|S_i|}.
\end{equation}
Let $A_{i,0}$ be the set of $S_i$ such that $u_i$ is in a mixed component of $(V_{\Upsilon_i},S_i)$ and
let $A_{i,1}=2^{E_{\Upsilon_i}}\setminus A_{i,0}$. Equation~\eqref{j2} can be written as follows
\begin{equation}\label{j6}
\Phi_C=\sum_{x_1=0}^1\cdots\sum_{x_k=0}^1 \sum_{S_1 \in A_{1,x_1}} \cdots \sum_{S_k \in A_{k,x_k}}
\lambda^{-\kappa'(\bigcup_{i \in [k]}S_i \cup E_C)}\mu^{|E_C|+\sum_{i \in [k]}|S_i|}.
\end{equation}
We have
\begin{equation*}
\kappa'\left(\bigcup_{i \in [k]}S_i \cup E_C\right)=\sum_{i \in [k]} \kappa_i'(S_i) - \sum_{i=1}^k x_i + \ell,
\end{equation*}
where $\ell=1$ if $x_1=\cdots=x_k=1$ and $C=(V_C,E_C)$ is a pure connected component of $(V_{\hat{H}},S_0)$,
and $\ell=0$
otherwise. Thus
\begin{equation}\label{j7}
\begin{split}
\sum_{S_1 \in A_{1,x_1}} \cdots \sum_{S_k \in A_{k,x_k}}
\lambda^{-\kappa'(\bigcup_{i \in [k]}S_i \cup E_C)}\mu^{|E_C|+\sum_{i \in [k]}|S_i|}=\\
\lambda^{-\ell}\mu^{|E_C|} \sum_{S_1 \in A_{1,x_1}} \cdots \sum_{S_k \in A_{k,x_k}}
\prod_{i=1}^k \lambda^{-\kappa'_i(S_i)+x_i}\mu^{|S_i|} = \lambda^{-\ell}\mu^{|E_C|} X^{k'} (Y-X)^{k-k'},
\end{split}
\end{equation}
where $k'=x_1+\dots+x_k$.

Plugging~\eqref{j7} into~\eqref{j6} we obtain
\begin{equation}\label{j8}
\Phi_C = \mu^{|E_C|} Y^k + L,
\end{equation}
where $L=(1/\lambda-1)\mu^{|E_C|} X^k$ if $C$ is a pure component of $(V_{\hat{H}},S_0)$ and $0$ otherwise.
Evaluating~\eqref{j8} modulo $p$ (using~\eqref{eq:lemmod}) we obtain
$$
\Phi_C \equiv \Big\{\begin{array}{ll}
0 \mod p& \mbox{if $C$ is a mixed component of $(V_{\hat{H}},S_0)$},\\
(1/\lambda-1)\mu^{|E_C|} X^k \mod p & \mbox{otherwise}.
\end{array}
$$
Thus \eqref{eq:lemzz'1} is zero modulo $p$ if there is a mixed component $C$ in $(V_{\hat{H}},S_0)$.
Assume now
that all components of $(V_{\hat{H}},S_0)$ are pure. The total number
of edges in the components is $|S_0|$, the total number of copies of $\Upsilon$ in the components
is $|V_H|$, and hence
$$
\eqref{eq:lemzz'1} \equiv (1/\lambda - 1)^{\kappa(S_0)} \mu^{|S_0|} X^{|V_H|} \mod p.
$$
\end{proof}

Now we use Claim~\ref{clm:lemmod} to prove \eqref{j10}. Note that by Lemma~\ref{lem:zr},
\begin{equation}\label{eq:lemzz'2}
R'_2(G;\lambda,\mu) = \lambda^{|V_H|\cdot|U|}\sum_{S_0 \subseteq E_{\hat{H}}} \sum_{S \in \Lambda_{S_0}} \lambda^{-\kappa'(S)}\mu^{|S|},
\end{equation}
where $\kappa'(S)$ is the number of pure connected components of $(V_G,S)$.

Note that by Claim~\ref{clm:lemmod} if $S_0$ contains a mixed component then the inner sum
in~\eqref{eq:lemzz'2} is $0$ modulo~$p$. Thus to evaluate~\eqref{eq:lemzz'2} modulo $p$ it
is enough to sum over $S_0$ which contain only pure components. Each such $S_0$ is obtained
from exactly one $S'\subseteq E_H$ by $2$-stretching. Note $(V_{\hat{H}},S_0)$ has the same
number of connected components as $(V_H,S')$, and $|S_0|=2|S'|$. Thus
\begin{eqnarray*}
R'_2(G;\lambda,\mu) & \equiv & \lambda^{|V_H|\cdot|U|}X^{|V_H|}\sum_{S' \subseteq E_H}
(1/\lambda-1)^{\kappa(S')}\mu^{2|S'|}\\
& \equiv & \lambda^{|V_H|\cdot|U|}X^{|V_H|}Z(H;1/\lambda-1,\mu^2) \mod p.
\end{eqnarray*}
\end{proof}

We use different $\Upsilon$ for different values of $\mu$ in the
reduction. When $\mu \neq -2$ we let $\Upsilon_1$ be the
bipartite graph with bipartition $U=\{u_0,u_1\}$, $W=\{v_i \,|\, 0
\leq i \leq k\}$ and $k+2$ edges: edge $\{u_0,v_0\}$, and an edge
between $u_1$ and each $v_i$, for $0 \leq i \leq k$. The specific
vertex of $\Upsilon_1$ is $u_0$. By elementary counting, we have
\begin{equation}\label{eq:gadget}
\begin{split}
\lambda Z'_p(\Upsilon_1;\lambda,\mu) = (\mu+1)^{k+1}+\mu^2+\lambda^{-1}-1,\\
\lambda Z'_p(\Upsilon_1;\lambda,\mu)+Z'_m(\Upsilon_1;\lambda,\mu) = (\mu+1)((\mu+1)^{k+1}+\lambda^{-1}-1).
\end{split}
\end{equation}

When $\mu=-2$ we let $\Upsilon_2$ be the bipartite graph with
bipartition $U=\{u_0,u_1,u_2\}$, $W=\{v_i \,|\, 0 \leq i \leq
2k\}$ and $4k+2$ edges: $\{u_0,v_0\}$, $\{u_1,v_0\}$, and a
complete bipartite graph between $U\setminus\{u_0\}$ and
$W\setminus\{v_0\}$. The specific vertex of $\Upsilon_2$ is $u_0$.
By elementary counting, we have
\begin{equation}\label{eq:gadget2}
\begin{split}
\lambda Z'_p(\Upsilon_2;\lambda,-2) = \lambda^{-2}+5^{2k}\lambda^{-1}-3+3\cdot5^{2k}+\lambda^{-1},\\
\lambda Z'_p(\Upsilon_2;\lambda,-2)+Z'_m(\Upsilon_2;\lambda,-2) = -\lambda^{-2}-5^{2k}\lambda^{-1}-1+5^{2k}+3\lambda^{-1}.
\end{split}
\end{equation}

Fix rational $\lambda \notin \{0,1\}$ and $\mu \neq 0$. We want to
find sufficiently many primes such that there is some integer $k$
for which~\eqref{eq:gadget} satisfies~\eqref{eq:lemmod} (when $\mu
\neq -2$) or~\eqref{eq:gadget2} satisfies~\eqref{eq:lemmod} (when
$\mu=-2$). We need the following result on the density of primes.

\begin{lemma}[\cite{MR1802718,MR1838084,Stevenhagen}]\label{lem:density}
Let $r,q \in \Q^*$ and $q \neq \pm 1$. The density (inside the set of all primes) of primes $p$ such that
\begin{equation*}
q^k \equiv r \mod p,
\end{equation*}
can be satisfied for some integer $k$ is a positive constant, assuming GRH.
\end{lemma}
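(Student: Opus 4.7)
The plan is to reformulate the congruence condition Kummer-theoretically and invoke the GRH-conditional effective Chebotarev density theorem, following the framework Hooley devised for Artin's primitive root conjecture, as extended by Lenstra, Schinzel, Stevenhagen, and others to subgroup-membership questions.

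First I would rephrase the condition. Discarding the finitely many primes dividing the numerator or denominator of $q$ or $r$, the statement ``$q^k\equiv r\pmod p$ for some integer $k$'' is equivalent to $r$ lying in the cyclic subgroup $\langle q\rangle\subseteq\F_p^*$. Since $\langle q\rangle$ is the unique subgroup of $\F_p^*$ of order $\mathrm{ord}_p(q)$, this is in turn equivalent to $\mathrm{ord}_p(r)\mid\mathrm{ord}_p(q)$, which decomposes prime-by-prime: for every rational prime $\ell$, one needs $v_\ell(\mathrm{ord}_p(r))\le v_\ell(\mathrm{ord}_p(q))$.

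Next I would encode each $\ell$-local condition as a splitting condition in a Kummer tower. For each prime $\ell$ and each $n\ge 1$, set $K_{\ell,n}=\Q(\zeta_{\ell^n},q^{1/\ell^n},r^{1/\ell^n})$. The pair $(v_\ell(\mathrm{ord}_p(q)),v_\ell(\mathrm{ord}_p(r)))$ is determined by the Frobenius conjugacy class of $p$ in $\mathrm{Gal}(K_{\ell,n}/\Q)$, and the desired inequality picks out a specific union of conjugacy classes. The Kummer structure of these Galois groups is standard for all sufficiently large $\ell$; this is where the hypothesis $q\neq\pm 1$ enters, since otherwise $q^{1/\ell^n}$ would collapse into the cyclotomic field. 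I would then run inclusion--exclusion over $\ell$, truncated at some parameter $Y=Y(x)\to\infty$. For each $\ell\le Y$, the Lagarias--Odlyzko effective form of Chebotarev (valid under GRH) counts the primes $p\le x$ whose Frobenius lies in the prescribed classes, with an error of size $O(x^{1/2}\log(d_{K_{\ell,n}} x))$. The tail $\ell>Y$ is controlled by a Brun--Titchmarsh style estimate showing that primes $p$ for which some large $\ell>Y$ divides $\mathrm{ord}_p(q)$ but not $\mathrm{ord}_p(r)$ contribute density $O(1/Y)$.

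The principal obstacle is showing that the resulting density, expressible as an infinite product $\prod_\ell \delta_\ell$ over primes, is strictly positive rather than merely well-defined. Because $q\ne\pm 1$, for all sufficiently large $\ell$ the factor $\delta_\ell$ differs from $1$ by $O(1/\ell^2)$ (coming from the full expected degree of $K_{\ell,1}/\Q$), so the product converges to a nonzero limit. A finite case analysis at the small primes rules out accidental vanishing coming from cyclotomic entanglement and yields an explicit conditional formula. Combined with the GRH-based error control above, this delivers the positive density claimed in the lemma.
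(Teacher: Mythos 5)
The paper does not prove Lemma~\ref{lem:density} at all: it is imported as a black box from the cited literature (this is the ``two-variable Artin conjecture'' of Moree and Stevenhagen, resolved under GRH), so there is no internal proof to compare against. Your outline is a faithful summary of how those sources actually argue: reduce to the subgroup-membership condition $r\in\langle q\rangle\subseteq\F_p^*$, equivalently $\mathrm{ord}_p(r)\mid\mathrm{ord}_p(q)$, encode the $\ell$-adic valuation conditions as Frobenius conditions in the Kummer towers $\Q(\zeta_{\ell^n},q^{1/\ell^n},r^{1/\ell^n})$, run a truncated inclusion--exclusion with the Lagarias--Odlyzko effective Chebotarev theorem under GRH, and control the tail. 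You also correctly identify the genuinely delicate point, namely that the resulting Euler-type product is positive and not merely convergent; your estimate $1-\delta_\ell=O(1/\ell^2)$ is right, and the residual danger is only the finitely many small $\ell$ where entanglement between the cyclotomic and radical layers can occur. Two remarks. First, what you have is a correct proof \emph{plan} rather than a proof: the entanglement analysis at small $\ell$ and the exact conjugacy-class bookkeeping are precisely where the cited papers do their real work, and you assert rather than carry out those steps. Second, for the qualitative statement actually needed here (positive density, not an exact density formula) there is a shorter route you could have taken: replace $q$ by a maximal root $q_0$ with $q=q_0^e$, apply Hooley's GRH-conditional primitive root theorem to get a positive density of primes for which $\langle q_0\rangle=\F_p^*$, so that $\langle q\rangle$ has index dividing $e$ in $\F_p^*$, and then impose the single extra splitting condition that $r$ be an $e$-th power modulo $p$; this intersection is again a Chebotarev condition of positive density and avoids the infinite product entirely.
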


Lemma~\ref{lem:density} immediately yields the following two corollaries.

\begin{corollary}\label{cor:condition}
Fix rational $\lambda \not\in \{0,1\}$ and rational $\mu \not\in
\{0,-2\}$. The density (inside the set of all primes) of the
primes $p$ such that there is an integer $k$ for
which~\eqref{eq:gadget} satisfies~\eqref{eq:lemmod} is a positive
constant, assuming GRH.
\end{corollary}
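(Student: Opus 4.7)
The plan is to reduce the two conditions in~\eqref{eq:lemmod}, instantiated for $\Upsilon_1$ via~\eqref{eq:gadget}, to the single multiplicative congruence
\begin{equation*}
(\mu+1)^{k+1} \equiv 1-\lambda^{-1} \mod p,
\end{equation*}
and then apply Lemma~\ref{lem:density} with $q=\mu+1$ and $r=1-\lambda^{-1}$. First, I would factor $Y\equiv 0\mod p$: by~\eqref{eq:gadget} it reads $(\mu+1)\bigl((\mu+1)^{k+1}+\lambda^{-1}-1\bigr)\equiv 0\mod p$. The factor $\mu+1$ is zero modulo only finitely many primes, so outside this finite set the condition $Y\equiv 0\mod p$ is equivalent to the displayed congruence.

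Next, I would substitute $(\mu+1)^{k+1}\equiv 1-\lambda^{-1}\mod p$ into the expression for $X$ in~\eqref{eq:gadget}. The cancellation is clean: $X\equiv (1-\lambda^{-1})+\mu^2+\lambda^{-1}-1=\mu^2\mod p$. Thus $X\not\equiv 0\mod p$ reduces to $p\nmid\mu$, which fails for only finitely many primes. So, away from a finite set of primes, condition~\eqref{eq:lemmod} holds for some $k$ iff the displayed congruence does. When $\mu\neq -1$, the assumptions $\mu\notin\{0,-2\}$ and $\lambda\notin\{0,1\}$ give $q=\mu+1\in\Q^*\setminus\{\pm 1\}$ and $r=1-\lambda^{-1}\in\Q^*$, so Lemma~\ref{lem:density} applies and produces a positive density of primes with such a $k$. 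A shift of $k$ by the multiplicative order of $\mu+1$ modulo $p$ ensures $k\geq 0$, as required by the definition of $\Upsilon_1$.

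Finally, I would dispose of the edge case $\mu=-1$ separately. There~\eqref{eq:gadget} gives $Y=0$ as an identity (any $k$ works) and $X=\mu^2+\lambda^{-1}-1=\lambda^{-1}$, so~\eqref{eq:lemmod} holds at every prime $p$ for which $\lambda\in\Z_p^*$, a cofinite and hence positive-density set. There is no real obstacle in this argument: the only substantive step is the algebraic simplification $X\equiv \mu^2\mod p$ on the locus where $Y\equiv 0$, and all analytic-number-theoretic depth is packaged inside Lemma~\ref{lem:density}. The one bookkeeping point is that the finite sets of excluded primes (those dividing numerators or denominators of $\lambda$, $\mu$, or $\mu+1$) must be discarded from the positive-density set supplied by the lemma, but this still leaves a positive-density set of primes.
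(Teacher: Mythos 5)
Your proposal is correct and follows essentially the same route as the paper: the $\mu=-1$ case is handled separately (where $Y=0$ identically and $X=\lambda^{-1}$), and otherwise the two conditions in~\eqref{eq:lemmod} are reduced to the single congruence $(\mu+1)^{k+1}\equiv 1-\lambda^{-1}\mod p$ together with $p\nmid\mu$, to which Lemma~\ref{lem:density} applies since $\mu+1\neq\pm1$ and $1-\lambda^{-1}\neq 0$. The only difference is that you spell out the substitution $X\equiv\mu^2\mod p$, which the paper leaves implicit behind the phrase ``it is sufficient to have''.
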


\begin{proof}
If $\mu=-1$, then to make~\eqref{eq:gadget} satisfy~\eqref{eq:lemmod}, it is sufficient to have
$\lambda^{-1} \not\equiv 0 \mod p$. Thus, for all but a constant numbers of primes,
and for all positive integers $k$, \eqref{eq:gadget} satisfies~\eqref{eq:lemmod}.

Now assume $\mu \neq -1$. To make~\eqref{eq:gadget} satisfy~\eqref{eq:lemmod}, it is sufficient to have
\begin{equation}\label{eq:condition}
\begin{split}
(\mu+1)^{k+1} +\lambda^{-1}-1 \equiv 0 \mod p,\\
\mu \not\equiv 0 \mod p.
\end{split}
\end{equation}
The corollary follows from Lemma~\ref{lem:density} (and the fact that the number of
primes such that $\mu \equiv 0 \mod p$ is finite).
\end{proof}

\begin{corollary}\label{cor:condition-2}
Fix rational $\lambda \not\in \{0,1\}$ and rational $\mu =-2$. The
density (inside the set of all primes) of the primes $p$ such that
there is an integer $k$ for which~\eqref{eq:gadget2}
satisfies~\eqref{eq:lemmod} is a positive constant, assuming GRH.
\end{corollary}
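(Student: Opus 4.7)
The plan mirrors the proof of Corollary~\ref{cor:condition}, using the gadget $\Upsilon_2$ and the evaluations~\eqref{eq:gadget2} in place of $\Upsilon_1$ and~\eqref{eq:gadget}. The condition $Y \equiv 0 \pmod{p}$ from the second line of~\eqref{eq:lemmod}, after collecting powers of $5^{2k}$ and clearing $\lambda^{-2}$, is equivalent (for $p$ coprime to $\lambda(\lambda-1)$) to
\[
25^k \equiv r \pmod{p}, \qquad r := \frac{\lambda^2 - 3\lambda + 1}{\lambda(\lambda - 1)}.
\]
Since the roots of $\lambda^2 - 3\lambda + 1$ are $(3 \pm \sqrt{5})/2$ and are irrational, while $\lambda \in \Q \setminus \{0,1\}$, we have $r \in \Q^*$. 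Because also $25 \neq \pm 1$, Lemma~\ref{lem:density} (with $q=25$) produces a positive-density set $\mathcal{P}$ of primes $p$ for which the displayed congruence is solvable in some integer $k$.

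It remains to rule out the possibility that $X$ vanishes modulo $p$ for such $p$. Summing the two lines of~\eqref{eq:gadget2} gives the identity
\[
X + Y = 4\bigl(5^{2k} + \lambda^{-1} - 1\bigr).
\]
Substituting $5^{2k} \equiv r$ and simplifying via $r + \lambda^{-1} - 1 = (\lambda^2 - 3\lambda + 1 - (\lambda-1)^2)/(\lambda(\lambda-1)) = -1/(\lambda - 1)$ yields
\[
X \equiv \frac{-4}{\lambda - 1} \pmod{p}
\]
whenever $p \in \mathcal{P}$ is coprime to $\lambda(\lambda-1)$. Hence $X \not\equiv 0 \pmod{p}$ automatically, except for the finitely many primes dividing $2$ or appearing in the numerator or denominator of $\lambda$ or of $\lambda - 1$. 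Removing these finitely many exceptions from $\mathcal{P}$ still leaves a set of positive density, completing the proof.

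The only substantive step beyond Corollary~\ref{cor:condition} is the algebraic simplification that collapses $X \bmod p$ to the constant $-4/(\lambda-1)$. Fortunately this is an identity rather than a condition on $\lambda$, so once it is verified the conclusion falls out immediately from Lemma~\ref{lem:density}, exactly as in the $\mu \neq -2$ case. The only real risk would have been that $r$ turned out to be $0$, forcing us into a separate argument; the fact that $\lambda^2 - 3\lambda + 1$ has irrational roots is precisely what rules this out for rational $\lambda$.
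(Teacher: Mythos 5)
Your proposal is correct and follows essentially the same route as the paper: rewrite the condition $Y\equiv 0\pmod p$ as $25^k\equiv r\pmod p$ with $r=(\lambda^2-3\lambda+1)/(\lambda(\lambda-1))\in\Q^*$, invoke Lemma~\ref{lem:density} with base $25$, and then observe that $X\not\equiv 0$ fails for only finitely many primes. The paper phrases that last step as ``$r\equiv 1-\lambda^{-1}$ forces $\lambda^{-1}\equiv 0$,'' while you compute $X\equiv -4/(\lambda-1)$ directly via the identity $X+Y=4(5^{2k}+\lambda^{-1}-1)$; these are the same algebraic fact, and your calculations check out.
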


\begin{proof}
We claim that if
\begin{equation}\label{eq:condition-2}
\begin{split}
(\lambda^{-2}-3\lambda^{-1}+1)(1-\lambda^{-1})^{-1} \equiv 25^k \mod p,\\
1-\lambda^{-1} \not\equiv 25^k \mod p
\end{split}
\end{equation}
then \eqref{eq:gadget2} satisfies~\eqref{eq:lemmod}. The first equation in \eqref{eq:condition-2}
makes the second equation in~\eqref{eq:lemmod} satisfied; and the first and the second equation in
\eqref{eq:condition-2} make the first equation in \eqref{eq:lemmod} satisfied.

By Lemma~\ref{lem:density} the density of primes that make the first equation in~\eqref{eq:condition-2}
satisfied is positive. Solving
$$(\lambda^{-2}-3\lambda^{-1}+1)(1-\lambda^{-1})^{-1}\equiv
1-\lambda^{-1}\mod p,
$$
we obtain $\lambda^{-1}\equiv 0\mod p$ and hence the second equation in~\eqref{eq:condition-2} is
automatically satisfied.
\end{proof}

\begin{proof}[Proof of Lemma~\ref{lem:reduction}]
Let $x,y,\lambda,\mu$ be rational numbers such that~\eqref{j20} is satisfied. Suppose $\lambda = a/b$ and
$\mu=c/d$ with $a,b,c,d \in \Z^*$, $\mathrm{gcd}(a,b)=1$, and $\mathrm{gcd}(c,d)=1$.

Suppose we want to evaluate the Tutte polynomial for $H=(V_H,E_H)$ at $x,y$.
Let $n:=|V_H|$ and $m:=|E_H|$. By~\eqref{eq:tutte_rc}, to evaluate
$T(H;x,y)$, we can instead evaluate $Z(H;1/\lambda-1,\mu^2)$
(note that $\lambda\neq 1$ implies $x\neq 1$ and $y\neq 1$ and hence \eqref{eq:tutte_rc} applies).
Recall that
\begin{equation}\label{eq:rationalrc}
Z(H;1/\lambda-1,\mu^2) = \sum_{S \subseteq E_H} (1/\lambda-1)^{\kappa(S)}\mu^{2|S|} = \frac{L}{a^{n}d^{2m}},
\end{equation}
where
$$L=\sum_{S \subseteq E_H}
(b-a)^{\kappa(S)}a^{n-\kappa(S)}c^{2|S|}d^{2m-2|S|}.$$
Note that $L
\in \Z$ and $|L| \leq 2^m |b-a|^n a^n c^{2m} d^{2m}$.

We now prove the case $\mu \neq -2$. For the case $\mu = -2$, the
proof is similar (by using $\Upsilon_2$ and
Corollary~\ref{cor:condition-2}).

We choose $n^3$ primes $p_1,\ldots,p_{n^3}$ such that
\begin{itemize}
\item $a,b,c,d \not\equiv 0 \mod p_i$, and $a+b \not\equiv 0 \mod p_i$, for each $i \in [n^3]$;
\item there is some integer $k$ for which \eqref{eq:condition} is satisfied with $p=p_i$, for each $i \in [n^3]$;
\item $p_i=O(n^4)$ for $i \in [n^3]$; and
\item $\prod_{i=1}^{n^3}p_i > 2^{m+1}|b-a|^n a^n c^{2m} d^{2m}$.
\end{itemize}
By Corollary~\ref{cor:condition}, these primes exist. We can find them in time polynomial in $n$ by exhaustive search.

For each $p_i$, let $0<k_i<p_i$ be an integer for
which~\eqref{eq:gadget} satisfies~\eqref{eq:lemmod} with $p=p_i$
(by Fermat's little Theorem, if $k_i$ is a solution, then
$k_i+t(p_i-1)$ is a solution as well, for every $t \in \Z$).
Again, we can find them in time polynomial in $n$ by exhaustive
search.

We use $\Upsilon_1$ with $k=k_i$ as above. Let $G_i$ be the
stretch-sum of $H$ and $(\Upsilon_1,u_0)$ as above. Note that
$G_i$ has size polynomial in $n$ since $k_i = O(n^4)$. By
Lemma~\ref{lem:zz'} and~\eqref{eq:rationalrc}, we have
\begin{equation*}
W \equiv a^n d^{2m}\lambda^{-2n} X^{-n} R_2'(G_i;\lambda,\mu) \mod p_i,
\end{equation*}
where $X=\lambda Z'_p(\Upsilon_1;\lambda,\mu)$. We can
make a query to the oracle to obtain the rational number
$R_2'(G_i;\lambda,\mu)$ and thus can compute $L \mod p_i$ in
polynomial time for each $i \in [n^3]$. By the Chinese remainder
theorem, we can compute $L$ in time polynomial in $n$ (see,
e.\,g.,~\cite{MR1406794}, p.106).
\end{proof}

\subsection{Reducing \#{\sc PBIS} to the $R'_2$ polynomial (proof of Theorem~\ref{thm:pbis})}\label{sec:thmpbis}

Now we show the connection between \#{\sc PBIS} and the $R_2'$ polynomial;
the proof of Theorem~\ref{thm:pbis} is similar to the proof of the
high-temperature expansion of the Ising model (see,
e.\,g.,~\cite{MR1237164}).

\begin{proof}[Proof of Theorem~\ref{thm:pbis}]
\begin{eqnarray}
\#\mathrm{PBIS}(G;\eta) & = & \sum_{\sigma:U \cup W \rightarrow \{0,1\}}
(1+\eta)^{w(\sigma)}(1-\eta)^{|E|-w(\sigma)}\nonumber\\
& = & \sum_{\sigma:U \cup W \rightarrow \{0,1\}} \prod_{\{u,v\} \in E}
(1+\eta\chi(\sigma(u),\sigma(v))),\label{eq:pfpbis1}
\end{eqnarray}
where
\begin{equation}\label{choo}
\chi(\sigma(u),\sigma(v))=\Big\{\begin{array}{rl}
 1 & \mbox{if $\sigma(u)=\sigma(v)=1$}\\
-1 & \mbox{otherwise}.
\end{array}
\end{equation}

Let
$$\Psi_{S,\sigma_1,\sigma_2}:=\prod_{\{u,v\} \in S}
\chi(\sigma_1(u),\sigma_2(v))\quad\quad\mbox{and}\quad\quad
\Psi_{S,\sigma_1}:=\sum_{\sigma_2: W \rightarrow \{0,1\}}
\Psi_{S,\sigma_1,\sigma_2}.$$
Expanding the product in \eqref{eq:pfpbis1} and changing the order of summation yields
\begin{eqnarray}
\eqref{eq:pfpbis1} & = &\sum_{\sigma:U \cup W \rightarrow \{0,1\}} \sum_{S \subseteq E} \prod_{\{u,v\} \in S} \eta\chi(\sigma(u),\sigma(v))\nonumber\\
& = & \sum_{S \subseteq E} \eta^{|S|} \sum_{\sigma:U \cup W \rightarrow \{0,1\}} \prod_{\{u,v\} \in S} \chi(\sigma(u),\sigma(v))\nonumber\\
& = & \sum_{S \subseteq E} \eta^{|S|} \sum_{\sigma_1:U \rightarrow \{0,1\}} \Psi_{S,\sigma_1}.\label{eq:pfpbis2}
\end{eqnarray}

Let $N_S(v)$ denote the set of neighbors of $v$ in the subgraph $(U \cup W,S)$. Fix $S$ and
$\sigma_1: U \rightarrow \{0,1\}$. We say that a pair $S,\sigma_1$ is good if for every $v \in W$
the number of vertices $u \in N_S(v)$ such that $\sigma_1(u)=1$ is even. A pair which
is not good will be called bad.

\begin{claim}\label{clm:pair}
$\Psi_{S,\sigma_1}=2^{|W|}(-1)^{|S|}$ if the pair $S,\sigma_1$ is good; and $\Psi_{S,\sigma_1}=0$ if the pair $S,\sigma_1$
is bad.
\end{claim}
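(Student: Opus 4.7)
The plan is to factorize the sum over $\sigma_2:W\to\{0,1\}$ as a product of independent local contributions, one per vertex $v\in W$, and then evaluate each local contribution by a small parity calculation. This is essentially the character-sum trick underlying the high-temperature expansion of the Ising model that the authors already refer to.

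First, I would observe that each $\sigma_2(v)$ appears in $\Psi_{S,\sigma_1,\sigma_2}$ only through the edges of $S$ incident to $v$, so the sum over $\sigma_2$ factorizes:
$$\Psi_{S,\sigma_1}=\prod_{v\in W}\Phi_v,\qquad \Phi_v:=\sum_{b\in\{0,1\}}\prod_{u\in N_S(v)}\chi(\sigma_1(u),b).$$

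Next, I would evaluate $\Phi_v$ directly from the definition \eqref{choo}. Write $d_v:=|N_S(v)|$ and $n_v:=|\{u\in N_S(v):\sigma_1(u)=1\}|$. The $b=0$ term contributes $(-1)^{d_v}$, since $\chi(a,0)=-1$ for every $a$. The $b=1$ term contributes $(-1)^{d_v-n_v}$, since $\chi(a,1)$ is $+1$ on exactly the $n_v$ vertices $u$ with $\sigma_1(u)=1$ and $-1$ on the other $d_v-n_v$ vertices. Hence
$$\Phi_v=(-1)^{d_v}\bigl(1+(-1)^{n_v}\bigr),$$
which equals $2(-1)^{d_v}$ when $n_v$ is even and $0$ when $n_v$ is odd.

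Combining the factors gives the claim. If the pair $S,\sigma_1$ is bad, some $n_v$ is odd, killing $\Phi_v$ and hence $\Psi_{S,\sigma_1}$. If the pair is good, every $\Phi_v$ equals $2(-1)^{d_v}$, and using $\sum_{v\in W}d_v=|S|$ (each edge of $S$ is counted once at its $W$-endpoint) we obtain $\Psi_{S,\sigma_1}=2^{|W|}(-1)^{|S|}$. There is no genuine obstacle here; the only bookkeeping that requires care is tracking the sign $(-1)^{d_v}$ coming from the $b=0$ term and recognizing $\sum_v d_v=|S|$ in the final product.
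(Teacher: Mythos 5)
Your proof is correct, and it is a genuinely different organization of the argument from the paper's. You factorize $\Psi_{S,\sigma_1}=\prod_{v\in W}\Phi_v$ into independent local sums, one per $W$-vertex, and evaluate each $\Phi_v=(-1)^{d_v}\bigl(1+(-1)^{n_v}\bigr)$ by a direct parity computation; the claim then falls out of the product, with the identity $\sum_{v\in W}d_v=|S|$ supplying the global sign. The paper instead works with the unfactored sum over $\sigma_2:W\to\{0,1\}$: in the bad case it pairs up assignments $\sigma_2,\sigma_2'$ differing only at a vertex $v$ with $n_v$ odd and shows the paired terms cancel, and in the good case it observes that every term $\Psi_{S,\sigma_1,\sigma_2}$ equals $(-1)^{|S|}$ (flipping any $\sigma_2(v)$ changes the sign of an even number of factors), so the $2^{|W|}$ identical terms sum to $2^{|W|}(-1)^{|S|}$. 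The two arguments exploit the same parity fact, but your factorization handles both cases uniformly and localizes the bookkeeping to a single vertex, at the small cost of having to track the sign $(-1)^{d_v}$ and reassemble it into $(-1)^{|S|}$ at the end; the paper's pairing argument avoids that reassembly because the sign $(-1)^{|S|}$ is read off directly from any single term. Both are complete; yours is the more standard ``character-sum'' presentation and generalizes more readily (e.g.\ it is the computation one would do to prove Theorem~\ref{thm:pbis} vertex by vertex), while the paper's is marginally shorter for the good case.
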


\begin{proof}[Proof of Claim~\ref{clm:pair}]
Suppose that $\sigma_2$ and $\sigma_2'$ differ only in
the value assigned to $v\in W$.
For every $u \in N_S(v)$ we have
$$\chi(\sigma_1(u),\sigma_2(v))=\Big\{\begin{array}{rl}
  \chi(\sigma_1(u),\sigma'_2(v)) & \mbox{if}\ \sigma_1(u)=0 \\
  -\chi(\sigma_1(u),\sigma'_2(v)) & \mbox{if}\ \sigma_1(u)=1.
\end{array}$$
Hence,
$$
\Psi_{S,\sigma_1,\sigma_2}=
\Big\{\begin{array}{rl}
  -\Psi_{S,\sigma_1,\sigma'_2} & \mbox{if}\ |\{u \in N_S(v) \,|\, \sigma_1(u)=1\}|\ \mbox{is odd}, \\
  \Psi_{S,\sigma_1,\sigma'_2} & \mbox{otherwise}.
\end{array}
$$
If the pair $S,\sigma_1$ is bad, then there is a vertex $v \in W$
such that $|\{u \in N_S(v) \,|\, \sigma_1(u)=1\}|$ is odd. We
can partition the $W\rightarrow\{0,1\}$ mappings into pairs
$\sigma_2,\sigma'_2$ that differ only in the label of $v$.
For each pair, we have
$\Psi_{S,\sigma_1,\sigma_2}+\Psi_{S,\sigma_1,\sigma'_2}=0$, and
thus $\Psi_{S,\sigma_1}=0$.

If the pair $S,\sigma_1$ is good, then each
$\Psi_{S,\sigma_1,\sigma_2}$ contributes the same value
$(-1)^{|S|}$. Since there are $2^{|W|}$ mappings from $W$ to
$\{0,1\}$ we have $\Psi_{S,\sigma_1}=2^{|W|}(-1)^{|S|}$.
\end{proof}

\begin{claim}\label{clm:rk}
Fix $S \subseteq E$, the number of $\sigma_1: U \rightarrow \{0,1\}$ such that the pair
$S,\sigma_1$ is good is $2^{|U|-\rk(S)}$, where $\rk(S)$ is the rank (over $\F_2$) of the bipartite
adjacent matrix of $(U \cup W,S)$.
\end{claim}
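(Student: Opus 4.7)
The plan is to translate the combinatorial ``good pair'' condition into a single linear-algebraic equation over $\F_2$ and then invoke rank-nullity.

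First, I would set up the dictionary. View $\sigma_1 \colon U \to \{0,1\}$ as a column vector $\sigma_1 \in \F_2^{|U|}$, and let $B_S$ denote the $|U|\times |W|$ bipartite adjacency matrix of $(U\cup W, S)$, so that $(B_S)_{u,v}=1$ iff $\{u,v\}\in S$. Then for each $v\in W$, the $v$-th coordinate of $\sigma_1^{\rm T} B_S$ is exactly
\begin{equation*}
\sum_{u \in U:\, \{u,v\}\in S} \sigma_1(u) \equiv \big|\{u\in N_S(v) : \sigma_1(u)=1\}\big| \pmod{2}.
\end{equation*}

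Second, I would observe that the pair $(S,\sigma_1)$ is good, by definition, exactly when every coordinate of $\sigma_1^{\rm T} B_S$ vanishes mod $2$, i.e.\ when $\sigma_1^{\rm T} B_S \equiv 0 \pmod{2}$. Hence the set of good $\sigma_1$ (with $S$ fixed) is precisely the left null space of $B_S$ over $\F_2$.

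Finally, by rank-nullity the left null space has dimension $|U| - \rk(B_S) = |U|-\rk(S)$, and so its cardinality is $2^{|U|-\rk(S)}$, establishing the claim. There is no real obstacle here: the entire content of the claim is this one-line translation between the combinatorial parity condition at each $v\in W$ and the matrix equation $\sigma_1^{\rm T} B_S\equiv 0$, after which the count is immediate from the standard fact (already used in the proof of Lemma~\ref{l1}) that a $d$-dimensional subspace of $\F_2^n$ has $2^d$ elements.
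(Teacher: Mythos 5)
Your proof is correct and follows exactly the paper's own argument: identify the good condition with $\sigma_1^{\rm T}B \equiv 0 \pmod 2$ and then count the left null space via rank--nullity over $\F_2$. Nothing further is needed.
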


\begin{proof}[Proof of Claim~\ref{clm:rk}]
Let $B$ be the bipartite adjacency matrix of $(U \cup W,S)$. Note that the pair $S,\sigma_1$ is good if and only
if $$\sigma_1^{\mathrm{T}}B \equiv 0 \mod 2$$
(we view $\sigma_1$ as a vector with $(\sigma_1)_v = \sigma_1(v)$).
The claim follows from the fact that the number of vectors $\alpha \in \{0,1\}^{|U|}$
such that $\alpha^{\mathrm{T}}B \equiv 0 \mod 2$ is
$2^{|U|-\rk(B)}$.
\end{proof}

By Claim~\ref{clm:pair} and Claim~\ref{clm:rk}, \eqref{eq:pfpbis2} equals
\begin{equation*}
\sum_{S \subseteq E} 2^{|U|+|W|-\rk(S)}(-\eta)^{|S|} = 2^{|U|+|W|}R'_2(G;1/2,-\eta).
\end{equation*}
\end{proof}

\subsection{Reducing \#{\sc BIS} to \#{\sc PBIS} (proof of Lemma~\ref{lem:reduction2})}\label{sec:lemreduc2}

We use the following construction in the proof of
Lemma~\ref{lem:reduction2}. Given a graph $G=(V,E)$, a prime $p$
and a positive integer $k$ we construct a bipartite graph $G'$ by
replacing each vertex $v\in V$ by a cloud of $kp$ new vertices and
each edge $e\in E$ by a cloud of $p-1$ new vertices. For every
vertex $v\in V$ and every edge $e$ adjacent to $v$ we add a
complete bipartite graph between the cloud of $v$ and the cloud of
$e$.

\begin{lemma}\label{lem:bis2pbis}
Let $p>2$ be a prime, $k$ be a positive integer, and $\eta$ be a rational number
such that $\eta\in\Z_p^*$ and
\begin{equation}\label{eq:condition2}
\begin{split}
\left(\frac{1+\eta}{1-\eta}\right)^{2k} +1 \equiv 0 \mod p,\\
\eta-1 \not\equiv 0 \mod p.
\end{split}
\end{equation}
Let $G=(V,E)$ be a graph and let $G'=(V',E')$ be the graph
constructed above (using $p,k$). Then the number of independent sets of $G$
is congruent to $\#\mathrm{PBIS}(G';\eta) \mod p$.
\end{lemma}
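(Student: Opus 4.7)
The plan is to expand $\#\mathrm{PBIS}(G';\eta)$ by summing over $\{0,1\}$-labelings of $V'$, exploit the cloud symmetry of $G'$ to rewrite the sum in terms of cloud-occupation counts, and then reduce modulo $p$ via Fermat and Lucas until only the independent-set structure of $G$ remains.

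Every vertex in $\mathrm{cloud}(v)$ has the same neighborhood in $G'$ (the union of the $\mathrm{cloud}(e)$ for $e\ni v$), and similarly for edge clouds, so the weight of a labeling $\sigma\colon V'\to\{0,1\}$ depends only on the occupation counts $a_v\in\{0,\dots,kp\}$ of $\mathrm{cloud}(v)$ and $b_e\in\{0,\dots,p-1\}$ of $\mathrm{cloud}(e)$. The number of edges of $G'$ with both endpoints labeled $1$ is $\sum_{e=\{u,v\}}(a_u+a_v)b_e$. Writing $\gamma=(1+\eta)/(1-\eta)$ and summing the $b_e$'s first (they decouple once $\{a_v\}$ is fixed) by the binomial theorem gives
\[
\#\mathrm{PBIS}(G';\eta) = (1-\eta)^{|E'|}\sum_{\{a_v\}}\prod_v\binom{kp}{a_v}\prod_{e=\{u,v\}}(1+\gamma^{a_u+a_v})^{p-1}.
\]

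Next I would reduce this identity modulo $p$ using three standard facts: (i) $(1-\eta)^{|E'|}\equiv 1\pmod p$, because $|E'|=2kp(p-1)|E|$ is a multiple of $p-1$ and $1-\eta\in\Z_p^*$ (Fermat); (ii) $(1+x)^{p-1}\equiv[x\not\equiv-1\pmod p]\pmod p$ (Fermat); and (iii) Lucas's theorem: $\binom{kp}{a_v}\equiv 0\pmod p$ unless $a_v=c_vp$ for some $c_v\in\{0,\dots,k\}$, in which case $\binom{kp}{c_vp}\equiv\binom{k}{c_v}\pmod p$, while $\gamma^{c_vp}\equiv\gamma^{c_v}\pmod p$ by Fermat (valid since $\gamma^{2k}\equiv-1\not\equiv 0$ puts $\gamma\in\Z_p^*$). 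Combining these collapses the sum to
\[
\#\mathrm{PBIS}(G';\eta)\equiv\sum_{\{c_v\in\{0,\dots,k\}\}}\prod_v\binom{k}{c_v}\prod_{e=\{u,v\}}\bigl[\gamma^{c_u+c_v}\not\equiv-1\pmod p\bigr]\pmod p.
\]

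To finish I would interpret this residue as $\#\mathrm{IS}(G)\pmod p$. The hypothesis $\gamma^{2k}\equiv-1$, together with the associated order constraint on $\gamma$ in $\F_p^*$, forces $\gamma^{c_u+c_v}\equiv-1$ only when $c_u=c_v=k$, so the edge constraints simplify to ``$T:=\{v:c_v=k\}$ is independent in $G$''; separating the contributions from $v\in T$ (weight $\binom{k}{k}=1$) and $v\notin T$ (weights summing over $c_v<k$) should then yield the IS count modulo $p$. I expect this last step to be the main obstacle: one must verify that $\gamma^j\not\equiv-1$ for every $j<2k$ in the relevant range (which is where the implicit minimality in the choice of $k$ enters, with $\gamma$ of multiplicative order exactly $4k$ in $\F_p^*$), and push the binomial factors $\binom{k}{c_v}$ through the reduction cleanly so that the only surviving contribution on the right is $\#\mathrm{IS}(G)$.
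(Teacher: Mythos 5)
Your expansion follows the same route as the paper's own proof: sum over labelings of $V'$, use the cloud symmetry to reduce to occupation counts, sum out the edge clouds to obtain the factors $(1+\gamma^{a_u+a_v})^{p-1}$, and reduce modulo $p$. Your intermediate congruence
\[
\#\mathrm{PBIS}(G';\eta)\equiv\sum_{\{c_v\in\{0,\dots,k\}\}}\;\prod_{v}\binom{k}{c_v}\prod_{\{u,v\}\in E}\bigl[\gamma^{c_u+c_v}\not\equiv-1\bigr]\pmod p
\]
is correct, and your use of Lucas here is in fact \emph{more} careful than the paper's: the paper asserts that $\binom{kp}{h}\equiv 0\pmod p$ for all $h\in\{1,\dots,kp-1\}$ and hence keeps only the two monochromatic occupations of each vertex cloud, but by Lucas $\binom{kp}{cp}\equiv\binom{k}{c}$, which is nonzero for $0<c<k$ in general (that vanishing claim holds only when $k$ is a power of $p$). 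So the mixed occupations $a_v=c_vp$ that you retain genuinely survive.

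The step you flag as ``the main obstacle'' is a genuine gap, and it cannot be closed as proposed. First, condition \eqref{eq:condition2} only gives $\gamma^{2k}\equiv-1$; it does not force $\gamma$ to have order exactly $4k$, so $\gamma^{c_u+c_v}\equiv-1$ may occur for $c_u+c_v<2k$. Second, and fatally, even when $\gamma$ does have order $4k$, grouping by $T=\{v:c_v=k\}$ gives each $v\notin T$ total weight $\sum_{c=0}^{k-1}\binom{k}{c}=2^k-1$, so the residue equals $\sum_{T\ \mathrm{independent}}(2^k-1)^{|V|-|T|}$, which is $\#\mathrm{IS}(G)\bmod p$ only if $2^{k-1}\equiv 1\pmod p$ (e.g.\ $k=1$). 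Concretely, take $p=17$, $k=2$, $\eta=1/3$ (so $\gamma=2$ and $2^{4}\equiv-1$, satisfying \eqref{eq:condition2}) and $G$ a single edge: your formula evaluates to $(1+2+1)^2-1=15$, while the number of independent sets of $G$ is $3$. Thus your (correct) derivation actually shows the lemma's conclusion fails for these admissible parameters; the paper reaches the stated conclusion only via the false binomial-vanishing claim above. A repaired statement must either restrict $k$ (to $1$, to a power of $p$, or to $k$ with $2^{k-1}\equiv1\pmod p$) or replace $\#\mathrm{IS}(G)$ by the weighted count $\sum_{T}(2^k-1)^{|V|-|T|}$ and adjust the downstream reduction accordingly.
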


\begin{proof}
We are going to evaluate
\begin{equation}\label{aaa}
\#\mathrm{PBIS}(G';\eta)=\sum_{\sigma:V'\rightarrow\{0,1\}} \prod_{\{u,v\}\in E'}
(1+\eta\chi(\sigma(u),\sigma(v)))
\end{equation}
modulo $p$ (where $\chi$ is given by~\eqref{choo}).

Let $s$ be a vertex of $G$. Consider a set of configurations (that is, $\sigma$'s in
\eqref{aaa}) that differ only in the labels assigned to the vertices in the cloud of $s$ (thus
the configurations agree in the labels assigned to the vertices outside of the cloud of $s$).
Note that by symmetry the value of
$\prod_{\{u,v\}\in E'}
(1+\eta\chi(\sigma(u),\sigma(v)))$ depends only on $h$, the number of vertices in the cloud of $s$
that are assigned $1$. There are ${kp\choose h}$ such configurations and hence the total
contribution of configurations with $h\in\{1,\dots,kp-1\}$ to \eqref{aaa} is zero modulo $p$.
Thus, we only need to consider the $\sigma$'s such that for every $s\in V$ the vertices
in the cloud of $s$ are assigned the same label.

For $\varsigma:V\rightarrow\{0,1\}$
let $C_\varsigma$ be the set of assignments $V'\rightarrow\{0,1\}$ that assign
label $\varsigma(s)$ to every vertex in the cloud of $s$ (for every $s\in V$).
Now we evaluate
\begin{equation}\label{bbb}
\sum_{\sigma\in C_{\varsigma}} \prod_{\{u,v\}\in E'}
(1+\eta\chi(\sigma(u),\sigma(v))).
\end{equation}
Note that each vertex in the edge clouds influences its own set of terms
in the product in~\eqref{bbb}. Thus the sum in~\eqref{bbb} turns into
the following product
\begin{equation}\label{aaaa}
\eqref{bbb} =
\prod_{\{s,t\}\in E}
\Psi(\varsigma(s),\varsigma(t)),
\end{equation}
where
\begin{equation}\label{aaaa2}
\Psi(x,y)=
\left[(1+\eta\chi(x,0))^{kp}(1+\eta\chi(y,0))^{kp}+(1+\eta\chi(x,1))^{kp}(1+\eta\chi(y,1))^{kp}\right]^{p-1}.
\end{equation}
Now we evaluate $\Psi(x,y)$ for $x,y\in\{0,1\}$
\begin{itemize}
\item Case 1: $x=y=1$. We have
\begin{equation*}
\Psi(1,1) = \left[(1-\eta)^{2kp}+(1+\eta)^{2kp}\right]^{p-1},
\end{equation*}
and from the assumption~\eqref{eq:condition2} we obtain
\begin{equation}\label{hh1}
\Psi(1,1) = (1-\eta)^{2kp(p-1)}\left[1+\left(\frac{1+\eta}{1-\eta}\right)^{2kp}\right]^{p-1} \equiv 0 \mod p.
\end{equation}

\item Case 2: $x=y=0$. We have
\begin{equation}\label{hh2}
\Psi(0,0) = \left[2(1-\eta)^{2kp}\right]^{p-1} \equiv 1 \mod p,
\end{equation}
where in the last congruence we used $\eta \not\equiv 1 \mod p$ and Fermat's little Theorem.

\item Case 3: $x=1,y=0$ (the case $x=0,y=1$ is the same). We have
\begin{eqnarray}
\Psi(1,0) & = & \left[(1-\eta)^{2kp}+(1+\eta)^{kp}(1-\eta)^{kp}\right]^{p-1}\nonumber\\
& = & (1-\eta)^{2kp(p-1)}\left[1+\left(\frac{1+\eta}{1-\eta}\right)^{kp}\right]^{p-1}\nonumber\\
& \equiv & 1 \mod p,\label{hh3}
\end{eqnarray}
where the last congruence follows from Fermat's little Theorem and the fact that $((1+\eta)/(1-\eta))^k \not\equiv -1 \mod p$ (since otherwise $((1+\eta)/(1-\eta))^{2k} \equiv 1 \mod p$ which violates \eqref{eq:condition2}).
\end{itemize}

From \eqref{hh1}, \eqref{hh2}, and \eqref{hh3} we obtain that \eqref{aaaa} modulo $p$ is $1$ if
$\varsigma$ corresponds to an independent set of $G$ and is zero otherwise. Thus we have
$$
\#\mathrm{PBIS}(G';\eta) \equiv \sum_{\varsigma\in V\rightarrow\{0,1\}}
\prod_{\{s,t\}\in E} \Psi(\varsigma(s),\varsigma(t)) \equiv \#\mathrm{BIS}(G)\mod p.
$$
\end{proof}

Fix rational $\eta \not\in \{\pm 1, 0\}$. We want to find sufficiently
many primes $p$ such that \eqref{eq:condition2} can be satisfied for some integer $k$.
The following results shows that the density (inside the set of all primes) of such primes is a positive constant.

\begin{lemma}[\cite{MR1257079}]\label{lem:density2}
For $q \in \Q^*$ and $q \neq \pm 1$, the density (inside the set of all primes) of primes for which
there exists $k\in\Z$ such that
\begin{equation*}
q^k \equiv -1 \mod p,
\end{equation*}
is a positive constant.
\end{lemma}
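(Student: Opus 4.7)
The plan is to reduce the congruential statement to one about the parity of the multiplicative order of $q$ modulo~$p$, and then compute this density via Chebotarev applied to a tower of Kummer extensions. First I would observe that $q^k\equiv -1\pmod p$ has a solution for some $k\in\Z$ if and only if the order $d$ of $q$ in the cyclic group $(\Z/p\Z)^*$ is even. Indeed, if $d$ is even then $q^{d/2}$ has order~$2$, hence equals $-1$; conversely, $q^k\equiv -1$ forces $d\mid 2k$ and $d\nmid k$, so $v_2(d)>v_2(k)\geq 0$. Therefore it suffices to show that the density of primes at which the order of $q$ is \emph{odd} is strictly less than~$1$.

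Next, in a cyclic group of order $p-1=2^s t$ with $t$ odd, an element has odd order exactly when it lies in the unique subgroup of index~$2^s$, i.e.\ when it is a $2^s$-th power. So $\mathrm{ord}_p(q)$ is odd precisely when, for every $n\geq 1$ with $2^n\mid p-1$, the element $q$ is a $2^n$-th power modulo~$p$. The latter condition is detected by the splitting behaviour of $p$ in the Kummer tower $K_n:=\Q(\zeta_{2^n},q^{1/2^n})$: the primes $p$ with $2^n\mid p-1$ at which $q$ is a $2^n$-th power correspond to the identity element of $\mathrm{Gal}(K_n/\Q)$. Chebotarev's density theorem then supplies density $1/[K_n:\Q]$ for each stratum, and an inclusion-exclusion over~$n$ yields a closed-form expression for the density of primes with $\mathrm{ord}_p(q)$ odd.

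The main obstacle is controlling $[K_n:\Q]$, which can drop below the generic value $2^n\varphi(2^n)$ when $q$ has hidden $2$-power structure (for example $q$ a perfect square, $q<0$, or entanglement of $\Q(q^{1/2^n})$ with $\Q(\zeta_{2^n})$ through $\sqrt{-1}$). The hypothesis $q\neq\pm 1$ guarantees that $q$ is not a root of unity, so after writing $q=\pm q_0^{2^a b}$ with $q_0$ not a proper power and $b$ odd, only finitely many degenerate cases arise; each is handled by a separate Chebotarev computation on the appropriately modified extension, as carried out in~\cite{MR1257079}. In every case the resulting density of primes with odd order is an explicit rational number strictly less than~$1$, and the complementary density of primes with even order—equivalently, of primes admitting some $k$ with $q^k\equiv -1\pmod p$—is a positive constant, which is the lemma.
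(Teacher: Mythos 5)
The paper does not prove this lemma at all---it is quoted verbatim from the cited reference \cite{MR1257079} (this is the classical Hasse-type density result on primes $p$ for which $-1$ lies in the subgroup generated by $q$ modulo $p$), so there is no internal proof to compare against. Your outline is the standard argument from that literature and its main steps are sound: the equivalence ``$-1\in\langle q\rangle$ iff $\mathrm{ord}_p(q)$ is even'' is correct (the cyclic group $(\Z/p\Z)^*$ has a unique element of order $2$), the translation of ``odd order'' into the condition that $q$ be a $2^s$-th power for $s=v_2(p-1)$ is correct, and detecting that condition via splitting in the Kummer tower $\Q(\zeta_{2^n},q^{1/2^n})$ plus Chebotarev is exactly how the cited computation proceeds. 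Two places where your sketch is thin: (i) the stratification over $n$ is an \emph{infinite} decomposition, so you need to justify that the densities of the strata sum to the density of the union (the tail is controlled because the stratum $2^n\mid p-1$ already has density at most $2^{1-n}$, but this should be said); and (ii) the actual content of the lemma---positivity of the complementary density---is in the end deferred to the very reference being cited (``each is handled by a separate Chebotarev computation \dots as carried out in \cite{MR1257079}''), which is circular as a self-contained proof. For positivity alone you do not need the full closed-form density: it suffices to exhibit one positive-density splitting class forcing even order, e.g.\ primes $p\equiv 3\pmod 4$ at which $q$ is a quadratic non-residue (Chebotarev in $\Q(i,\sqrt{q}\,)$), with a one-level-up variant (working modulo $8$ with fourth powers) in the degenerate cases where $q$ or $-q$ is a rational square; the hypothesis $q\neq\pm 1$ is what rules out the condition becoming vacuous. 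With those two points made explicit, your approach is a complete and correct proof of the quoted result.
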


The mapping $\eta\mapsto (\frac{1+\eta}{1-\eta})^2$ maps $\eta\in\Q^*\setminus\{\pm 1\}$
to $\Q^*\setminus\{\pm 1\}$ and hence we have the following.

\begin{corollary}\label{cor:condition2}
Fix rational $\eta \not\in \{\pm 1, 0\}$. The density (inside the
set of all primes) of the primes $p$ such that
~\eqref{eq:condition2} has an integer solution to $k$ is a
positive constant.
\end{corollary}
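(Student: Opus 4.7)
The plan is to reduce directly to Lemma~\ref{lem:density2} via the substitution $q := \bigl((1+\eta)/(1-\eta)\bigr)^2$ suggested in the paragraph preceding the corollary. First I would verify that $q$ lies in $\Q^* \setminus \{\pm 1\}$: rationality is immediate; $q \neq 0$ because $\eta \neq -1$; $q \neq 1$ because $q = 1$ would force $(1+\eta)^2 = (1-\eta)^2$, i.e., $\eta = 0$, which is excluded; and $q \neq -1$ because $-1$ is not the square of any rational. So the hypothesis of Lemma~\ref{lem:density2} is satisfied for this particular $q$.

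Next, I would rewrite the first line of \eqref{eq:condition2} as $q^k \equiv -1 \mod p$, which is exactly the congruence treated by Lemma~\ref{lem:density2}. That lemma therefore gives a set $\mathcal{P}$ of primes of positive density for which some integer $k$ satisfies $q^k \equiv -1 \mod p$.

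To conclude, I would argue that the second condition of \eqref{eq:condition2}, namely $\eta - 1 \not\equiv 0 \mod p$, excludes at most finitely many primes: writing $\eta = a/b$ in lowest terms with $b > 0$, the condition $\eta \equiv 1 \mod p$ (interpreted in $\Z_p$) requires $p \mid (a - b)$, and since $\eta \neq 1$ we have $a - b \neq 0$, leaving only finitely many bad primes. Removing a finite set from $\mathcal{P}$ does not change its density, so the set of primes satisfying both parts of \eqref{eq:condition2} still has positive density.

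There is essentially no obstacle here; the work was in setting up Lemma~\ref{lem:density2} and in observing that the map $\eta \mapsto \bigl((1+\eta)/(1-\eta)\bigr)^2$ preserves $\Q^* \setminus \{\pm 1\}$. The only mildly delicate point is checking that $q \neq -1$, which relies on the fact that $-1$ is not a rational square (so the map genuinely lands in $\Q^* \setminus \{\pm 1\}$ rather than just $\Q^*$); after that the corollary falls out in one line.
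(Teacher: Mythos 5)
Your proof is correct and is essentially the paper's argument: the paper likewise observes that $\eta \mapsto \bigl((1+\eta)/(1-\eta)\bigr)^2$ maps $\Q^*\setminus\{\pm 1\}$ into $\Q^*\setminus\{\pm 1\}$ and then invokes Lemma~\ref{lem:density2}. Your explicit treatment of the finitely many primes excluded by $\eta - 1 \not\equiv 0 \bmod p$ is a detail the paper leaves implicit, but it changes nothing.
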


\begin{proof}[Proof of Lemma~\ref{lem:reduction2}]
The proof is a routine application of the Chinese Remainder
Theorem (similar to the proof of Lemma~\ref{lem:reduction}).
Given an instance $G=(V,E)$ of \#{\sc BIS}, let $n:=|V|$ and $L$
be the number of independent sets of $G$. Note that $L \leq 2^n$.

We choose $n^2$ primes $p_1,\ldots,p_{n^2} > 2$ such that
\begin{itemize}
\item for each $i \in [n^2]$, \eqref{eq:condition2} has an integer solution to
$k$ with $p=p_i$;
\item  $p_i=O(n^3)$ for $i \in [n^2]$, and
\item $\prod_{i=1}^{n^2}p_i > 2^n$.
\end{itemize}
By Corollary~\ref{cor:condition2},
these primes exist when $\eta \not\in \{\pm 1, 0\}$, and we can
find them in time polynomial in $n$ using exhaustive search.

For each $p_i$, let $0<k_i<p_i$ be an integer solution
of~\eqref{eq:condition2} with $p=p_i$ (if $k_i$ is a solution
of~\eqref{eq:condition2} then $k_i+t(p_i-1)$ is a solution for
every $t \in \Z$), again we can find the $k_i$ using exhaustive
search. We then construct a bipartite graph $G'_i$ as above
from $G$, $p_i$ and $k_i$. Note that $G'_i$ has size polynomial in
$n$ since $k_i<p_i = O(n^3)$. As~\eqref{eq:condition2} is
satisfied, by Lemma~\ref{lem:bis2pbis}, we have
\begin{equation*}
L \equiv \#\mathrm{PBIS}(G'_i;\eta) \mod p_i.
\end{equation*}
We can make a query to the oracle to obtain the rational number $\#\mathrm{PBIS}(G'_i;\eta)$ and thus can compute $L \mod p_i$ in polynomial time for each $i \in [n^2]$. By the Chinese remainder theorem, we can compute $L$ in time polynomial in $n$.
\end{proof}

\section{The linear-width of a graph}\label{sec:compact}

For the proof of Theorem~\ref{thm:mix_mm} we will use the
linear-width of a graph, a concept which was first defined by
Thomas~\cite{Thomas96tree-decompositionsof}. In this section
we prove a bound on linear-width in terms of tree-width.

The {\em linear-width} of a graph $G=(V,E)$ is the smallest integer
$\ell$ such that the edges of $G$ can be arranged in a linear
order $e_1,\dots,e_m$ in such a way that, for every $i\in [m]$,
there are at most $\ell$ vertices that have an adjacent edge in
$\{e_1,\dots,e_{i-1}\}$ and an adjacent edge in
$\{e_{i},\dots,e_m\}$. It is known that computing the linear-width
of a graph is NP-complete~\cite{MR1780474}. For paths and cycles
the linear-width is easy to compute.

\begin{example}
The linear-width of a path is $1$. The linear-width of a cycle is $2$.
\end{example}

Let $e_1,\dots,e_m$ be a permutation of the edges of $G=(V,E)$. We say
that a vertex $v \in V$ is {\em dangerous} w.r.t. $i\in [m]$, if there
exist two edges $e_{j},e_{k}$ adjacent to $v$ such that $j < i
\leq k$. Let $D_i$ be the set of vertices which are dangerous
w.r.t. $i$. Note that the linear-width of $G$ is the minimum value
of $\max_i |D_i|$ optimized over all permutations of the edges.

Now we give an upper bound on the linear-width for trees.

\begin{lemma}\label{lem:tree_compact}
Let $T=(V,E)$ be a tree on $n$ vertices. The linear-width of $T$ is at most $\lfloor \log_2 n \rfloor$.
\end{lemma}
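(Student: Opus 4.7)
The natural approach is induction on $n$. The base cases $n\le 2$ are immediate (the empty or single-edge tree obviously has linear-width $0$). For $n\ge 3$ I would invoke a centroid decomposition: let $c$ be a centroid of $T$, so each component $T_1,\ldots,T_k$ of $T-c$ has size $n_i\le\lfloor n/2\rfloor$, and write $f_i$ for the edge between $c$ and its neighbour $r_i$ in $T_i$. The proposed ordering is $\sigma_1,f_1,\sigma_2,f_2,\ldots,\sigma_k,f_k$, where each $\sigma_i$ is a recursively produced ordering of $E(T_i)$.

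At a position inside block $\sigma_i$ the globally dangerous set is the union of (a) the internally dangerous vertices of $T_i$, which number at most $\lfloor\log_2 n_i\rfloor$ by the inductive hypothesis, (b) the centroid $c$ whenever $i\ge 2$, since then some $f_{i'}$ with $i'<i$ is past and $f_i,\ldots,f_k$ are future, and (c) possibly the root $r_i$, whose future edge $f_i$ can make it dangerous globally even after all its $T_i$-edges have been placed and it is no longer dangerous internally. The arithmetic inequality $\lfloor n/2\rfloor<2^{\lfloor\log_2 n\rfloor}$ (valid for $n\ge 2$) gives $\lfloor\log_2 n_i\rfloor\le\lfloor\log_2 n\rfloor-1$, so (a) plus (b) alone is already under budget; the extra vertex from (c) would push the count one over.

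To suppress (c) I would strengthen the inductive hypothesis: for every tree $T$ on $n\ge 2$ vertices and every designated vertex $r\in V(T)$, there is an edge ordering of linear-width at most $\lfloor\log_2 n\rfloor$ whose last edge is incident to $r$. Applying this stronger statement to each $T_i$ with the designated vertex $r_i$ forces $r_i$'s last $T_i$-edge to be the last edge of $\sigma_i$, so $r_i$ has a future $T_i$-edge whenever it has a past one, and its global danger coincides with its internal danger. The induction then splits on the location of $r$: (i) $r=c$, use the symmetric ordering ending with $f_k$; (ii) $r=r_j$, postpone $T_j$ to the end so that the final edge is $f_j=\{c,r\}$; (iii) $r\in T_j\setminus\{r_j\}$, use the ordering ending in $\ldots,f_k,f_j,\sigma_j$, and apply the strengthened hypothesis to $T_j$ with designated vertex $r$ to produce the required $r$-incident final edge. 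The main obstacle is the third case, in which the $r_j$-overhead briefly reappears inside $\sigma_j$; but at that stage every edge of $c$ is already in the past, so $c$ is no longer dangerous, and the bookkeeping closes at $\lfloor\log_2 n_j\rfloor+1\le\lfloor\log_2 n\rfloor$.
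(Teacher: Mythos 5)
Your argument is correct, but it is genuinely different from the paper's. The paper obtains the ordering from a single depth-first search rooted at an arbitrary vertex, with children explored in increasing order of subtree size; it then observes that all dangerous vertices lie on the DFS stack, and that whenever a stack vertex $v_k$ is still dangerous its currently-explored child satisfies $|T_{v_{k+1}}|<|T_{v_k}|/2$, so the dangerous stack vertices have geometrically decreasing subtree sizes and number at most $\lfloor\log_2 n\rfloor$. This is a direct, non-inductive count and needs no strengthening of the statement. You instead use a centroid decomposition and recurse, which forces you to strengthen the induction hypothesis (an ordering of width at most $\lfloor\log_2 n\rfloor$ whose \emph{last} edge is incident to a designated vertex) precisely to absorb the boundary vertex $r_i$ of each piece into the recursive count; your case analysis (i)--(iii), including the observation that in case (iii) the centroid has exhausted all its edges before $\sigma_j$ begins so the extra unit can be spent on $r_j$ instead, correctly closes the bookkeeping, and the arithmetic $\lfloor\log_2\lfloor n/2\rfloor\rfloor\le\lfloor\log_2 n\rfloor-1$ is right. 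The trade-off: the paper's DFS argument is shorter and produces the witness in one pass; your divide-and-conquer is more modular and the ``anchored'' form of the statement is the reusable ingredient, at the cost of the strengthened hypothesis and the three-way case split.
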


\begin{proof}
We describe an ordering $\sigma$ witnessing that the linear-width is at
most $\lfloor \log_2 n \rfloor$. Pick an arbitrary vertex $r \in V$ to be the root.
For a vertex $v$, let $T_v$ be the subtree rooted at $v$ and let $|T_v|$ be the
number of vertices in $T_v$. We do a depth-first search (DFS) on $T$ where the children
are explored in the increasing order of $|T_w|$ (that is, the smaller subtrees
are explored first). Let $\sigma$ be the order in which the edges are discovered
by the DFS.

Next, we show that the linear-width of $\sigma=e_1,\dots,e_m$ is
at most $\lfloor \log_2 n \rfloor$. Let $i\in [m]$. Let
$v_1,\dots,v_j$ be the vertices on the stack right before the edge
$e_i$ was discovered (thus $e_i$ is adjacent to $v_j$). Note that
the vertices which are not on the stack are not dangerous w.r.t. $i$ (they
either are already fully explored or haven't been discovered yet).
Note that if $v_{k+1}$ is the rightmost child of $v_k$ then all
the edges adjacent to $v_k$ are discovered and hence $v_k$
is not dangerous w.r.t. $i$ (for all $k\in [j-1]$). On the other
hand if $v_{k+1}$ is not the rightmost child of $v_k$ then our
exploration order implies
\begin{equation}\label{eeee}
|T_{v_{k+1}}| < |T_{v_k}|/2.
\end{equation}
Note the following two facts: $v_j$ may be dangerous and $|T_{v_j}|\geq 2$.
If there are $\ell$ dangerous vertices among $v_1,\dots,v_{j-1}$ then by
\eqref{eeee} we have $|T_w| \geq 2\cdot 2^{\ell}$ where $w$ is the topmost
dangerous vertex. Using $|T_w|\leq n$ we obtain the result.
\end{proof}

For general graphs we will show a generalization of Lemma~\ref{lem:tree_compact}: a
bound on the linear-width of $G$ in terms of the tree-width of $G$. We now define tree-width
(see, e.\,g.,~\cite{1051910} for a nice treatment).

Given a graph $G=(V,E)$, a {\em tree decomposition} of $G$ is a pair
 $(T,\{U_h\}_{h \in V_T})$ where $T=(V_T,E_T)$ is a tree and $U_h \subseteq V$ satisfy:
(i) each edge of $G$ is in at least one subgraph induced by $U_h$; and (ii)
for any three vertices $t_1,t_2,t_3$ of $T$ such that $t_2$ is in the path between
$t_1$ and $t_3$ in $T$ we have $U_{t_1} \cap U_{t_3}\subseteq U_{t_2}$.
The {\em width} of a decomposition is $\max_{h\in V_T} |U_h|-1$. The {\em tree-width}
of $G$ (denoted ${\rm tw}(G)$) is the minimum width optimized over all tree decompositions.

\begin{lemma}\label{lem:general_compact}
Let $G=(V,E)$ be a graph. Then
$$\mbox{\rm linear-width}(G)\leq ({\rm tw}(G)+1)(\lfloor \log_2 n
\rfloor +1).$$
\end{lemma}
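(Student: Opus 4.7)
I would fix an optimal tree decomposition $(T,\{U_h\}_{h \in V_T})$ of $G$ of width $\mathrm{tw}(G)$ and, by a standard reduction (repeatedly contracting a tree edge $\{h,h'\}$ of $T$ whenever $U_h \subseteq U_{h'}$), assume without loss of generality that $|V_T| \leq n$. I then piggyback on the DFS ordering from the proof of Lemma~\ref{lem:tree_compact}: run the depth-first search on $T$ that explores children in increasing order of subtree size, producing a node ordering $h_1,h_2,\dots,h_N$ of $V_T$ with $N \leq n$. Finally, assign to each edge $e \in E(G)$ the smallest index $j$ with $e \subseteq U_{h_j}$ and list $E(G)$ in the resulting block order, breaking ties within a single bag arbitrarily.

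To bound the linear-width of this ordering, consider a position $i$ lying inside the block associated with bag $h_j$ and suppose $v \in V(G)$ is dangerous at $i$. Let $T_v = \{h \in V_T : v \in U_h\}$; the defining axioms of a tree decomposition guarantee that $T_v$ is a (connected) subtree of $T$. If $v \in U_{h_j}$, I charge $v$ to the at most $\mathrm{tw}(G)+1$ vertices of $U_{h_j}$. Otherwise, since $v$ has incident edges on both sides of position $i$, the subtree $T_v$ contains some $h_k$ with $k<j$ and some $h_l$ with $l>j$, yet avoids $h_j$ itself. I would argue that $T_v$ must then contain an ancestor $s_{c'}$ of $h_j$ on the current DFS stack $s_0,s_1,\dots,s_d=h_j$ such that the next stack child $s_{c'+1}$ is not the rightmost child of $s_{c'}$; equivalently, $s_{c'}$ is one of the ``dangerous'' tree nodes in the sense of the proof of Lemma~\ref{lem:tree_compact}. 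The point is that the unique $h_k$-to-$h_l$ path in $T$ must route around $h_j$, and a short case analysis on whether $h_k$ (resp.\ $h_l$) lies in a completed subtree of some stack ancestor or on the stack itself shows that this path branches out of the stack at exactly such a dangerous $s_{c'}$ via an unexplored child containing $h_l$.

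Granting this, the counting is immediate. The proof of Lemma~\ref{lem:tree_compact} shows that there are at most $\lfloor \log_2 n \rfloor$ dangerous stack nodes at any moment of the DFS on $T$. Together with $h_j$ itself this yields at most $\lfloor \log_2 n \rfloor + 1$ relevant bags $U_{s_{c'}}$, each of size at most $\mathrm{tw}(G)+1$, so at position $i$ the number of dangerous vertices is bounded by $(\mathrm{tw}(G)+1)(\lfloor \log_2 n \rfloor + 1)$, as required.

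The main obstacle is the geometric step in the second paragraph: confirming that every dangerous $v \notin U_{h_j}$ truly lies in a bag $U_{s_{c'}}$ of a dangerous stack ancestor. This hinges on using the connectivity of $T_v$ together with the DFS structure to locate the stack node where the $h_k$-$h_l$ path ``leaves'' the stack, and ruling out the case where this path would be forced through $h_j$. Once this is done, the linear-width bound follows by direct multiplication, inheriting the $\lfloor \log_2 n \rfloor$ factor from Lemma~\ref{lem:tree_compact} and the $\mathrm{tw}(G)+1$ factor from the bag sizes.
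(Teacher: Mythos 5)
Your proposal follows the paper's proof essentially verbatim: the same tree decomposition with $|V_T|\leq n$, the same DFS ordering on $T$ inherited from Lemma~\ref{lem:tree_compact}, the same assignment of each edge of $G$ to its first bag, and the same charging of a dangerous vertex $v$ either to $U_{h_j}$ or to the bag of a dangerous node of $T$. The geometric step you flag as the main obstacle is closed in the paper by a one-line ``separation'' observation rather than a stack case analysis: any path in $T$ from an explored node to an unexplored node must contain a dangerous node of $T$ (an explored node with an unexplored neighbour is dangerous), and since $v\in U_{h_k}\cap U_{h_l}$ the tree-decomposition axiom puts $v$ in every bag along the $h_k$--$h_l$ path, in particular in the bag of such a dangerous node.
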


\begin{proof}
Given a graph $G$ with $|V|=n$, let $(T,\{U_h\}_{h \in V_T})$
(where $T=(V_T,E_T)$) be an optimal tree decomposition of~$G$. Let $r:=|V_T|$. We can
assume that $r\leq n$ (see, e.\,g.,~\cite{1051910} p. 579). Let
$\sigma'$  be the order in which the vertices of $T$
are discovered by the DFS of Lemma~\ref{lem:tree_compact}. W.l.o.g.
let $\sigma'=1,\dots,r$.

We will need the following ``separation'' property of dangerous vertices: a path
between an explored and an unexplored vertex must contain a dangerous
vertex (this follows from the fact that an explored vertex with
an unexplored neighbor is dangerous).

Now we construct the ordering $\sigma=e_1,\dots,e_m$ on the edges
of $G$. For each $h \in V_T$, let $G_h$ be the induced subgraph
of $G$ on $U_h$. An edge $e\in E$ may appear in several $G_h$;
we ``assign'' it to the first $G_h$ (in the order $\sigma'$).
Let $\sigma$ be an ordering
such that edges assigned to $G_h$ come before the edges assigned
to $G_{h'}$ for any $h<h'$. Note that we do not impose any
particular order on the edges assigned to $G_h$ (for any $h\in
[r]$).

Let $i\in [m]$ and let $h$ be such that $e_i$ is assigned to $G_h$.
Let $D_i\subseteq V$ be the set of dangerous vertices w.r.t. $i$.
Let $D'_h\subseteq V_T$ be the set of dangerous vertices just
after $h$ was discovered. Let $v\in D_i\setminus U_h$. We will show
that $v\in U_g$ for some $g\in D'_h$.

Let $e_{i_1}$ and $e_{i_2}$ be adjacent to $v$ and such that
$i_1 < i \leq i_2$. Note that $e_{i_1}$ and $e_{i_2}$ cannot
be assigned to $G_h$ (since $v$ is not in $U_h$). Thus
$e_{i_1}$ is assigned to an explored vertex $j_1$ of $T$ and
$e_{i_2}$ is assigned to an unexplored vertex $j_2$ of $T$.
Note that $v$ is in $U_{j_1}$ and $U_{j_2}$ and hence
it is in $U_k$ for all $k$ that are on the path between $j_1$ and
$j_2$ in $T$. By the ``separation'' property of dangerous
vertices one of them (say $g$) must be dangerous, and by the
second property of tree decomposition $U_g$ contains $v$.

Hence,
$$|D_i| \leq |U_h| + \sum_{g\in D'_h} |U_g| \leq ({\rm tw}(G)+1)(\lfloor \log_2 n\rfloor+1).$$
\end{proof}

\section{Analysis of the single bond flip chain for trees}\label{sec:mix_mm}

Given a tree $G=(V,E)$, let $\Omega$ be the set of $2^{|E|}$
subsets of $E$. By Lemma~\ref{l2}, for every $H \subseteq E$,
we know that $\rk(H)$ is the size of maximum matching of
the subgraph $(V,H)$. Let $w(H)$ be the size of maximum
matching in a graph $(V,H)$. Let $P$ be the transition matrix
of the single bond flip Markov chain ${\cal M}$ from definition~\ref{d1}.
It's easy to see that ${\cal M}$ is ergodic with unique stationary distribution
$\pi$ such that $$\pi(H) \propto \lambda^{w(H)}\mu^{|E|}.$$

The goal of this section is to prove Theorem~\ref{thm:mix_mm}.

\subsection{The canonical paths}

We will bound the mixing time of our chain ${\cal M}$ using the
canonical paths method, introduced
in~\cite{MR1097463,MR1211324,MR1025467}. Now we go over the basic
definitions for Markov chains, see, e.\,g., \cite{MR1960003} for a
comprehensive background.

\begin{definition}
The {\em total variation distance} of two probability
distribution $\nu$ and $\nu'$ on $\Omega$ is
$$\|\nu-\nu'\|_{TV}=\frac{1}{2}\sum_{H \in \Omega}|\nu(H)-\nu'(H)|=\max_{{\cal S} \subseteq \Omega}|\nu({\cal S})-\nu'({\cal S})|.$$
\end{definition}

\begin{definition}
The {\em mixing time} from initial
state $H$, $\tau_H(\varepsilon)$, is defined as
$$\tau_H(\varepsilon)=\min\{t:\|P^t(H,\cdot)-\pi\|_{TV} \leq \varepsilon\},$$
and the \emph{mixing time} $\tau(\varepsilon)$ of the chain is
defined as
$$\tau(\varepsilon)=\max_{H \in \Omega}\{\tau_H(\varepsilon)\}.$$
\end{definition}

Let $\sigma=e_1,\dots,e_m$ be an ordering of the edges of $G=(V,E)$
(we will usually use the orderings supplied by Lemma~\ref{lem:tree_compact} or
Lemma~\ref{lem:general_compact}). Given any pair $I,F
\in \Omega$, let $I \oplus F$ be the symmetric difference of $I$
and $F$ (that is, the set of edges which are in either $I$ or $F$
but not in both). We define a canonical path $\gamma_{I,F}$
between $I$ and $F$ as follows. Let $e_{i_1},\dots,e_{i_k}$ be the edges
from $I\oplus F$ ordered according to $\sigma$ (that is, $i_1<i_2<\dots<i_k$).
Let
\begin{equation}\label{canp}
\gamma_{I,F}=(H_0,H_1,\ldots,H_{k}),
\end{equation}
where $H_0=I$, $H_k=F$ and $H_{j} = H_{j-1} \oplus \{e_{i_j}\}$.

\begin{lemma}\label{lem:dif_mm}
Let $G=(V,E)$ be a graph. Let $\sigma=e_1,\dots,e_m$ be an ordering on $E$ with linear-width $\ell$.
Let $I,F$ be subsets of $E$ and let $H$ be on the canonical path~\eqref{canp} (that is, $H=H_j$ for some $j\in\{0,\dots,k\}$).
Then
$$|w(I)+w(F)-w(H)-w(C)| \leq \ell,$$
where $C=I\oplus F\oplus H$.
\end{lemma}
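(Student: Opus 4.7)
The plan is to bound $|w(I)+w(F)-w(H)-w(C)|$ by building matchings in $H$ and $C$ out of maximum matchings of $I$ and $F$ (and vice versa), with the loss controlled by the dangerous vertices at the cut that separates the already-flipped portion of $I\oplus F$ from the not-yet-flipped portion.

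First I would set up the cut. The cases $j=0$ and $j=k$ are trivial since $\{H,C\}=\{I,F\}$, so assume $0<j<k$. Let $L=\{e_1,\dots,e_{i_j}\}$ and $R=E\setminus L$. The edges of $I\oplus F$ that lie in $L$ are exactly $e_{i_1},\dots,e_{i_j}$, so directly from the definition of the canonical path
\[ H\cap L=F\cap L,\quad H\cap R=I\cap R,\quad C\cap L=I\cap L,\quad C\cap R=F\cap R. \]
Let $D=D_{i_j+1}$ be the set of vertices dangerous at the cut; by hypothesis $|D|\le\ell$, and every vertex outside $D$ is incident only to edges of $L$ or only to edges of $R$.

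Second, given maximum matchings $M_I$ of $I$ and $M_F$ of $F$, form the candidate $N_H=(M_F\cap L)\cup(M_I\cap R)\subseteq H$. At any vertex outside $D$, all incident edges lie on a single side of the cut, so $N_H$ covers it at most once; the only possible collisions occur at vertices $v\in D$ that are matched in $M_F$ by an $L$-edge and in $M_I$ by an $R$-edge. Let $X_H\subseteq D$ record those collisions; deleting one edge per collision turns $N_H$ into a matching of $H$, giving $w(H)\ge|M_F\cap L|+|M_I\cap R|-|X_H|$. The symmetric construction $N_C=(M_I\cap L)\cup(M_F\cap R)\subseteq C$ yields $w(C)\ge|M_I\cap L|+|M_F\cap R|-|X_C|$, where $X_C\subseteq D$ is the analogous collision set for $C$.

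The main obstacle is that the naive estimate $|X_H|+|X_C|\le 2|D|$ would prove only $2\ell$. The key observation is that $X_H\cap X_C=\emptyset$: a vertex in both would be incident to two distinct edges of $M_F$, one in $L$ and one in $R$, contradicting that $M_F$ is a matching. Hence $|X_H|+|X_C|\le|D|\le\ell$, and summing the two bounds gives $w(H)+w(C)\ge w(I)+w(F)-\ell$. Applying the very same construction in reverse, starting from maximum matchings of $H$ and $C$ with the same cut $L,R,D$, yields $w(I)+w(F)\ge w(H)+w(C)-\ell$ by the identical disjointness argument (this time for $M_H$), and together the two inequalities prove the lemma.
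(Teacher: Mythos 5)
Your proof is correct and follows essentially the same route as the paper's: the same cut $Q=\{e_1,\dots,e_{i_j}\}$, the same candidate matchings $(M_F\cap Q)\cup(M_I\cap Q^c)$ and $(M_I\cap Q)\cup(M_F\cap Q^c)$, and the same key observation that a dangerous vertex cannot produce a collision in both (your $X_H\cap X_C=\emptyset$ is the paper's ``degree $2$ in $M_H$ implies degree $0$ in $M_C$''). The only cosmetic difference is that the paper obtains the reverse inequality by noting the canonical path from $H$ to $C$ passes through $I$, whereas you rerun the construction directly; both are fine.
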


\begin{proof}
Let $Q=\{e_1,\dots,e_{i_j}\}$. Note that $H = (F\cap Q)\cup (I\cap Q^c)$,
where $Q^c$ is the complement of $Q$ (that is, $E\setminus Q$).
Similarly, $C=(I\cap Q)\cup (F\cap Q^c)$.

Let $D$ be the set of dangerous vertices w.r.t. $e_{i_j+1}$. Let $M_I$ and $M_F$ be
maximum matchings of $I$ and $F$, respectively. Let
$$
M_H = (M_F\cap Q)\cup (M_I\cap Q^c)\quad\mbox{and}\quad M_C = (M_I\cap Q)\cup (M_F\cap Q^c).
$$
Note that all vertices of $M_H$ with degree $\geq 2$ are in $D$ (a vertex which is not $D$
has all adjacent edges (in $G$) from $Q$ or from $Q^c$ and hence the adjacent edges (in $M_H$)
agree with $M_I$ or $M_F$). The same is true for $M_C$. Moreover if a vertex $v\in D$ has
degree $2$ in $M_H$ then it has degree $0$ in $M_C$. Thus by removing $\leq |D|$ edges from
$M_H$ and $M_C$ we can turn both of them into matchings. Thus
\begin{equation}\label{yy1}
w(H)+w(C)\geq w(I)+w(F)-|D|\geq w(I)+w(F)-\ell.
\end{equation}
Note that a canonical path from $I':=H$ to $F':=C$ passes through $H':=I$ (with
$C':=I'\oplus F'\oplus H'=F$). Thus
\begin{equation}\label{yy2}
w(I)+w(F) = w(H')+w(C')\geq w(I')+w(F')-\ell = w(H) + w(C) - \ell.
\end{equation}
Combining \eqref{yy1} and \eqref{yy2} we get the lemma.
\end{proof}

\subsection{The congestion of ${\cal M}$}

Now we analyze the congestion of the collection $\Gamma=\{\gamma_{I,F}\,|\,I,F\in\Omega\}$
where $\gamma_{I,F}$ are canonical paths defined in~\eqref{canp}.
For each transition $(H,H')$ such that $P(H,H')>0$, let $cp(H,H')$ be the set of pairs $(I,F)$ such that
$(H,H')\in \gamma_{I,F}$. The congestion of $\Gamma$ on $(H,H')$
is (see, e.\,g., \cite{MR1960003})
\begin{equation}\label{eq:cong}
\varrho_{(H,H')}=\frac{1}{P(H,H')}\sum_{I,F: (H,H') \in \gamma_{I,F}} \frac{\pi(I)\pi(F)}{\pi(H)}|\gamma_{I,F}|,
\end{equation}
where $|\gamma_{I,F}|$ is the length of $\gamma_{I,F}$. The congestion of $\Gamma$
is defined as
\begin{equation*}
\varrho := \max_{(H,H'): \atop P(H,H')>0} \varrho_{(H,H')}.
\end{equation*}

We will use the following connection between the congestion and the mixing time.

\begin{theorem}[\cite{MR1097463,MR1211324}]\label{thm:mix_cong}
$\tau_H(\varepsilon) \leq \varrho(\log(1/\pi(H))+\log(1/\varepsilon))$ for each starting state $H \in \Omega$.
\end{theorem}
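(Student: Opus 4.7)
The plan is to derive the bound via the standard two-step route: (i) translate the congestion bound on canonical paths into a Poincaré inequality that lower-bounds the spectral gap, and (ii) convert the spectral gap bound into a total variation mixing time bound using the reversibility and laziness of the chain.

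First I would establish the Poincaré inequality $\mathrm{Var}_\pi(f)\leq \varrho\cdot \mathcal{E}(f,f)$, where $\mathcal{E}(f,f)=\tfrac{1}{2}\sum_{H,H'}\pi(H)P(H,H')(f(H)-f(H'))^2$ is the Dirichlet form. The starting point is the identity
\begin{equation*}
\mathrm{Var}_\pi(f)=\tfrac{1}{2}\sum_{I,F\in\Omega}\pi(I)\pi(F)\bigl(f(I)-f(F)\bigr)^2.
\end{equation*}
For each pair $(I,F)$, telescope $f(I)-f(F)$ along the canonical path $\gamma_{I,F}$ and apply Cauchy--Schwarz to obtain $(f(I)-f(F))^2\leq |\gamma_{I,F}|\sum_{(H,H')\in\gamma_{I,F}}(f(H)-f(H'))^2$. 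Swapping the order of summation and using the definition \eqref{eq:cong} of $\varrho_{(H,H')}$ gives
\begin{equation*}
\mathrm{Var}_\pi(f)\leq \tfrac{1}{2}\sum_{(H,H'):P(H,H')>0}(f(H)-f(H'))^2\,\pi(H)P(H,H')\,\varrho_{(H,H')}\leq \varrho\cdot \mathcal{E}(f,f).
\end{equation*}
By the variational characterization of the spectral gap for the reversible chain (with stationary distribution $\pi$), this yields $1-\lambda_2\geq 1/\varrho$, where $\lambda_2$ is the second-largest eigenvalue of $P$.

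Next I would pass from the spectral gap to total variation mixing time. Observe that the chain ${\cal M}$ from Definition~\ref{d1} is lazy (every accepted move is taken with probability at most $1/2$, so $P(H,H)\geq 1/2$); consequently all eigenvalues of $P$ lie in $[0,1]$ and $\lambda_*:=\max\{|\lambda_2|,|\lambda_{\min}|\}=\lambda_2$. The standard $L^2$ bound for reversible chains,
\begin{equation*}
\|P^t(H,\cdot)-\pi\|_{TV}\leq \tfrac{1}{2}\sqrt{\tfrac{1-\pi(H)}{\pi(H)}}\,\lambda_2^{t},
\end{equation*}
combined with $-\log\lambda_2\geq 1-\lambda_2\geq 1/\varrho$, shows that $\|P^t(H,\cdot)-\pi\|_{TV}\leq \varepsilon$ once $t\geq \varrho\bigl(\log(1/\pi(H))+\log(1/\varepsilon)\bigr)$ (absorbing the benign $\tfrac{1}{2}\sqrt{1-\pi(H)}\leq 1$ factor into $\log(1/\pi(H))$). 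This gives the stated bound on $\tau_H(\varepsilon)$.

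The main technical obstacle is the Poincaré step: one needs to verify that the switch-of-summation argument produces exactly the quantity $\varrho_{(H,H')}$ defined in \eqref{eq:cong}, including the $1/P(H,H')$ factor, so that the inequality $\mathrm{Var}_\pi(f)\leq \varrho\,\mathcal{E}(f,f)$ comes out with no spurious constants. The spectral-to-mixing-time step is routine given that the chain is lazy; if one were squeamish about the $\sqrt{(1-\pi(H))/\pi(H)}$ prefactor, a short calculation bounds it by $1/\sqrt{\pi(H)}$ and absorbs the half-log into the $\log(1/\pi(H))$ term at the cost of no change to the leading order.
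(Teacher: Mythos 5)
Your proposal is correct and is essentially the standard argument from the cited references \cite{MR1097463,MR1211324}: the paper itself states Theorem~\ref{thm:mix_cong} as an imported result and gives no proof, and your two steps (canonical-path Poincar\'e inequality yielding $1-\lambda_2\geq 1/\varrho$, then the reversible $L^2$ bound combined with laziness of the chain from Definition~\ref{d1}) are exactly how Sinclair derives it. The bookkeeping checks out: the swap of summation does reproduce $\varrho_{(H,H')}$ as defined in \eqref{eq:cong}, and the $\tfrac{1}{2}\sqrt{(1-\pi(H))/\pi(H)}$ prefactor is indeed absorbed since only $\tfrac{1}{2}\log(1/\pi(H))$ is needed.
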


At the end of this section we prove the following bound on the congestion of $\Gamma$.

\begin{lemma}\label{lem:cong_mm}
Let $G=(V,E)$ be a graph.
Let $\sigma=e_1,\dots,e_m$ be an ordering on $E$ with
linear-width~$\ell$. For every $(H,H')$ such that $P(H,H')>0$, and
for every $\lambda,\mu > 0$ we have
$$\varrho_{(H,H')} \leq
2|E|^2\bar{\lambda}^{\ell},$$
where $\bar{\lambda}=\max\{\lambda,1/\lambda\}$.
\end{lemma}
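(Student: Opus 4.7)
\textbf{Proof proposal for Lemma~\ref{lem:cong_mm}.} The plan is to bound each factor in the congestion formula~\eqref{eq:cong} separately, using Lemma~\ref{lem:dif_mm} as the main ingredient and a standard injective-encoding trick that replaces the pair $(I,F)$ by the symmetric difference $C=I\oplus F\oplus H$.

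First I would bound $\pi(H)P(H,H')$ from below. Since the single bond flip chain picks one of the $|E|$ edges uniformly at random and accepts the flip with probability $(1/2)\min\{1,\pi(H')/\pi(H)\}$, we have the detailed-balance identity
\begin{equation*}
\pi(H)P(H,H')\;=\;\frac{1}{2|E|}\min\{\pi(H),\pi(H')\}\;=\;\frac{\pi(H^*)}{2|E|},
\end{equation*}
where $H^*\in\{H,H'\}$ is whichever of the two endpoints has the smaller stationary probability (both $H$ and $H'$ lie on $\gamma_{I,F}$ whenever $(H,H')\in\gamma_{I,F}$, which is important for step~3). Together with the trivial bound $|\gamma_{I,F}|\leq |E|$, this already gives
\begin{equation*}
\varrho_{(H,H')}\;\leq\;2|E|^2\sum_{I,F:(H,H')\in\gamma_{I,F}} \frac{\pi(I)\pi(F)}{\pi(H^*)}.
\end{equation*}

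Next I would run the standard encoding argument with $H^*$ (rather than $H$) as the pivot. Set $C:=I\oplus F\oplus H^*$. Writing $Q$ for the prefix $\{e_1,\dots,e_{i_j}\}$ of $\sigma$ such that the transition $(H,H')$ flips $e_{i_j}$, one checks from the canonical path construction that $H^*=(F\cap Q)\cup(I\cap Q^c)$ and $C=(I\cap Q)\cup(F\cap Q^c)$; in particular, the map $(I,F)\mapsto C$ is injective for fixed $(H,H')$, because $I$ and $F$ can be reconstructed edge-by-edge from $(H^*,C)$ via this description. A per-edge case check also gives the identity
\begin{equation*}
|I|+|F|\;=\;|H^*|+|C|,
\end{equation*}
so the $\mu$ factors in $\pi(I)\pi(F)/(\pi(H^*)\pi(C))$ cancel exactly. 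Meanwhile Lemma~\ref{lem:dif_mm} (applied to the canonical path and the pivot $H^*$) yields
\begin{equation*}
|w(I)+w(F)-w(H^*)-w(C)|\;\leq\;\ell.
\end{equation*}
Combining these two observations gives
\begin{equation*}
\frac{\pi(I)\pi(F)}{\pi(H^*)\pi(C)}\;=\;\lambda^{w(I)+w(F)-w(H^*)-w(C)}\;\leq\;\bar\lambda^{\ell}.
\end{equation*}

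Finally I would assemble the bound. By injectivity of $(I,F)\mapsto C$,
\begin{equation*}
\sum_{I,F:(H,H')\in\gamma_{I,F}}\frac{\pi(I)\pi(F)}{\pi(H^*)}\;\leq\;\bar\lambda^{\ell}\sum_{C\in\Omega}\pi(C)\;=\;\bar\lambda^{\ell},
\end{equation*}
and plugging this into the earlier inequality gives the desired $\varrho_{(H,H')}\leq 2|E|^2\bar\lambda^{\ell}$.

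The main obstacle is step~3: one really does need to pivot on $H^*$ rather than $H$, because the naive bound $1/P(H,H')\leq 2|E|\max\{1,\pi(H)/\pi(H')\}$ would introduce an extra $\bar\mu$ factor and spoil the clean $\mu$-independent bound. Using $H^*$ absorbs the worst-case transition ratio into the detailed-balance identity, so the only arithmetic that has to be controlled on the $(I,F)$ side is the rank/matching discrepancy supplied by Lemma~\ref{lem:dif_mm}; the exact cancellation $|I|+|F|=|H^*|+|C|$ and the injectivity of the encoding are then purely bookkeeping.
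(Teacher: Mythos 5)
Your proposal is correct and takes essentially the same approach as the paper: the paper likewise pivots on $\hat{H}\in\{H,H'\}$ of smaller stationary weight (your $H^*$), uses the injective encoding $f(I,F)=I\oplus F\oplus\hat{H}$ together with $|I|+|F|=|\hat{H}|+|f(I,F)|$ and Lemma~\ref{lem:dif_mm}, and bounds $|\gamma_{I,F}|\leq|E|$ to obtain $2|E|^2\bar\lambda^{\ell}$.
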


We can now prove Theorem~\ref{thm:mix_mm}.

\begin{proof}[Proof of Theorem~\ref{thm:mix_mm}]
Since $G=(V,E)$ is a tree, by Lemma~\ref{lem:tree_compact},
we have $\ell\leq\lfloor\log_2 n \rfloor$,
by Lemma~\ref{lem:cong_mm}, we have
$$\varrho\leq 2|E|^2\bar{\lambda}^{\ell}\leq 2|E|^2 n^{|\log_2\lambda|}\leq 2 n^{2+|\log_2\lambda|}.$$
Theorem~\ref{thm:mix_mm} now follows from Theorem~\ref{thm:mix_cong}.
\end{proof}

Now we bound the congestion of our canonical paths.

\begin{proof}[Proof of Lemma~\ref{lem:cong_mm}]
We will bound $\varrho_{(H,H')}$ for every $(H,H')$ such that
$P(H,H')>0$. Let $\hat{H}=H$ if $\pi(H) \leq \pi(H')$ and
$\hat{H}=H'$ otherwise. Note that
\begin{equation}\label{eq:encode}
\frac{\pi(\hat{H})}{2|E|}= \pi(H)P(H,H') = \pi(H')P(H',H),
\end{equation}
since ${\cal M}$ is reversible. We define a mapping $f: cp(H,H')
\rightarrow \Omega$ such that $f(I,F)=I \oplus F \oplus \hat{H}$ for
every pair $(I,F) \in cp(H,H')$.

First, note that $f$ is an injection. Given $J \in \Omega$
we can determine the unique $I,F$ such that $f(I,F)=J$, by first
computing $J \oplus \hat{H}$, and the using the ordering
$\sigma$ on the edges of $G$ to recover $I$ and $F$.

Note that
\begin{equation}\label{eq:edge}
|I|+|F| = |\hat{H}|+|f(I,F)|,
\end{equation}
and
\begin{equation}\label{eq:mm}
|w(I)+w(F)-w(\hat{H})-w(f(I,F))| \leq \ell,
\end{equation}
where~\eqref{eq:mm} follows from Lemma~\ref{lem:dif_mm}.

Let $L=\sum_{J} \lambda^{w(J)}\mu^{|J|}$. We have the following
upper bound on $\varrho_{(H,H')}$. By~\eqref{eq:cong} and~\eqref{eq:encode}, we have
\begin{eqnarray}
\varrho_{(H,H')} & = & 2|E|\sum_{(I,F)\in cp(H,H')}\frac{\pi(I)\pi(F)}{\pi(\hat{H})}|\gamma_{I,F}|\nonumber\\
& = & 2|E|^2\sum_{(I,F)\in cp(H,H')} \frac{\lambda^{w(I)+w(F)-w(\hat{H})}\mu^{|I|+|F|-|\hat{H}|}}{L}\nonumber\\
& \leq & 2|E|^2\bar{\lambda}^{\ell}\sum_{(I,F)\in cp(H,H')} \frac{\lambda^{w(f(I,F))}\mu^{|f(I,F)|}}{L}\label{eq:xxxx}\\
& \leq & 2|E|^2\bar{\lambda}^{\ell}\label{eq:yyyy},
\end{eqnarray}
where \eqref{eq:xxxx} follows from~\eqref{eq:edge} and~\eqref{eq:mm}, and~\eqref{eq:yyyy} follows from the fact that $f$ is an injection from $cp(H,H')$ to $\Omega$.
\end{proof}

\section{The single bond flip chain for the random cluster
model}\label{S:rc}

The following sampling problem corresponds to the random cluster model defined in \eqref{eq:rc}.

\vskip 0.2cm
\noindent\textsc{Random Cluster Model} with $q,\mu \geq 0$ ({\sc RC}$(q,\mu)$)

Instance: a graph $G=(V,E)$.

Output: $S \subseteq E$ with probability of $S \propto q^{\kappa(S)}\mu^{|S|}$.
\vskip 0.2cm

The random cluster model was introduced by Fortuin and
Kasteleyn~\cite{MR0359655} as a model in statistical physics, and
was intensively studied since then (see, e.\,g., the monograph by
Grimmett~\cite{MR2243761}). By the results of~\cite{MR1049758,MR1179248,MR2201454},
the problem of exactly computing the partition function of {\sc
RC}$(q,\mu)$) is \#P-hard even in bipartite planar graphs for
all rational numbers $q,\mu > 0$ except when (i) $q=1$; and (ii) $q=2$.

In the context of approximate counting, little is known about \#{\sc
RC}$(q,\mu)$. The case $q=2$ and $\mu>0$ corresponds to the
ferromagnetic Ising model and has an FPRAS~\cite{MR1237164}.
For ``dense graphs''
an FPRAS exists for all $q,\mu \geq 0$,~\cite{MR1300965,MR1368847}.

Cooper and Frieze~\cite{MR1716764} and Cooper et
al.~\cite{MR1757967} investigated the Swendsen-Wang algorithm for
{\sc RC}$(q,\mu)$. They proved, using coupling and
conductance methods, that the Swendsen-Wang algorithm is rapidly
mixing on some classes of graphs. Gore and Jerrum~\cite{MR1733467}
pointed out that the single bond flip chain for random cluster model is
torpid (mixes in exponential time) on the complete graph $K_n$ for
$q \geq 3$ and a value of $\mu$ depending on the size of the graph
(that is, $\mu$ is not a fixed constant).

We will show that the single bond flip chain mixes for graphs with
bounded tree-width. Our motivation is ``showing that Markov chain mixes'',
not approximation of \#{\sc RC}$(q,\mu)$ (since the
problem of computing the partition function of {\sc RC}$(q,\mu)$
is polynomial-time solvable~\cite{MR1192381} for bounded
tree-width graphs).

The single bond flip for the random cluster model is analogous to the
chain from Definition~\ref{d1}. At time $t$, pick an edge
$e\in E$ at random and let $S=X_t\oplus \{e\}$. Set $X_{t+1}=S$
with probability
$(1/2)\min\{1,q^{\kappa(S)-\kappa(X_t)}\mu^{|S|-|X_t|}\}$
and $X_{t+1}=X_t$ with the remaining probability.
The chain ${\cal M}$ defined above is ergodic with
unique stationary distribution $\pi$ such that
$\pi(H) \propto q^{\kappa(H)}\mu^{|H|}$.
Let $P$ be the transition matrix of ${\cal M}$.

We will bound the mixing time of $\cal M$ for every
fixed constant $q,\mu > 0$. Given an ordering $\sigma$ on the
edges of $G$, we define the same canonical paths for each pair of
subgraphs $I,F$, and the same mapping $f$ for every $(H,H')$ such
that $P(H,H')>0$, as in Section~\ref{sec:mix_mm}. We present a
result analogous to Lemma~\ref{lem:dif_mm}.

\begin{lemma}\label{lem:dif_rc}
Let $G=(V,E)$ be a graph.
Let $\sigma=e_1,\dots,e_m$ be an ordering on $E$ with linear-width $\ell$. Let $I,F$ be subgraphs
of $E$ and let $H$ be on the canonical path~\eqref{canp} (that is, $H=H_j$ for some $j\in\{0,\dots,k\}$).
Then
$$|\kappa(I)+\kappa(F)-\kappa(H)-\kappa(C)| \leq \ell,$$
where $C=I\oplus F\oplus H$.
\end{lemma}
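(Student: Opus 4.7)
I will mirror the proof of Lemma~\ref{lem:dif_mm}, replacing the matching rank $w$ by the graphic rank $r(S):=|V|-\kappa(S)$; equivalently, I must show $|r(H)+r(C)-r(I)-r(F)|\le\ell$. With $Q=\{e_1,\dots,e_{i_j}\}$ and $D$ the set of dangerous vertices w.r.t.\ $e_{i_j+1}$ (so $|D|\le\ell$), every non-dangerous vertex has all of its $G$-edges inside $Q$ or all inside $Q^c$, so (exactly as in the matching proof) $D$ is a vertex separator: no edge of $G$ has one endpoint in $V_Q:=\{v\notin D:\text{all $G$-edges at $v$ lie in }Q\}$ and the other in $V_{Q^c}$.

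The key step is to encode $\kappa(S)$ through a small auxiliary bipartite multigraph. For $S\subseteq E$ let $B_S$ be the bipartite multigraph whose vertices are the connected components of $(V_Q\cup D,\,S\cap Q)$ on one side and of $(V_{Q^c}\cup D,\,S\cap Q^c)$ on the other, with one edge per $v\in D$ joining the two components of $v$. A short counting argument (components of $(V,S)$ are obtained from those two sides by identifying $D$-vertices, i.e., by gluing along the edges of $B_S$) gives $r(S)=r(S\cap Q)+r(S\cap Q^c)+r(B_S)-|D|$. Applying this to $I,F,H,C$ and using $H\cap Q=F\cap Q$, $H\cap Q^c=I\cap Q^c$, $C\cap Q=I\cap Q$, $C\cap Q^c=F\cap Q^c$, the $r(\cdot\cap Q)$ and $r(\cdot\cap Q^c)$ contributions telescope, yielding $\kappa(I)+\kappa(F)-\kappa(H)-\kappa(C)=r(B_H)+r(B_C)-r(B_I)-r(B_F)$.

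Each $B_S$ has exactly $|D|$ edges, so $r(B_S)=|D|-\dim\mathrm{cyc}(B_S)$ where $\mathrm{cyc}(B_S)\subseteq\mathbb{F}_2^D$ is the subspace cut out by two linear parity conditions, one for each side of $B_S$. The same swap pattern forces the identity $\mathrm{cyc}(B_I)\cap\mathrm{cyc}(B_F)=\mathrm{cyc}(B_H)\cap\mathrm{cyc}(B_C)$, since both equal the set of $X\subseteq D$ satisfying all four parity conditions simultaneously. Applying $\dim U+\dim V=\dim(U+V)+\dim(U\cap V)$ to each side and subtracting, the intersection terms cancel and I obtain $\kappa(I)+\kappa(F)-\kappa(H)-\kappa(C)=\dim(\mathrm{cyc}(B_I)+\mathrm{cyc}(B_F))-\dim(\mathrm{cyc}(B_H)+\mathrm{cyc}(B_C))$, a difference of dimensions of subspaces of $\mathbb{F}_2^D$, so its absolute value is at most $|D|\le\ell$. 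The main obstacle is isolating the right invariant $B_S$: a direct swap of spanning forests $T_F\cap Q$ and $T_I\cap Q^c$ creates extra cycles that resist a clean $|D|$-count, whereas $B_S$ captures exactly that cycle content and turns the whole bound into a one-line linear-algebra estimate on $\mathbb{F}_2^D$.
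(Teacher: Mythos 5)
Your proof is correct, but it takes a genuinely different route from the paper for the key estimate. Both arguments start from the same setup: the separator $D$ of at most $\ell$ dangerous vertices, the partition of $V\setminus D$ according to whether a vertex's edges all lie in $Q$ or all in $Q^c$, and the observation that $H$ and $C$ are obtained from $I$ and $F$ by swapping their $Q$- and $Q^c$-parts. From there the paper argues combinatorially: it shows $\kappa(F_1)+\kappa(F_2)\le\kappa(F)+|D|$ by splitting the vertices of $D$, assembles this into the one-sided bound $\kappa(H)+\kappa(C)\le\kappa(I)+\kappa(F)+|D|$, and then obtains the reverse inequality by the symmetry that the canonical path from $H$ to $C$ passes through $I$ (exactly as in Lemma~\ref{lem:dif_mm}). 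You instead prove an exact rank decomposition $r(S)=r(S\cap Q)+r(S\cap Q^c)+r(B_S)-|D|$ via the auxiliary bipartite multigraph $B_S$ on the components of the two sides (whose components are in bijection with those of $(V,S)$), telescope away the $Q$- and $Q^c$-contributions, and reduce the discrepancy to $\dim(\mathrm{cyc}(B_I)+\mathrm{cyc}(B_F))-\dim(\mathrm{cyc}(B_H)+\mathrm{cyc}(B_C))$ inside $\F_2^D$, using that the two intersections coincide because each cycle space factors into a $Q$-side and a $Q^c$-side constraint. This buys a two-sided bound in one stroke, with no appeal to the path-reversal symmetry, and it identifies the exact value of $\kappa(I)+\kappa(F)-\kappa(H)-\kappa(C)$ rather than merely bounding it; the cost is the extra machinery of the multigraph $B_S$ and its cycle space, where the paper's splitting-and-gluing count is more elementary. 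Minor points: your phrase ``two linear parity conditions, one for each side'' should read ``one parity condition per component on each side'' (so two families of conditions, giving $\mathrm{cyc}(B_S)=A_{S\cap Q}\cap B'_{S\cap Q^c}$), and one should note that $B_S$ may have multiple edges and isolated vertices, but neither affects the dimension count $\dim\mathrm{cyc}(B_S)=|D|-r(B_S)$.
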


\begin{proof}
As in the proof of Lemma~\ref{lem:dif_mm} let
$Q=\{e_1,\dots,e_{i_j}\}$. Let $D$ be the set of dangerous
vertices w.r.t. $e_{i_j+1}$. We can split $V\setminus D$ into
$V_1\cup V_2$, where $V_1$ contains vertices whose adjacent edges
are from $Q$ and $V_2$ contains vertices whose adjacent edges are
from $Q^c$.

Let $I_i$ be the subgraph of $I$ induced by $V_i\cup D$ and
$F_i$ be the subgraph of $F$ induced by $V_i\cup D$ (for $i=1,2$).
Note that $H$ can be viewed as a gluing of $I_2$ and $F_1$
on $D$; we will denote this $H=I_2\cup F_1$. Similarly
$C=I_1\cup F_2$.

First note that
$$\kappa(F_1) + \kappa(F_2)\leq \kappa(F)+|D|,$$
since $F_1,F_2$ can by obtained from $F$ by
splitting each vertex in $D$ and the splitting
of one vertex increases the number of components
by at most one.

Let $I'$ consist of the connected components of $I$
that do not contain a vertex from $D$. Let $I'_i$ be
the subgraph of $I$ induced by $V_i$ (for $i=1,2$).
Note that
\begin{equation}\label{qqqqq}
\begin{split}
\kappa(F_1\cup I'_2) + \kappa(F_2\cup I'_1)=
\kappa(F_1) + \kappa(I'_2) + \kappa(F_2) + \kappa (I'_1)=\\
\kappa(F_1) + \kappa(F_2) + \kappa (I')\leq \kappa(F)+|D|+\kappa(I')\leq
\kappa(F)+\kappa(I)+|D|,
\end{split}
\end{equation}
where the first and second inequalities follow from the fact that the following
pairs of graphs are vertex disjoint: $F_1$ and $I'_2$, $F_2$ and $I'_1$, and $I'_1$ and $I'_2$.

Note that $I_2\oplus I'_2$ consists of connected components that
contain a vertex from $D$. Thus $\kappa(F_1\cup I_2) =
\kappa(F_1\cup I_2')$ (since adding a component from $I_2\oplus
I'_2$ to $F_1\cup I_2'$ only attaches that component to an
existing component). Similarly $\kappa(F_2\cup I_1) =
\kappa(F_2\cup I_1')$. Combining with \eqref{qqqqq} we obtain
\begin{equation}\label{yy1-2}
\kappa(H)+\kappa(C)\leq \kappa(I)+\kappa(F)+|D|\leq \kappa(I)+\kappa(F)+\ell.
\end{equation}
Note that a canonical path from $I':=H$ to $F':=C$ passes through $H':=I$ (with
$C':=I'\oplus F'\oplus H'=F$). Thus
\begin{equation}\label{yy2-2}
\kappa(I)+\kappa(F) = \kappa(H')+\kappa(C')\leq \kappa(I')+\kappa(F')+\ell = \kappa(H) + \kappa(C) + \ell.
\end{equation}
Combining \eqref{yy1-2} and \eqref{yy2-2} we get the lemma.
\end{proof}

We have the following bound on the congestion.

\begin{lemma}\label{lem:cong_rc}
Let $G=(V,E)$ be a graph. Let $\sigma=e_1,\dots,e_m$ be an ordering on $E$ with
linear-width $\ell$. For every $(H,H')$ such that $P(H,H')>0$, and
for every $q,\mu >0$ we have
$$\varrho_{(H,H')} \leq
2|E|^2\bar{q}^{\ell},$$
where $\bar{q}=\max\{q,1/q\}$.
\end{lemma}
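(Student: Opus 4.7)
The plan is to mirror the proof of Lemma~\ref{lem:cong_mm} essentially verbatim, replacing the matching-size function $w$ with the component-count function $\kappa$ and the parameter $\lambda$ with $q$, and invoking Lemma~\ref{lem:dif_rc} in place of Lemma~\ref{lem:dif_mm}. The key observations are that (i) the canonical paths $\gamma_{I,F}$ and the injection $f(I,F)=I\oplus F\oplus\hat H$ are defined in exactly the same way as in Section~\ref{sec:mix_mm}, and (ii) reversibility yields $\pi(\hat H)/(2|E|)=\pi(H)P(H,H')=\pi(H')P(H',H)$ regardless of whether the weight is $\lambda^{w(\cdot)}\mu^{|\cdot|}$ or $q^{\kappa(\cdot)}\mu^{|\cdot|}$, where $\hat H$ denotes whichever of $H,H'$ has the smaller stationary weight.

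First I would fix $(H,H')$ with $P(H,H')>0$, set $\hat H\in\{H,H'\}$ so that $\pi(\hat H)=\min\{\pi(H),\pi(H')\}$, and recall from Section~\ref{sec:mix_mm} that $f: cp(H,H')\to\Omega$, $f(I,F)=I\oplus F\oplus\hat H$, is an injection (the argument—recover $J\oplus\hat H$, then split it using $\sigma$ to reconstruct $I,F$—is identical and does not depend on the weight function). I would also note the symmetric-difference identity
\begin{equation*}
|I|+|F|=|\hat H|+|f(I,F)|,
\end{equation*}
which follows purely combinatorially.

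Next I would invoke Lemma~\ref{lem:dif_rc} to obtain
\begin{equation*}
\bigl|\kappa(I)+\kappa(F)-\kappa(\hat H)-\kappa(f(I,F))\bigr|\le\ell,
\end{equation*}
and then plug everything into the congestion formula~\eqref{eq:cong}. Writing $L=\sum_{J\in\Omega} q^{\kappa(J)}\mu^{|J|}$ for the partition function, the estimate runs
\begin{align*}
\varrho_{(H,H')}
&=2|E|\sum_{(I,F)\in cp(H,H')}\frac{\pi(I)\pi(F)}{\pi(\hat H)}|\gamma_{I,F}| \\
&=2|E|^2\sum_{(I,F)\in cp(H,H')}\frac{q^{\kappa(I)+\kappa(F)-\kappa(\hat H)}\mu^{|I|+|F|-|\hat H|}}{L} \\
&\le 2|E|^2\bar q^{\,\ell}\sum_{(I,F)\in cp(H,H')}\frac{q^{\kappa(f(I,F))}\mu^{|f(I,F)|}}{L} \\
&\le 2|E|^2\bar q^{\,\ell},
\end{align*}
where the penultimate line uses the two identities above (noting $|\gamma_{I,F}|\le|E|$), and the final line uses that $f$ is injective into $\Omega$, so the remaining sum is at most $\sum_{J\in\Omega}q^{\kappa(J)}\mu^{|J|}/L=1$.

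There is no serious obstacle: Lemma~\ref{lem:dif_rc} is the analog of Lemma~\ref{lem:dif_mm} and does all the genuinely new work (controlling how much $\kappa$ can change when one ``glues'' subgraphs along the dangerous cut). Once it is in hand, the congestion bound is a direct template replacement in the argument of Lemma~\ref{lem:cong_mm}. The only minor care required is to confirm that the injectivity of $f$ and the length bound $|\gamma_{I,F}|\le|E|$ are properties of the edge-based canonical path construction and are therefore independent of whether we are weighting by matchings or by components.
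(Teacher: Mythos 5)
Your proposal is correct and is exactly the paper's argument: the authors' entire proof of Lemma~\ref{lem:cong_rc} is the single line ``Same as the proof of Lemma~\ref{lem:cong_mm},'' i.e.\ the template substitution $w\mapsto\kappa$, $\lambda\mapsto q$, Lemma~\ref{lem:dif_mm}$\mapsto$Lemma~\ref{lem:dif_rc} that you carry out explicitly. Your write-up correctly identifies that the injectivity of $f$, the identity $|I|+|F|=|\hat H|+|f(I,F)|$, and the reversibility computation are all weight-independent, so nothing further is needed.
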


\begin{proof}
Same as the proof of Lemma~\ref{lem:cong_mm}.
\end{proof}

By Theorem~\ref{thm:mix_cong} and  Lemma~\ref{lem:cong_rc},
we have the following result.

\begin{theorem}
Given an graph $G=(V,E)$ with $m=|E|$, $n=|V|$ and linear-width $\ell$, the mixing time $\tau(\varepsilon)$ of
$\cal M$ is
$$\tau(\varepsilon)=O\left(m^2\bar{q}^{\ell}(m+n|\log q|+m|\log\mu|+\log(1/\varepsilon))\right),$$
where $\bar{q}=\max\{q,1/q\}$.
\end{theorem}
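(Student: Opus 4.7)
The plan is to combine Lemma~\ref{lem:cong_rc} with Theorem~\ref{thm:mix_cong} in exactly the same way as we did for Theorem~\ref{thm:mix_mm}, the only new ingredient being a bound on $\log(1/\pi(H))$ for an arbitrary starting state $H \in \Omega$.

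First, by Lemma~\ref{lem:cong_rc} the congestion satisfies $\varrho \le 2m^2 \bar q^{\ell}$, so by Theorem~\ref{thm:mix_cong} it suffices to show that for every $H \subseteq E$,
\begin{equation*}
\log(1/\pi(H)) = O\bigl(m + n|\log q| + m|\log\mu|\bigr).
\end{equation*}

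Since $\pi(H) = q^{\kappa(H)}\mu^{|H|}/Z(G;q,\mu)$, I would bound the numerator and denominator separately. For the numerator, use $0 \le \kappa(H) \le n$ and $0 \le |H| \le m$ to get $|\log(q^{\kappa(H)}\mu^{|H|})| \le n|\log q| + m|\log\mu|$. For the denominator, split the partition function into the at most $2^m$ subsets $S \subseteq E$, each of which contributes at most $q^n\max\{\mu,1\}^m$ and at least something like $\min\{q,1\}^n \min\{\mu,1\}^m$ (concretely, using only the $S=\emptyset$ term one gets $Z \ge q^n$ if $q \ge 1$, and the analogous lower bound $Z \ge q^n$ is still fine when $q \le 1$ since $\kappa(\emptyset)=n$). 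Putting the upper and lower bounds together yields $|\log Z| = O(m + n|\log q| + m|\log\mu|)$, and hence the desired bound on $\log(1/\pi(H))$.

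Finally, multiplying the congestion bound by this bound on $\log(1/\pi(H)) + \log(1/\varepsilon)$ gives
\begin{equation*}
\tau(\varepsilon) \le 2m^2 \bar q^{\ell}\bigl(\log(1/\pi(H))+\log(1/\varepsilon)\bigr) = O\bigl(m^2\bar q^{\ell}(m + n|\log q| + m|\log\mu| + \log(1/\varepsilon))\bigr),
\end{equation*}
which is the claim. There is no real obstacle here: Lemma~\ref{lem:cong_rc} already does all the combinatorial work of controlling the congestion via the dangerous-vertex analysis, and the only remaining task is the routine estimate of the worst-case stationary weight, which is the mildest step of the argument.
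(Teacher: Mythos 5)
Your proposal is correct and follows exactly the paper's (essentially one-line) argument: combine the congestion bound of Lemma~\ref{lem:cong_rc} with Theorem~\ref{thm:mix_cong} and supply the routine worst-case bound $\log(1/\pi(H))=O(m+n|\log q|+m|\log\mu|)$, which accounts for the $m+n|\log q|+m|\log\mu|$ term in the statement. The only blemish is that your per-term upper bound on the partition function should read $\max\{q,1\}^{n}\max\{\mu,1\}^{m}$ rather than $q^{n}\max\{\mu,1\}^{m}$ when $q<1$, but this does not affect the final $O(\cdot)$ estimate.
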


By~Lemma~\ref{lem:general_compact} we conclude the following.

\begin{corollary}
Let $k$ be an integer and let $q,\mu>0$. Let $G=(V,E)$ be a graph with tree-width bounded by~$k$, $n:=|V|$, and
$m:=|V|$. The mixing time $\tau(\varepsilon)$ of the single bond flip chain ${\cal M}$ for {\sc RC}($q,\mu$)
is bounded as follows
$$\tau(\varepsilon)=
O\left(m^2\bar{q}^{(k+1)(1+\log_2 n)}(m+n|\log q|+m |\log\mu|+\log(1/\varepsilon))\right),
$$
where $\bar{q}=\max\{q,1/q\}$.
\end{corollary}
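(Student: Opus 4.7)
The plan is that the corollary should follow by mechanical substitution: combine the preceding theorem, which bounds $\tau(\varepsilon)$ in terms of the linear-width $\ell$ of the ordering used for the canonical paths, with Lemma~\ref{lem:general_compact}, which converts a tree-width bound into a linear-width bound. Since both ingredients are already established, no new combinatorial or probabilistic argument is needed; the entire proof amounts to verifying that the parameters match up cleanly.

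Concretely, I would proceed as follows. First, invoke Lemma~\ref{lem:general_compact} on $G$ to obtain an edge ordering $\sigma$ of $E$ whose linear-width satisfies
\begin{equation*}
\ell \;\leq\; (\mathrm{tw}(G)+1)(\lfloor \log_2 n \rfloor + 1) \;\leq\; (k+1)(1+\log_2 n),
\end{equation*}
using the hypothesis $\mathrm{tw}(G)\leq k$. Second, build the canonical paths from Section~\ref{sec:mix_mm} using this ordering $\sigma$; Lemma~\ref{lem:dif_rc} and Lemma~\ref{lem:cong_rc} then apply verbatim with this value of $\ell$, yielding a congestion bound $\varrho \leq 2|E|^2 \bar{q}^{\ell}$. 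Third, plug $\ell \leq (k+1)(1+\log_2 n)$ into the statement of the preceding theorem (the general mixing-time bound $\tau(\varepsilon)=O(m^2 \bar{q}^{\ell}(m+n|\log q|+m|\log\mu|+\log(1/\varepsilon)))$), and use monotonicity of $\bar{q}^{(\cdot)}$ for $\bar{q}\geq 1$ to absorb the floor into the non-floored exponent.

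I do not expect any genuine obstacle here, since every nontrivial step is already isolated in an earlier lemma. The only minor care point is keeping track of whether the exponent bound should involve $\lfloor \log_2 n\rfloor +1$ or the simpler $1+\log_2 n$; since $\bar{q}\geq 1$, replacing the floor by the real logarithm only weakens the bound, which is exactly what the corollary claims. With this observation the three-line substitution above completes the argument.
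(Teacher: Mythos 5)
Your proposal is correct and matches the paper's argument exactly: the paper derives this corollary by substituting the linear-width bound of Lemma~\ref{lem:general_compact} into the preceding theorem, which is precisely your three-step substitution. The care you take with the floor in the exponent is harmless and valid since $\bar{q}\geq 1$.
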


We note that we do not need to find the linear-width (or tree-width) of a graph
efficiently, the existence result is enough for our proof of the mixing time.

\section{Conclusions}

We conclude with an observation that a generalization of {\sc RWM}$(\lambda,\mu)$
does not have an FPRAS (unless RP=NP) and a few questions.

Let $A$ be an $m \times n$ matrix whose entries are zeros, ones,
and indeterminates, where each indeterminate occurs once. A {\em
completion} of $A$ is a substitution of $0,1$ to all the
indeterminates in $A$. We denote ${\cal C}_A$ to be the set of all
completions of~$A$. Let $\rk(B)$ be the rank of $B$ over
$\F_2$. Can we sample $B$ from ${\cal C}_A$ with the probability
of $B$ proportional to $\lambda^{\rk(B)}$? Note that
this problem is a generalization of the {\sc RWM}$(\lambda,1)$
problem. It turns out that finding the minimum rank completion of
a matrix is NP-hard (Proposition 2.1,~\cite{MR1417351}) and hence
a sampler is unlikely (unless NP=RP), since for $\lambda=2^{-n^2}$
a random completion will be the minimum rank completion (with
constant probability). The sampling problem could be easy for
sufficiently large $\lambda$ (the problem of finding maximum rank
completion is in P, see, e.\,g., Section 4.1 of
~\cite{MR2298298}).

\begin{question}
What other interesting properties are encoded by the polynomial?
\end{question}

\begin{question}
Can one sample maximum rank completions of a matrix?
\end{question}

\bibliographystyle{plain}
\bibliography{bibfile}

\begin{thebibliography}{10}

\bibitem{MR1368847}
Noga Alon, Alan Frieze, and Dominic Welsh.
\newblock Polynomial time randomized approximation schemes for
  {T}utte-{G}r\"othendieck invariants: the dense case.
\newblock {\em Random Structures Algorithms}, 6(4):459--478, 1995.

\bibitem{MR1300965}
James~D. Annan.
\newblock A randomised approximation algorithm for counting the number of
  forests in dense graphs.
\newblock {\em Combin. Probab. Comput.}, 3(3):273--283, 1994.

\bibitem{MR1406794}
Eric Bach and Jeffrey Shallit.
\newblock {\em Algorithmic number theory. {V}ol. 1}.
\newblock Foundations of Computing Series. MIT Press, Cambridge, MA, 1996.
\newblock Efficient algorithms.

\bibitem{MR1257079}
Christian Ballot.
\newblock Density of prime divisors of linear recurrences.
\newblock {\em Mem. Amer. Math. Soc.}, 115(551):viii+102, 1995.

\bibitem{MR998375}
Rodney~J. Baxter.
\newblock {\em Exactly solved models in statistical mechanics}.
\newblock Academic Press Inc. [Harcourt Brace Jovanovich Publishers], London,
  1989.
\newblock Reprint of the 1982 original.

\bibitem{MR1757967}
Colin Cooper, Martin~E. Dyer, Alan~M. Frieze, and Rachel Rue.
\newblock Mixing properties of the {S}wendsen-{W}ang process on the complete
  graph and narrow grids.
\newblock {\em J. Math. Phys.}, 41(3):1499--1527, 2000.
\newblock Probabilistic techniques in equilibrium and nonequilibrium
  statistical physics.

\bibitem{MR1716764}
Colin Cooper and Alan~M. Frieze.
\newblock Mixing properties of the {S}wendsen-{W}ang process on classes of
  graphs.
\newblock {\em Random Structures Algorithms}, 15(3-4):242--261, 1999.
\newblock Statistical physics methods in discrete probability, combinatorics,
  and theoretical computer science (Princeton, NJ, 1997).

\bibitem{MR1097463}
Persi Diaconis and Daniel Stroock.
\newblock Geometric bounds for eigenvalues of {M}arkov chains.
\newblock {\em Ann. Appl. Probab.}, 1(1):36--61, 1991.

\bibitem{MR1936657}
Martin Dyer, Alan Frieze, and Mark Jerrum.
\newblock On counting independent sets in sparse graphs.
\newblock {\em SIAM J. Comput.}, 31(5):1527--1541 (electronic), 2002.

\bibitem{MR2044886}
Martin Dyer, Leslie~Ann Goldberg, Catherine Greenhill, and Mark Jerrum.
\newblock The relative complexity of approximate counting problems.
\newblock {\em Algorithmica}, 38(3):471--500, 2004.
\newblock Approximation algorithms.

\bibitem{Dyer2009}
Martin Dyer, Leslie~Ann Goldberg, and Mark Jerrum.
\newblock An approximation trichotomy for boolean \#{CSP}.
\newblock {\em Journal of Computer and System Sciences}, In Press, Corrected
  Proof, 2009.

\bibitem{MR1747721}
Martin Dyer and Catherine Greenhill.
\newblock On {M}arkov chains for independent sets.
\newblock {\em J. Algorithms}, 35(1):17--49, 2000.

\bibitem{citeulike:3973700}
Joanna Ellis-Monaghan and Criel Merino.
\newblock Graph polynomials and their applications {I}: the {T}utte polynomial.
\newblock {\em ar{X}iv}, 0803.3079, Jun 2008.

\bibitem{citeulike:3973702}
Joanna Ellis-Monaghan and Criel Merino.
\newblock Graph polynomials and their applications {II}: interrelations and
  interpretations.
\newblock {\em ar{X}iv}, 0806.4699, Jun 2008.

\bibitem{MR0359655}
Cees~M. Fortuin and Piet~W. Kasteleyn.
\newblock On the random-cluster model. {I}. {I}ntroduction and relation to
  other models.
\newblock {\em Physica}, 57:536--564, 1972.

\bibitem{MR2305543}
Gudmund~Skovbjerg Frandsen and Peter~Frands Frandsen.
\newblock Dynamic matrix rank.
\newblock In {\em Automata, languages and programming. {P}art {I}}, volume 4051
  of {\em Lecture Notes in Comput. Sci.}, pages 395--406. Springer, Berlin,
  2006.

\bibitem{MR2286511}
Leslie~Ann Goldberg and Mark Jerrum.
\newblock The complexity of ferromagnetic {I}sing with local fields.
\newblock {\em Combin. Probab. Comput.}, 16(1):43--61, 2007.

\bibitem{GJ10}
Leslie~Ann Goldberg and Mark Jerrum.
\newblock Approximating the partition function of the ferromagnetic {P}otts
  model.
\newblock {\em ar{X}iv}, 1002.0986, Feb 2010.

\bibitem{MR1995687}
Leslie~Ann Goldberg, Mark Jerrum, and Mike Paterson.
\newblock The computational complexity of two-state spin systems.
\newblock {\em Random Structures Algorithms}, 23(2):133--154, 2003.

\bibitem{MR1733467}
Vivek~K. Gore and Mark~R. Jerrum.
\newblock The {S}wendsen-{W}ang process does not always mix rapidly.
\newblock {\em J. Statist. Phys.}, 97(1-2):67--86, 1999.

\bibitem{MR2243761}
Geoffrey Grimmett.
\newblock {\em The random-cluster model}, volume 333 of {\em Grundlehren der
  Mathematischen Wissenschaften [Fundamental Principles of Mathematical
  Sciences]}.
\newblock Springer-Verlag, Berlin, 2006.

\bibitem{MR2298298}
Nicholas J.~A. Harvey, David~R. Karger, and Kazuo Murota.
\newblock Deterministic network coding by matrix completion.
\newblock In {\em Proceedings of the {S}ixteenth {A}nnual {ACM}-{SIAM}
  {S}ymposium on {D}iscrete {A}lgorithms}, pages 489--498 (electronic), New
  York, 2005. ACM.

\bibitem{MR1049758}
Fran\c{c}ois Jaeger, Dirk Vertigan, and Dominic J.~A. Welsh.
\newblock On the computational complexity of the {J}ones and {T}utte
  polynomials.
\newblock {\em Math. Proc. Cambridge Philos. Soc.}, 108(1):35--53, 1990.

\bibitem{MR1960003}
Mark Jerrum.
\newblock {\em Counting, sampling and integrating: algorithms and complexity}.
\newblock Lectures in Mathematics ETH Z\"urich. Birkh\"auser Verlag, Basel,
  2003.

\bibitem{MR1025467}
Mark Jerrum and Alistair Sinclair.
\newblock Approximating the permanent.
\newblock {\em SIAM J. Comput.}, 18(6):1149--1178, 1989.

\bibitem{MR1237164}
Mark Jerrum and Alistair Sinclair.
\newblock Polynomial-time approximation algorithms for the {I}sing model.
\newblock {\em SIAM J. Comput.}, 22(5):1087--1116, 1993.

\bibitem{1051910}
Jon Kleinberg and Eva Tardos.
\newblock {\em Algorithm Design}.
\newblock Addison-Wesley Longman Publishing Co., Inc., Boston, MA, USA, 2005.

\bibitem{MR1716763}
Michael Luby and Eric Vigoda.
\newblock Fast convergence of the {G}lauber dynamics for sampling independent
  sets.
\newblock {\em Random Structures Algorithms}, 15(3-4):229--241, 1999.
\newblock Statistical physics methods in discrete probability, combinatorics,
  and theoretical computer science (Princeton, NJ, 1997).

\bibitem{MR1802718}
Pieter Moree and Peter Stevenhagen.
\newblock A two-variable {A}rtin conjecture.
\newblock {\em J. Number Theory}, 85(2):291--304, 2000.

\bibitem{MR1838084}
Pieter Moree and Peter Stevenhagen.
\newblock Prime divisors of the {L}agarias sequence.
\newblock {\em J. Th\'eor. Nombres Bordeaux}, 13(1):241--251, 2001.
\newblock 21st Journ{\'e}es Arithm{\'e}tiques (Rome, 2001).

\bibitem{MR1192381}
James~G. Oxley and Dominic J.~A. Welsh.
\newblock Tutte polynomials computable in polynomial time.
\newblock {\em Discrete Math.}, 109(1-3):185--192, 1992.
\newblock Algebraic graph theory (Leibnitz, 1989).

\bibitem{MR1417351}
Ren{\'e} Peeters.
\newblock Orthogonal representations over finite fields and the chromatic
  number of graphs.
\newblock {\em Combinatorica}, 16(3):417--431, 1996.

\bibitem{MR721012}
John~S. Provan and Michael~O. Ball.
\newblock The complexity of counting cuts and of computing the probability that
  a graph is connected.
\newblock {\em SIAM J. Comput.}, 12(4):777--788, 1983.

\bibitem{MR1211324}
Alistair Sinclair.
\newblock Improved bounds for mixing rates of {M}arkov chains and
  multicommodity flow.
\newblock {\em Combin. Probab. Comput.}, 1(4):351--370, 1992.

\bibitem{Stevenhagen}
Peter Stevenhagen.
\newblock Prime densities for second order torsion sequences.
\newblock 2000.
\newblock Preprint.

\bibitem{MR1780474}
Dimitrios~M. Thilikos.
\newblock Algorithms and obstructions for linear-width and related search
  parameters.
\newblock {\em Discrete Appl. Math.}, 105(1-3):239--271, 2000.

\bibitem{Thomas96tree-decompositionsof}
Robin Thomas.
\newblock Tree-decompositions of graphs (lecture notes), 1996.

\bibitem{MR0018406}
William~T. Tutte.
\newblock A ring in graph theory.
\newblock {\em Proc. Cambridge Philos. Soc.}, 43:26--40, 1947.

\bibitem{MR0061366}
William~T. Tutte.
\newblock A contribution to the theory of chromatic polynomials.
\newblock {\em Canadian J. Math.}, 6:80--91, 1954.

\bibitem{MR1861282}
Salil~P. Vadhan.
\newblock The complexity of counting in sparse, regular, and planar graphs.
\newblock {\em SIAM J. Comput.}, 31(2):398--427 (electronic), 2001.

\bibitem{MR2201454}
Dirk Vertigan.
\newblock The computational complexity of {T}utte invariants for planar graphs.
\newblock {\em SIAM J. Comput.}, 35(3):690--712 (electronic), 2005.

\bibitem{MR1179248}
Dirk Vertigan and Dominic J.~A. Welsh.
\newblock The computational complexity of the {T}utte plane: the bipartite
  case.
\newblock {\em Combin. Probab. Comput.}, 1(2):181--187, 1992.

\bibitem{MR2277139}
Dror Weitz.
\newblock Counting independent sets up to the tree threshold.
\newblock In {\em S{TOC}'06: {P}roceedings of the 38th {A}nnual {ACM}
  {S}ymposium on {T}heory of {C}omputing}, pages 140--149. ACM, New York, 2006.

\bibitem{MR1245272}
Dominic J.~A. Welsh.
\newblock {\em Complexity: knots, colourings and counting}, volume 186 of {\em
  London Mathematical Society Lecture Note Series}.
\newblock Cambridge University Press, Cambridge, 1993.

\bibitem{MR2354227}
Mingji Xia, Peng Zhang, and Wenbo Zhao.
\newblock Computational complexity of counting problems on 3-regular planar
  graphs.
\newblock {\em Theoret. Comput. Sci.}, 384(1):111--125, 2007.

\end{thebibliography}

\end{document}